\tikzset{main node/.style={rectangle,fill=blue!20,draw=none,minimum size=0.3cm,inner sep=3pt, rounded corners = 3pt,font=\sffamily},
reduced node/.style={main node, fill=brown!20},
rededge/.style={brown}}
\tikzset{transparent node/.style={circle,fill=white!20,draw,minimum size=0cm,inner sep=0pt},}
\tikzset{%
	>={Latex[width=2mm,length=2mm]},
	base/.style = {rectangle, rounded corners, draw=black,
		minimum width=4cm, minimum height=1cm,
		text centered, font=\sffamily, text width=4cm },
	ontology/.style={rectangle, rounded corners, draw=black,
		minimum width=1cm, minimum height=.5cm,
		text centered, font=\sffamily, text width=1cm },
	activityStarts/.style = {base, fill=blue!30},
	entity/.style = {base},
	workflow/.style={rectangle, draw=black,
		minimum width=2cm, minimum height=0.4cm,
		text centered, font=\sffamily, text width=4cm},
	data/.style={rectangle, rounded corners, draw=black,
		minimum width=2cm, minimum height=0.4cm,
		text centered, font=\sffamily, text width=5cm},
	startstop/.style = {base, fill=red!30},
	activityRuns/.style = {base, fill=green!30},
	process/.style = {base, minimum width=2.3cm, fill=orange!15,
		font=\ttfamily},
}
\tikzset{main node/.style={rectangle,fill=blue!20,draw=none,minimum size=0.3cm,inner sep=3pt, rounded corners = 3pt,font=\sffamily},
reduced node/.style={main node, fill=brown!20},
rededge/.style={brown}}
\tikzset{transparent node/.style={circle,fill=white!20,draw,minimum size=0cm,inner sep=0pt},}
\tikzset{%
	>={Latex[width=2mm,length=2mm]},
	base/.style = {rectangle, rounded corners, draw=black,
		minimum width=4cm, minimum height=1cm,
		text centered, font=\sffamily, text width=4cm },
	ontology/.style={rectangle, rounded corners, draw=black,
		minimum width=1cm, minimum height=.5cm,
		text centered, font=\sffamily, text width=1cm },
	activityStarts/.style = {base, fill=blue!30},
	entity/.style = {base},
	workflow/.style={rectangle, draw=black,
		minimum width=2cm, minimum height=0.4cm,
		text centered, font=\sffamily, text width=4cm},
	data/.style={rectangle, rounded corners, draw=black,
		minimum width=2cm, minimum height=0.4cm,
		text centered, font=\sffamily, text width=5cm},
	startstop/.style = {base, fill=red!30},
	activityRuns/.style = {base, fill=green!30},
	process/.style = {base, minimum width=2.3cm, fill=orange!15,
		font=\ttfamily},
}
\newcolumntype{L}[1]{>{\raggedright\let\newline\\\arraybackslash\hspace{0pt}}m{#1}}
\newcolumntype{C}[1]{>{\centering\let\newline\\\arraybackslash\hspace{0pt}}m{#1}}
\newcolumntype{R}[1]{>{\raggedleft\let\newline\\\arraybackslash\hspace{0pt}}m{#1}}
\newcommand{\statement}[1]{%
	#1\enspace\ignorespaces
}
\def \n {8}
\def \verySmallRadius {0.95cm}
\def \margin {10} 
\newcounter{mylistcounter}
\def\saveitem#1{%
	\stepcounter{mylistcounter}%
	\expandafter\def\csname mylist\themylistcounter\endcsname{#1}}
\forcsvlist{\saveitem}{%
	31,24,23,43,42,41,34,32,C
}%
\def\getnthelement#1{\csname mylist#1\endcsname}
\newcounter{mylistcounterbis}
\def\saveitembis#1{%
	\stepcounter{mylistcounterbis}%
	\expandafter\def\csname mylistbis\themylistcounterbis\endcsname{#1}}
\forcsvlist{\saveitembis}{%
	341, 342, 423, 234, 432, 324, 431, C
}%
\def\getnthelementbis#1{\csname mylistbis#1\endcsname}
\newcounter{mylistcounterthird}
\def\saveitemthird#1{%
	\stepcounter{mylistcounterthird}%
	\expandafter\def\csname mylistthird\themylistcounterthird\endcsname{#1}}
\forcsvlist{\saveitemthird}{%
	14,24,41,12,23,33,13,21,42, C
}%
\def\getnthelementthird#1{\csname mylistthird#1\endcsname}
\newcounter{mylistcounterfourth}
\def\saveitemfourth#1{%
	\stepcounter{mylistcounterfourth}%
	\expandafter\def\csname mylistfourth\themylistcounterfourth\endcsname{#1}}
\forcsvlist{\saveitemfourth}{%
	213,133,233,123,412,241,124, 214, 421, 142, C
}%
\def\getnthelementfourth#1{\csname mylistfourth#1\endcsname}
\newcommand{\Def}[1]{{\bfseries #1}}
\newcolumntype{P}[1]{>{\centering\arraybackslash}p{#1}}
\newtheorem{definition}{Definition}
\newtheorem{problem}{Problem}
\newtheorem{corollary}{Corollary}
\newtheorem{proposition}{Proposition}
\newtheorem{lemma}{Lemma}
\newtheorem{theorem}{Theorem}
\newtheorem{remark}{Remark}
\newtheorem{claim}{Claim}
\newcommand{\straightsf}[1]{\normalfont {\sf #1}\xspace}
\newcommand{\GU}{\texorpdfstring{{\straightsf{GU}}}{GU}\xspace}
\newcommand{\GW}{\texorpdfstring{{\straightsf{GW}}}{GW}\xspace}
\newcommand{\DU}{\texorpdfstring{{\straightsf{DU}}}{DU}\xspace}
\newcommand{\DW}{\texorpdfstring{{\straightsf{DW}}}{DW}\xspace}
\title[Sequence graphs realizations and ambiguity in language models]{Sequence Graphs Realizations and Ambiguity in Language Models\footnote{A preliminary version of this work has been published in COCOON'21.}}
\newcommand\blfootnote[1]{%
  \begingroup
  \renewcommand\thefootnote{}\footnote{#1}%
  \addtocounter{footnote}{-1}%
  \endgroup
}
\author[1,2]{\fnm{Sammy} \sur{Khalife} \blfootnote{Corresponding author: Sammy Khalife, \href{mailto:khalife.sammy@cornell.edu}{khalife.sammy@cornell.edu}}}
\author[3]{\fnm{Yann} \sur{Ponty}}
\author[4]{\fnm{Laurent} \sur{Bulteau}}
\affil[1]{\orgdiv{Department of Applied Mathematics and Statistics}, \orgname{Johns Hopkins University}, \country{USA}}
\affil[2]{\orgdiv{School of Operations Research and Information Engineering, Cornell Tech}, \orgname{Cornell University}, \country{USA}}
\affil[3]{\orgdiv{LIX}, \orgname{CNRS UMR 7161, Ecole Polytechnique, Institut Polytechnique de Paris}, \country{France}} 
\affil[4]{\orgdiv{LIGM}, \orgname{CNRS, Universite Gustave Eiffel}, \country{France}} 
\abstract{Several popular language models represent local contexts in an input text $x$ as bags of words. Such representations are naturally encoded by a sequence graph whose vertices are the distinct words occurring in $x$, with edges representing the (ordered) co-occurrence of two words within a sliding window of size $w$. However, this compressed representation is not generally bijective: some may be ambiguous, admitting several realizations as a sequence, while others may not admit any realization.

In this paper, we study the realizability and ambiguity of sequence graphs from a combinatorial and algorithmic point of view. We consider the existence and enumeration of realizations of a sequence graph under multiple settings: window size $w$, presence/absence of graph orientation, and presence/absence of weights (multiplicities). When $w=2$, we provide polynomial time algorithms for realizability and enumeration in all cases except the undirected/weighted setting, where we show the \#P-hardness of enumeration. For $w\ge 3$, we prove the hardness of all variants, even when  $w$ is considered as a constant, with the notable exception of the undirected unweighted case for which we propose XP algorithms for both problems. 
We conclude with an integer program formulation to solve the realizability problem, and a dynamic programming algorithm to solve the enumeration problem in instances of moderate sizes. This work leaves open the membership to NP of both problems, a non-trivial question due to the existence of minimum realizations having size exponential on the instance encoding.}
\keywords{Graphs, Sequences, Combinatorics, Inverse problem, Computational Complexity}
\begin{document}

\maketitle




\section{Introduction}




The construction of  numerical vector representations for words and sentences (a.k.a \textit{embeddings}) 
has been a long-standing problem in data science and artificial intelligence. Until the mid 2010s, most of the techniques in information retrieval and Natural Language Processing (NLP)  used pre-computed embedding as input to machine learning algorithms. In this context, the choice and computation of the ``right'' embedding constitutes a problem in its own right. Ultimately, each embedding of this type can be thought of as a map between words to numerical vectors: each word is attributed a \emph{static} vector, hopefully encoding different types of information about the word (semantic, spelling, ...). In a new document, this word is mapped to the same vector, independently of the new context in which it appears. We refer the reader to \cite{grohe2020word2vec} for a description of these embedding techniques, such as \emph{Word2Vec}, \emph{GloVe}, \emph{FastText}, and others. These embeddings laid the groundwork for modern NLP techniques, by being scalable, interpretable, and performing for several information retrieval tasks. On the other hand, attention-based models \cite{vaswani2017attention}, and its various implementations such as \emph{transformers} \cite{phuong2022formal}, adopt a different paradigm by incorporating both the embeddings and the learning parameters in the same computational framework. Furthermore, they allow the embedding of each word to \emph{depend} on the context it appears in. This new paradigm combined with training techniques lead to state-of-the-art methods in most NLP tasks including complex ones that traditional embeddings were not able to handle well, in particular translation, summarization, question answering, and natural text understanding. 

Measuring the capacity of  embeddings to faithfully represent words or sentences is a natural problem in order to understand their limitations. In particular, one may wonder and ask about their level of \emph{ambiguity}: \emph{How many changes can be made to a sentence that would yield the same embedding? How long does a sequence need to be such that the resulting embedding is not unique?  } More generally, the question of invariance (and equivariance\footnote{Informally, embeddings are equivariant if under some group action,  changes in the input sequence should affect the embeddings in the same manner.}) in computational linguistics  -- including embeddings --  has become a topic of growing interest: in a recent line of research initiated by \cite{gordon2019permutation,peyrard2021invariant}, the authors introduce the idea that  \emph{equivariance} in embeddings and  can also be a desirable property. Namely, Gordon et al. \cite{gordon2019permutation} exposed a method to construct \emph{string equivariance}, which suppose some form of equivariance under "local" permutation action. White et al. \cite{white2022equivariant} further generalize this work by restricting the range of permutations  only  to some lexical classes: symmetries are only allowed to exchange words inside a given fixed lexical class (examples of lexical classes are subsets of verbs, adjectives, nouns of interchangeable semantic roles). This line of work,   builds on the premise that certain embeddings exhibit desirable properties, enabling them to emulate fundamental cognitive abilities such as \emph{compositional generalization}\footnote{In a nutshell, compositional generalization is the  ability to process a novel sentence and assign overall possible meanings (e.g. ``cook twice'') by composing the meanings of its individual parts (e.g. meanings of ``cook'' and ``twice'').} \cite{chomsky2002syntactic,montague1970universal,partee1995lexical}. In turn, embeddings sharing such properties would lead to general computational frameworks coming closer to a true understanding of words and sentences \cite{petrache2024position}.

In this article we study a family of combinatorial problems related to the  first group of \emph{context-based} embeddings (including \emph{Word2Vec}, \emph{Glove}, \emph{FastText}, ...). Some of these problems can be naturally interpreted as determining the level of \emph{ambiguity} of these embeddings. While context-based embeddings are richer than the traditional \emph{bag-of-words} (related to  \textit{Parikh vectors}) in the literature, they still induce some level of ambiguity, \emph{i.e.} a given graph can represent several sequences (see Figure \ref{counter_example_dw_2a} and \ref{counter_example_dw_2b} for illustrations). The main contribution of this article is to present new complexity results about the inverse problems quantifying such level of ambiguity. We also  present algorithms that allow to solve these problems of graphs of reasonable size, despite the computational hardness in most variations of the problems. Our results also highlight the potential limitations of  context-based embedding to encode the  whole information of sequences of reasonable length.  In the line of recent efforts on invariance and equivariance in computational linguistics, we hope that this study can be connected to intrinsic desirable properties of word embeddings, and serve as a stepping stone to further understand their attention-based variants \cite{peyrard2021invariant}.


\subsection{Definitions and problem statement}\label{sec:defs}
In the following, $p $ is a positive integer and  $[p] $ is a shorthand for $ \{1,...,p\}$. Let $x = x_1, x_2, ..., x_p$ be a finite sequence over a vocabulary $X = \{v_1, \cdots, v_n\}$. 
We first formalize the notion of \emph{sequence graph} --  introduced in \cite{rousseau2015text}  as \emph{graph-of-words} -- illustrated in Figures~\ref{counter_example_dw_2a} and \ref{counter_example_dw_2b}. 

%
%
%

\begin{definition}\label{definition_sequence_graphs}
    Given a sequence $x$ and a window size $w \in \mathbb{N}^{+}$ ($w > 0 $), the \emph{sequence graph of  $x$ with window size $w$} is the graph where each vertex of $G=(V,E)$ is the set of distinct words appearing in the sequence, and each edge notifies the appearance of two words in a context of size $w$. Formally, 
$V=\{v\in X \; | \;  \exists i\in[p], \; v=x_i\}$
and
\begin{equation}\label{undirected_seq_graph_def}\{u,v\}\in E \iff \exists (k,k')\in [p]^2\quad  0<|k-k'|\leq w-1, \; u=x_k, \; v=x_{k'}\end{equation}
  A weighted sequence graph $G$ is endowed with a weight matrix $\Pi(G)=(\pi_{e})_{e\in E}$ such that
  \begin{equation}\label{undirected_weighted_seq_graph_def}
    \pi_{\{u,v\}} = \mathsf{Card}\;\{(k,k') \in [p] ^2 \;|\;  \; 0< |k-k'| < w ,  \;  x_k =u \; \mathrm{and} \; x_{k'}=v\}
  \end{equation}

For digraphs, the two-element set $\{u,v\}$ is replaced by the pair $(u,v)$ in Statement \eqref{undirected_seq_graph_def}, and the absolute values in Statements \eqref{undirected_seq_graph_def} and \eqref{undirected_weighted_seq_graph_def} are replaced with $k < k' < k + w $.

   Under these conditions, we say that $x$ is a \emph{$w$-realization} of $G$, or \emph{realization }in the absence of ambiguity, if $G$ is the sequence graph of $x$ with window size $w$. Finally, $x$ is a \emph{$w$-realization} of $(G, \Pi)$ if $G$ is the $w$-sequence graph of $x$ with $\Pi(G)=\Pi$. 
\end{definition}

\begin{figure}[H]
    \centering
    \newcommand{\HSep}{.52cm}
    \subfloat[Unambiguous graph ($w=3$)]{
    \begin{tikzpicture}
    \begin{scope}[]
    \node[main node] (1) {Linux};
    \node[main node] (2) [right= \HSep  of 1] {is};
    \node[main node] (3) [right= \HSep  of 2] {not};
    \node[main node] (4) [right= \HSep  of 3 ] {UNIX};
    \node[main node] (5)  [right= \HSep  of 4]{but};
    
    \draw[->] (1) edge node[right] {} (2);
    \draw[->] (1) edge [bend left=30] node{} (3);
    
    \draw[->] (3) edge [bend left=30] node{} (5);
    
    \draw[->] (2) edge node[right] {} (3);
    \draw[->] (2) edge [bend left=30] node{} (4);
    \draw[->] (3) edge node[right] {} (4);
    
    \draw[->] (4) edge node[right] {} (5);
    
    \draw[->] (4) edge [bend left=22] node {} (1);
    \draw[->] (5) edge [bend left=30] node{} (1);   
    
    \end{scope}
    \end{tikzpicture}

    }\hfill
    \subfloat[Ambiguous graph ($w=2$)]{
    \begin{tikzpicture}
    \begin{scope}[]
    \node[main node] (1) {Linux};
    \node[main node] (2) [right= \HSep  of 1] {is};
    \node[main node] (3) [right= \HSep  of 2] {not};
    \node[main node] (4) [right= \HSep  of 3 ] {UNIX};
    \node[main node] (5)  [right= \HSep  of 4]{but};

    \draw[->] (1) edge node[right] {} (2);
    \draw[->] (2) edge node[right] {} (3);
    \draw[->] (3) edge node[right] {} (4);
    
    \draw[->] (4) edge node[right] {} (5);
    \draw[->] (5) edge [bend left=30] node{} (1);   
  
    \end{scope}
    \end{tikzpicture}

    }
    \caption{Sequence digraphs (or directed  \textit{graphs-of-words})  built for the sentence ``{Linux is not UNIX but Linux}'' using window sizes $w=3$ (a) and $w=2$ respectively (b). In the second case, the sequence graph is ambiguous, since any circular permutation of the words admits the same representation.}
    \label{counter_example_dw_2a}
  \end{figure}
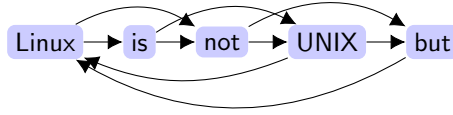
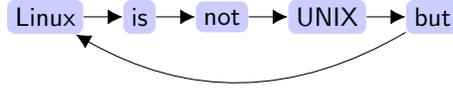
  
    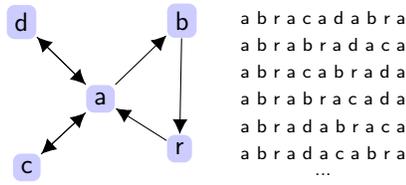
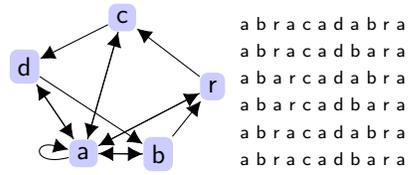
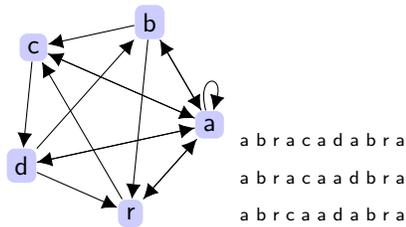
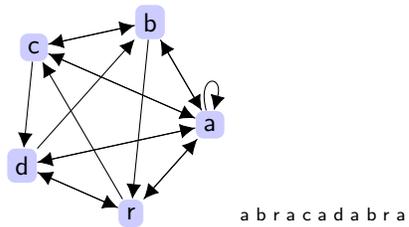
\begin{figure}[H]
    \centering
    \newcommand{\HSep}{.52cm}
    \subfloat[$w=2$, $G$ has  $30$ realizations]{
\begin{tikzpicture}
\begin{scope}[xshift=4cm]
\node[main node] at (-0.004843,-0.161971) (0) {a};
\node[main node] at (1.07740,0.878132) (1) {b};
\node[main node] at (-0.984928,-1.074439) (2) {c};
\node[main node] at (-1.054361,0.86) (3) {d};
\node[main node] at (1.045639,-0.824727) (4) {r};
\draw[->] (0) edge node[right] {} (1);
\draw[->] (0) edge node[right] {} (2);
\draw[->] (0) edge node[right] {} (3);
\draw[->] (1) edge node[right] {} (4);
\draw[->] (2) edge node[right] {} (0);
\draw[->] (3) edge node[right] {} (0);
\draw[->] (4) edge node[right] {} (0);
\end{scope}
\end{tikzpicture}
\quad
\begin{tikzpicture}
\newcommand{\VSep}{0pt}
\sf 
\scriptsize
\node at (0,3) (Sentence) {a b r a c a d a b r a};
    \node[below=\VSep of Sentence] (Sentence) { a b r a b r a d a c a};
     \node[below=\VSep of Sentence]  (Sentence) {a b r a c a b r a d a};
      \node[below=\VSep of Sentence]  (Sentence) {a b r a b r a c a d a};
       \node[below=\VSep of Sentence]  (Sentence) {a b r a d a b r a c a};
        \node[below=\VSep of Sentence]  (Sentence) {a b r a d a c a b r a};
        \node[below=\VSep of Sentence] (Sentence) {...};
\end{tikzpicture}
    }\hfill
    \subfloat[$w=3$, $G$ has $6$ realizations]{
\begin{tikzpicture}
\begin{scope}[xshift=4cm]
\node[main node] at (-0.51647,-0.992638) (0) {a};
\node[main node] at (0.47779,-1) (1) {b};
\node[main node] at (0,0.817444) (2) {c};
\node[main node] at (-1.300001,0.163526) (3) {d};
\node[main node] at (1.045639,0.163526) (4) {r};
\draw[->, loop left] (0) to (0);
\draw[->] (0) edge node[right] {} (1);
\draw[->] (0) edge node[right] {} (2);
\draw[->] (0) edge node[right] {} (3);
\draw[->] (0) edge node[right] {} (4);
\draw[->] (1) edge node[right] {} (0);
\draw[->] (1) edge node[right] {} (4);
\draw[->] (2) edge node[right] {} (0);
\draw[->] (2) edge node[right] {} (3);
\draw[->] (3) edge node[right] {} (0);
\draw[->] (3) edge node[right] {} (1);
\draw[->] (4) edge node[right] {} (0);
\draw[->] (4) edge node[right] {} (2);
\end{scope}
\end{tikzpicture}
\begin{tikzpicture}\sf 
\scriptsize
\newcommand{\VSep}{0pt}
\node at (0,3) (Sentence) {a b r a c a d a b r a};
    \node[below=\VSep of Sentence] (Sentence) {a b r a c a d b a r a};
     \node[below=\VSep of Sentence]  (Sentence) {a b a r c a d a b r a};
      \node[below=\VSep of Sentence]  (Sentence) {a b a r c a d b a r a};
       \node[below=\VSep of Sentence]  (Sentence) {a b r a c a d a b r a};
        \node[below=\VSep of Sentence]  (Sentence) {a b r a c a d b a r a};
\end{tikzpicture}
    }
    
    \subfloat[$w=4$, $G$ has $3$ realizations]{
\begin{tikzpicture}
\begin{scope}[xshift=4cm]
\node[main node] at (1.086105,-0.415112) (0) {a};
\node[main node] at (0.288890,0.950456) (1) {b};
\node[main node] at (-1.256194,0.614243) (2) {c};
\node[main node] at (-1.413895,-0.959116) (3) {d};
\node[main node] at (0.033726,-1.595292) (4) {r};
\draw[->, loop above] (0) to (0);
\draw[->] (0) edge node[right] {} (1);
\draw[->] (0) edge node[right] {} (2);
\draw[->] (0) edge node[right] {} (3);
\draw[->] (0) edge node[right] {} (4);
\draw[->] (1) edge node[right] {} (0);
\draw[->] (1) edge node[right] {} (2);
\draw[->] (1) edge node[right] {} (4);
\draw[->] (2) edge node[right] {} (0);
\draw[->] (2) edge node[right] {} (3);
\draw[->] (3) edge node[right] {} (0);
\draw[->] (3) edge node[right] {} (1);
\draw[->] (3) edge node[right] {} (4);
\draw[->] (4) edge node[right] {} (0);
\draw[->] (4) edge node[right] {} (2);
\end{scope}
\end{tikzpicture}
\begin{tikzpicture}\sf
\scriptsize
\node at (0,3) (Sentence) {a b r a c a d a b r a};
    \node at (0,2.5) (Sentence) {a b r a c a a d b r a};
    \node at (0, 2) (Sentence) {a b r c a a d a b r a};
\end{tikzpicture}
    }
    \hfill
    \subfloat[$w=5$, $G$ has one realization]{
\begin{tikzpicture}
\begin{scope}[xshift=4cm]
\node[main node] at (1.086105,-0.415112) (0) {a};
\node[main node] at (0.288890,0.950456) (1) {b};
\node[main node] at (-1.256194,0.614243) (2) {c};
\node[main node] at (-1.413895,-0.959116) (3) {d};
\node[main node] at (0.033726,-1.595292) (4) {r};
\draw[->, loop above] (0) to (0);
\draw[->] (0) edge node[right] {} (1);
\draw[->] (0) edge node[right] {} (2);
\draw[->] (0) edge node[right] {} (3);
\draw[->] (0) edge node[right] {} (4);
\draw[->] (1) edge node[right] {} (0);
\draw[->] (1) edge node[right] {} (2);
\draw[->] (1) edge node[right] {} (4);
\draw[->] (2) edge node[right] {} (0);
\draw[->] (2) edge node[right] {} (1);
\draw[->] (2) edge node[right] {} (3);
\draw[->] (3) edge node[right] {} (0);
\draw[->] (3) edge node[right] {} (1);
\draw[->] (3) edge node[right] {} (4);
\draw[->] (4) edge node[right] {} (0);
\draw[->] (4) edge node[right] {} (2);
\draw[->] (4) edge node[right] {} (3);
\end{scope}
\end{tikzpicture}
\begin{tikzpicture} \sf 
\scriptsize
\node at (0,3) (Sentence) {a b r a c a d a b r a};
\end{tikzpicture}
    }   
    \caption{Sequence digraphs (or directed \textit{graphs-of-words})  built for the sentence ``{a b r a c a d a b r a}'' using window sizes 2 (a), 3 (b), 4 (c) and 5 (d). }
    \label{counter_example_dw_2b}
  \end{figure}
 
 Given $w$, the graph of a sequence $x$ is unique and the natural integers $\pi_{\{u,v\}}$ represent the number of co-occurrences of $u$ and $v$ in all windows of size $w$.  
 An algorithm to construct a weighted sequence graph is presented in Algorithm \ref{algo:graph_sequence}; the other cases (unweighted, directed) are obtained similarly. In the unweighted case, the map thus defined from the sequence set  $X^{\star}$ to the graph set $\mathcal{G}$ is referred to as $\phi_w \colon X^\star \to \mathcal{G}, x \mapsto G_w(x)$. Based on these definitions, we consider the following problems:

  \begin{algorithm}[t]
  \caption{ Construction of $\Pi$ associated to a weighted sequence graph}
  \label{algo:graph_sequence}
  \textbf{Parameter}: Window size $w \geq 2$\\
  \textbf{Input}: Sequence $x$ of length $p \geq 1$ of integers in $[1,n]$ \\
  \textbf{Output}: Weighted adjacency matrix $\Pi=(\pi_{\{u,v\}})$ 
  \begin{algorithmic}[1] 
  \State $\pi_{\{u,v\}}\gets 0$ for $u,v\in [n]^2$
  
  \For{$ i = 1$  to $p-1$}
        \For{$ j = i+1$ to $\min(i+w-1, p)$}
     \State $\pi_{\{x_i,x_{j}\}} \leftarrow \pi_{\{x_i,x_j\}}+1$
    \EndFor
  \EndFor
  \State \textbf{return} $\Pi=(\pi_{\{u,v\}})$
  \end{algorithmic}
 \end{algorithm}

\newcommand{\Clique}{{\text{\sf Clique}}}
\newcommand{\RealizableG}{\ensuremath{\text{\sf Realizable}}}
\newcommand{\NumRealizationsG}{\ensuremath{\text{\sf NumRealizations}}}

\newcommand{\Realizable}[1]{\texorpdfstring{\ensuremath{\text{\sf Realizable}_{#1}}}{Realizable #1}}
\newcommand{\NumRealizations}[1]{\texorpdfstring{\ensuremath{\text{\sf NumRealizations}_{#1}}}{NumRealizations #1}}
\newcommand{\Problem}[4]{
\begin{problem}[#1\label{#2}]
~~\\  
  \noindent{\bfseries Input:} #3
  
  \noindent{\bfseries Output:} #4
\end{problem}
}

\Problem{Weighted-\RealizableG{}  (W-\RealizableG)}%
{Decision_general_W}%
{Graph $G$ (directed or undirected), weight matrix $\Pi$, window size $w$}%
{True if $(G,\Pi)$ admits a $w$-realization $x$, False otherwise.}

\Problem{Unweighted-\RealizableG\  (U-\RealizableG)}%
{Decision_general_U}%
{Graph $G$ (directed or undirected), window size $w$}%
{True if $G$ admits a $w$-realization $x$, False otherwise.}
We denote by \DW-\RealizableG\ (resp. {\GW-}) the restricted version of W-\RealizableG\  where the input graph $G$ is directed (resp. undirected). We define \GU-\Realizable\ and \DU-\Realizable\ similarly. Our  notations can be summarized as:

\smallskip\noindent
    \begin{tabular}{l@{\qquad}l}
  \DW $\to$ (directed) Digraph,  Weighted &  \DU $\to$ (directed) Digraph,  Unweighted \\
  \GW $\to$ (undirected) Graph, Weighted  &  \GU $\to$    (undirected) Graph, Unweighted
    \end{tabular}

\Problem{Unweighted-\NumRealizationsG{} (U-\NumRealizationsG)}%
{Counting_general_U}%
{Graph $G$ (directed or undirected), window size $w$}%
{The number of \Def{realizations} of $G$, \emph{i.e.} preimages of $G$ through $\phi_w$ i.e. $|\{x\in X^\star \mid \phi_w(x) = G \}|$
  if finite, or $+\infty$ otherwise.}%

\Problem{Weighted-\NumRealizationsG{} (W-\NumRealizationsG)}%
{Counting_general_W}%
{Graph $G$ (directed or undirected), weight matrix $\Pi$, window size $w$}%
{The number of \Def{realizations} of $G$ in the weighted sense.}

%
%
%

%
%
Similarly, we use the same prefix for the directed or undirected versions (\DW,\DU, \GW, \GU). We also denote \NumRealizations{w} for the case where $w$ is a fixed positive integer. Note that \NumRealizationsG{} generalizes the previous one, as \RealizableG{} can be solved by testing the nullity of the output of \NumRealizations{}.

\subsection{Related work}\label{subsec:relatedwork}
Sequence graphs encode the information in several models based on co-occurences~\cite{rousseau2015text,arora2016latent,pennington2014glove}. 
To the best of our knowledge, the ambiguity and realizability questions addressed in this work were never addressed by prior work in computational linguistics. 
It may seem that the inverse problems we are considering are similar to the \textit{Universal Reconstruction of a String} \cite{gawrychowski2020universal}, which consists in determining the set of strings of a fixed length having as many distinct letters as possible, satisfying substring equations of the form: 
$s[q_1 \cdots q_{p}] = s[q'_1 \cdots q'_{p}], \, \cdots \,  , s[r_1 \cdots r_{m}] = s[r'_1 \cdots r'_m] $, where $s[q_1 \dots q_p]$ refers to the substring $s_{q_1} \dots s_{q_p}$ and the increasing indices $q_i$'s, $q'_i$'s, $\cdots$, $r_i$'s and $r'_i$'s, and the length of $s$ are given as part of the input. The reconstruction problem consists in finding a string $s$ that verifies the given set of constraints, with a maximum number of distinct letters.  
We shall see that these problems are intrinsically different; in particular, the complexity results presented in this article imply the absence of reduction to the reconstruction problem solvable in linear time with respect to the length of the input string $s$. 
In a similar fashion, \emph{De Bruijn} graphs  \cite{de1946combinatorial} are directed graphs representing overlaps between sequences of symbols.
They can be seen as a specialization of the problem in this paper, using window size 2.
Given a positive integer $k$, the vertices of a De Brujin graph formed from a sequence are the $k$-mers (sub-sequence of $k$ symbols) in the sequence, and edges are formed between a pair $k$-mers when one appears just before another (two adjacent $k$-mers will share $k-1$ consecutive symbols: as a suffix for the first one, and as a prefix for the other). The main difference with our setting is that distinct size-$k$ windows are encoded into distinct nodes even if they share many symbols (in different orders), while our setting has a very sparse set of nodes (one per symbol) and all context information is held by edges. 

Inverse problems studied in the Distance Geometry (DG) literature also bear some similarities. The input of a DG instance consists of a set of pairwise distances between points, having unknown positions in a $d$-dimensional space. A DG problem then consists in determining a set of positions for the points (if they exist), satisfying the distance constraints. Since a position is fully characterized from $d+1$ neighbors, the problem can be solved by finding a sequential order in the points, such that the assignment of a point is always by at least $d+1$ among its neighbors~\cite{liberti2014euclidean} (called  \textit{linear ordering}). Therefore, finding a linear ordering shares some level of similarity with our inverse problems since a realization for a window $w=d+2$  also represents a linear ordering of its nodes, in which $w-1 = d+1$ of the neighbors have lower value with respect to the order.
However, linear ordering in DG to solve our problems is insufficient. First, each element of the sequence $x$ is associated with a unique vertex in the DG instance. In sequence graphs a symbol can be repeated several times, but only one vertex is created in the graph. This implies that the vertex associated to the $i^{\text{th}}$ element $(i \geq w)$  of $x$ can have less than $w-1$ distinct neighbors in its predecessors in $x$. Second, DG graphs are essentially undirected, and loops are not considered, since an element is at distance $0$ from itself. 

\subsection{Paper outline}
In the following section we give a complete overview of our structural and algorithmic results: first for window size 2 (Section~\ref{sec:TwoSeq}), then for larger window sizes (Section~\ref{sec:threeSeq}).
We prove results for window size 2 in  Section~\ref{sec:proofsTwoSeq}, and then give detailed algorithms and hardness reductions for larger windows in Section~\ref{w_3_study}. More precisely, we first focus on the \GU variant (with an XP algorithm and W[1]-hardness proof) in Section~\ref{sec:GU_results}, then hardness proofs for other variants in Section~\ref{sec:nphardnessproofs}, and finally exponential-time algorithms for weighted variants (\GW and \DW) in Section~\ref{sec:effective_general_algo}. Finally, We present a construction yielding exponential-size realizations in Section~\ref{sec:exponential_realizations}.


\section{Results overview}\label{sec:pres_results}
In this section, we present our main theoretical results for $w=2$ in Subsection \ref{sec:TwoSeq} and for $w\geq 3$ in Subsection \ref{sec:threeSeq}. For the sake of readability, we postpone full proofs of our results to Sections \ref{sec:proofsTwoSeq} and \ref{w_3_study} respectively. 
Overall, our results reveal a stark contrast between $w=2$, where all relevant problems can be solved in polynomial time, and $w\ge 3$ where most versions of our problems become hard, except for \GU which admits a slicewise polynomial (XP) algorithm parameterized by $w$.

\subsection{Complete characterization of 2-sequence graphs (\texorpdfstring{$w=2$}{w=2})}\label{sec:TwoSeq}


A graph has a $2$-realization when there exists a path visiting every vertex and covering all  of its edges (at least once for the unweighted case and exactly $\pi_e$ times for each edge $e$ in the weighted case). This observation enables relatively simple characterization and algorithmic treatment, leading to the results summarized in Table~\ref{table:summary_graph_sequences_2} and Theorem~\ref{thm:2seq_results}. The additional definitions are given below.  

 \begin{table}[t]
  \caption{Complexity of variants of \NumRealizations{} and \Realizable{} with $w=2$. The possible number of sequences is given for each variant of \NumRealizations{}, and a characterization of yes-instances is given for \Realizable{}.%
  \label{table:summary_graph_sequences_2}
  }
  \small
  {\centering
    \begin{tabular}{@{}lcccc@{}}
      \toprule
      & \multicolumn{2}{c}{\NumRealizations{2}}&\multicolumn{2}{c}{\Realizable{2}}\\
      Data Instance & Complexity & \#Sequences & Complexity & Characterization\\
      \midrule
      \GU  & P &  $\{0,+\infty\}$ & P &  $G$ is connected \\ 
      \GW  & $\#$P-hard & $ \mathbb{N}$ &  P & $\psi(G)$ is (semi-)Eulerian \\
      \DU & P & $\{0, 1, +\infty \}$  & P & $G$ is a simple step
      \\
      \DW  & P & $\mathbb{N} $ (BEST Theorem) & P & $\psi(G)$ is (semi-)Eulerian \\
    \bottomrule
    \end{tabular}
   }
 \end{table}

\begin{definition}[$\psi(G)$]
  Let $(G, \Pi)$ be a weighted graph (directed or undirected). $\psi(G)$ is the multigraph with the same vertices as $G$ and with multiplicity  $\pi_{e}$ for each edge $e\in E$.
\end{definition}



\begin{definition}[(semi-)Eulerian]\label{DEF:EULER}
We say that a path is \emph{(semi-)Eulerian} if it visits all edges of the graph exactly once, and a graph is (semi-)Eulerian if it admits a (semi-)Eulerian path. A (semi-)Eulerian path with identical endpoints is a \emph{Eulerian cycle}, otherwise it is a \emph{semi-Eulerian path}, and this distinction extends to Eulerian and semi-Eulerian graphs  (here this distinction is only made in Proposition~\ref{count_2}).
\end{definition}
\begin{definition}[$R(G)$, $R^{+}(G)$]\label{DEF:3}
  Let $G=(V,E)$ be a digraph.  $R(G)$ is the Directed Acyclic Graph (DAG) such that:  i) every strongly connected components of $G$ is associated to a unique node in $R(G)$, and ii) two strongly connected components  $u \neq v$ in $G$ form an edge $(u,v$) in $R(G)$, provided there exists an edge $(x,y)\in E$ such that $x \in u$ and $y\in v$.
 
 $R^{+}(G)$ is the weighted DAG such that: i) $R^{+}(G)$ has the same vertices and edges as $R(G)$ and ii) the weight of an edge in $R^{+}(G)$ is the number of distinct edges between two strongly connected components in $G$.
\end{definition}

\begin{definition}[simple step graph]\label{def:simple_step}
    Let $G$ be a digraph. $G$ is said to be a {\em simple step} graph (see Figure~\ref{fig:simple_step}) if $R^{+}(G)$ is a directed path and its edges have weight~1.
\end{definition}

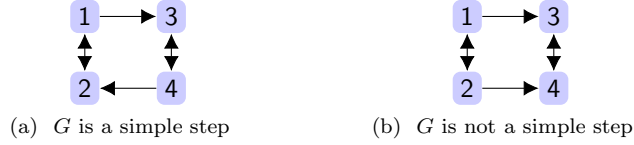
\begin{figure}[!ht]
  \subfloat[ $G$ is a simple step, $R^+(G)$ contains a single node]{ \makebox[0.4\textwidth]{
 
    \begin{tikzpicture}
    \node[main node] (1) {1};
    \node[main node] (2) [below = 0.5cm  of 1] {2};
    \node[main node] (3) [right = 0.75cm  of 1] {3};
  \node[main node] (4) [below = 0.5cm  of 3] {4};
    
    \draw[<->] (1) edge node[right, yshift=5pt] {} (2);
    \draw[->] (1) edge node[above] {} (3);
\draw[<-] (2) edge node[above] {} (4);
  \draw[<->] (3) edge node[above] {} (4);
  
    \node[main node] (1) {1};
    \node[main node] (2) [below = 0.5cm  of 1] {2};
    \node[main node] (3) [right = 0.75cm  of 1] {3};
  \node[main node] (4) [below = 0.5cm  of 3] {4};
    
    \draw[<->] (1) edge node[right, yshift=5pt] {} (2);
    \draw[->] (1) edge node[above] {} (3);
\draw[<-] (2) edge node[above] {} (4);
  \draw[<->] (3) edge node[above] {} (4);

            \begin{scope}[shift={(2.5cm,-0.5cm)}]
                \node [reduced node] () {1234};                   
                
            \end{scope}
  
    \end{tikzpicture}
  }
 }\hfill
 \subfloat[ $G$ is not a simple step, $R^+(G)$ is a path with a weight-2 edge]{ \makebox[0.4\textwidth]{
    \begin{tikzpicture}
    \node[main node] (1) {1};
    \node[main node] (2) [below = 0.5cm  of 1] {2};
    \node[main node] (3) [right = 0.75cm  of 1] {3};
      \node[main node] (4) [below= 0.5cm  of 3] {4};
    
    \draw[<->] (1) edge node[right, yshift=5pt] {} (2);
    \draw[->] (1) edge node[above] {} (3);

  \draw[->] (2) edge node[above] {} (4);
  \draw[<->] (3) edge node[above] {} (4);
  
            \begin{scope}[shift={(2.5cm,-.5cm)}]
                \node [reduced node] (12) {12};       
                \node [reduced node] (34) [right = 0.75cm  of 12] {34};       
          \draw[rededge, ->] (12) edge node[below, black] {2} (34);         
                
    \end{scope}
    \end{tikzpicture}
  }
 }
 \caption{Illustration of Definition~\ref{DEF:3} of graph $R^+(G)$ (drawn with brown edges and background) and Definition~\ref{def:simple_step} of a simple step graph. The graph in (a) has a single strongly connected component (yielding a trivial graph $R^+(G)$). The graph in (b) has two strongly connected components connected with two distinct edges ((1,3) and (2,4)), so they form  a weight-2 edge in $R^+(G)$.\label{fig:simple_step}}
 \end{figure}
\begin{theorem}\label{thm:2seq_results}

    A weighted graph $G$ (directed or undirected) admits a 2-realization if and only if $\psi(G)$ is (semi-)Eulerian. 
    An unweighted undirected (resp. unweighted directed) graph admits a 2-realization if and only if it is connected (resp. simple step). 
    
All variants of \Realizable{2} are in P. For  \NumRealizations{2}, the \GW variant is $\#$P-hard, and all others are in P.
\end{theorem}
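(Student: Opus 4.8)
The plan is to reduce every $w=2$ question to elementary walk/Eulerian theory. For $w=2$, Statement~\eqref{undirected_seq_graph_def} says exactly that two vertices are adjacent iff they occur consecutively in the sequence; hence a sequence $x=x_1\cdots x_p$ \emph{is} a walk $v_{x_1},v_{x_2},\dots,v_{x_p}$ in the complete (di)graph with a loop at each vertex, $G_2(x)$ is the sub(di)graph formed by exactly the edges traversed by this walk, and in the weighted setting the multiplicity of an edge in $\psi(G_2(x))$ is its number of traversals. Therefore $x$ is a $2$-realization of $(G,\Pi)$ iff the walk $x$ stays inside $G$, meets every vertex, and uses each edge $e$ exactly $\pi_e$ times, i.e.\ iff it is an Eulerian trail of the multigraph $\psi(G)$; such a trail exists iff $\psi(G)$ is (semi-)Eulerian (in the directed sense for digraphs). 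This already proves the first sentence of the theorem, the one-vertex graph being handled directly.

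For the unweighted cases I would argue from these walk characterizations. For the \GU variant: if $G$ is connected, double every edge to make all degrees even, take an Eulerian circuit of this multigraph, and read off its vertex sequence --- this walk stays in $G$ and traverses each edge of $G$ (twice), so its sequence graph is exactly $G$; conversely the walk underlying any realization is a connected subgraph spanning $V$ and $E$, forcing connectivity. For the \DU variant I would use the condensation $R(G)$: a covering walk meets the strongly connected components in a fixed order and can never re-enter one it has left (returning would merge components), and since it meets all vertices it meets all components, so the order of components is a directed Hamiltonian path $C_1,\dots,C_k$ of $R(G)$; moreover every inter-component edge must be traversed, a traversed inter-component edge necessarily runs from some $C_i$ to $C_{i+1}$, and the walk crosses from $C_i$ to $C_{i+1}$ exactly once, so there is a unique edge between consecutive components and none between non-consecutive ones --- that is, $R^+(G)$ is a directed path with unit weights, i.e.\ $G$ is a simple step graph. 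Conversely, for a simple step graph one obtains a realization by splicing, along the unique connecting edges, a walk through each $C_i$ that covers all its edges (built greedily: from the current vertex route to the tail of an as-yet-unused edge, using strong connectivity, and traverse it).

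These characterizations make \Realizable{2} polynomial --- in fact linear --- in all four variants: test connectivity by graph search (\GU); test the (semi-)Eulerian conditions on $\psi(G)$ by a connectivity check together with degree parities or in/out balances (\GW, \DW); and test the simple-step property by computing strongly connected components and checking that $R^+(G)$ is a path all of whose weights are $1$ (\DU). For \NumRealizations{2} the same structure settles three of the four cases: for \GU the answer is $0$ if $G$ is disconnected and $+\infty$ as soon as $G$ has an edge (any edge may be re-traversed within $G$); for \DU it is $0$ unless $G$ is a simple step graph, then $+\infty$ if some component contains an edge and $1$ if $G$ is simply a directed path; for \DW the number of realizations is the number of directed Eulerian trails of $\psi(G)$ read as vertex sequences, which is computable in polynomial time via the BEST theorem (the number of Eulerian circuits equals a spanning-arborescence count, a cofactor of the Laplacian, times $\prod_v(d^+(v)-1)!$), the open-trail case being reduced to the circuit case by adding an arc from the last vertex back to the first, and the passage from these counts to the exact number of realizations being elementary bookkeeping (choice of starting occurrence, and division by $\prod_e\pi_e!$ to go from edge-sequences to vertex-sequences).

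The remaining point, and the one I expect to be the main obstacle, is the \#P-hardness of the \GW variant of \NumRealizations{2}: here the realizations are the \emph{undirected} Eulerian trails of $\psi(G)$, for which no BEST-style determinantal formula is available. I would reduce from the \#P-complete problem of counting Eulerian circuits in an undirected (Eulerian) multigraph $H$: take $G$ to be the simple graph underlying $H$ and $\Pi$ its edge multiplicities, so $\psi(G)=H$ and the realizations of $(G,\Pi)$ are precisely the Eulerian trails of $H$ written as vertex sequences. The delicate part is relating vertex-sequence trails to the cyclic equivalence classes counted in the source problem: a class of Eulerian circuits contributes $|E(H)|$ rotations, times $2$ for the two directions of traversal, divided by the order of its stabilizer under rotation/reflection and by $\prod_e\pi_e!$. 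To keep the reduction parsimonious up to a polynomial-time-computable factor I would restrict the source to simple graphs (for which counting Eulerian circuits is still \#P-complete) and, if necessary, attach a small rigid gadget that kills all automorphisms of $H$ and all periodicities of its Eulerian circuits, so that $\#\mathrm{realizations}(G,\Pi)=2\,|E(H)|\cdot \#\mathrm{EulerianCircuits}(H)$; checking that such a gadget leaves the number of Eulerian circuits unchanged (up to the intended factor) is the technical crux of the hardness proof.
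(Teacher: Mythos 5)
Your proposal is correct and follows essentially the same route as the paper: the walk/Eulerian-trail view of $2$-realizations, the connectivity characterization for \GU, the condensation-based simple-step characterization for \DU, the bijection-up-to-$\prod_{e}\pi_e!$ between realizations and (semi-)Eulerian paths of $\psi(G)$, the BEST/Matrix-Tree count for \DW, and the reduction from counting undirected Eulerian circuits (Brightwell--Winkler) for the $\#$P-hardness of \GW. The only divergences are tactical --- you build the \GU realization by doubling edges and taking an Eulerian circuit where the paper uses a greedy path-appending argument, you prove the \DU characterization in one direct condensation argument where the paper splits it into three propositions, and you are more explicit than the paper about the rotation/reflection factor relating Eulerian circuits to vertex-sequence realizations in the hardness reduction (the paper absorbs this into its citation and its $\lvert\mathcal{E}\rvert=\lvert\mathcal{S}\rvert\prod_e\pi_e!$ lemma); note that your worry about nontrivial stabilizers is unfounded, since an Eulerian circuit on at least three edges cannot be fixed by any nontrivial rotation or by a reflection (all edges in the sequence are distinct), so the factor is exactly $2\lvert E(H)\rvert$ and no rigidifying gadget is needed.
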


\subsection{General complexity and algorithmic results (\texorpdfstring{$w \geq 3$}{w>2})}\label{sec:threeSeq}

In this subsection we present the remaining complexity results, which are summarized in Theorem \ref{theoremsumup} and Table \ref{table:summary_graph_sequences}. We first show that \GU-\Realizable{w} $\in P$ for any integer $w \geq 3$. Besides, for \GU, the number of realizations of a graph $G$ is either $0$ (not realizable), $1$ or $+\infty$ (realizable in both cases).
These three cases can be tested in polynomial time using our algorithm (presented in Section \ref{w_3_study}), showing that \GU-\NumRealizations{w} $\in P$, for any integer $ w \geq 3$. All proofs of the following statements are given in Section \ref{w_3_study}.

\begin{theorem}
\label{theoremsumup}For any integer $w\geq 3$, the \GW, \DU and \DW variations of \NumRealizations{w} and \Realizable{w} are NP-hard, and the \GU variations are in P. 
\end{theorem}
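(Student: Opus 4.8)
The statement collects four independent claims, one per data instance, and I would prove each separately inside Section~\ref{w_3_study}. For the \GU case I would exhibit a slicewise-polynomial algorithm parameterized by $w$ (as in Section~\ref{sec:GU_results}): a dynamic program that scans a hypothetical realization from left to right and whose state is the ordered tuple of the last $w-1$ symbols together with the set of vertices whose $G$-neighbourhood has already been fully revealed; a transition appends one symbol, slides the window, and rejects if it would create an edge absent from $G$ or leave a vertex permanently short of a neighbour. There are only $O(n^{\,w-1})$ possible window contents and the ``closed'' set grows monotonically, so for each fixed $w$ the procedure runs in polynomial time, hence \GU-\Realizable{w} lies in $P$. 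To obtain that \GU-\NumRealizations{w} also lies in $P$, I would establish the trichotomy announced in Section~\ref{sec:threeSeq} — a realizable undirected unweighted graph has either exactly one realization or infinitely many, the latter exactly when a minimal realization contains a ``pumpable'' block — and note that this local pumpability is detectable by the same dynamic program, so the answer in $\{0,1,+\infty\}$ is computed in polynomial time.

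For the \GW, \DU and \DW variants I only need NP-hardness of \Realizable{w}, since \Realizable{w} reduces to testing the nullity of \NumRealizations{w} (as observed after the problem statements). For each fixed $w\ge 3$ I would give a polynomial-time reduction from a convenient NP-complete problem: a restricted (directed) Hamiltonian path problem for \DU, and a numeric packing problem such as Subset-Sum or $3$-Partition for \GW and \DW, edge multiplicities being the natural place to encode integers. The reduction outputs a graph $G$ (and a weight matrix $\Pi$ in the weighted cases) built from rigid gadget subgraphs so that (i) every solution of the source instance yields a sequence whose $w$-sequence graph is exactly $(G,\Pi)$, and (ii) conversely the gadgets pin down, up to a few degrees of freedom, where each vertex may occur in any realization, so that every realization decodes to a valid solution. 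Since all $w\ge 3$ must be covered, the gadgets carry $\Theta(w)$ ``spacer'' vertices between consecutive meaningful vertices, calibrated to the value of $w$ so that exactly the intended pairs of meaningful vertices land inside a common window; the three data instances then differ only in cosmetic details (orientation of gadget edges for the directed/undirected split, and pinning of multiplicities for the weighted variants). This matches the organisation of Section~\ref{sec:nphardnessproofs}.

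The delicate point — and the step I expect to be the main obstacle — is \emph{soundness}: a realization must reproduce $G$, and in the weighted case $\Pi$, \emph{exactly}, with no spurious co-occurrence and no wrong multiplicity. This is much harder to enforce for $w\ge 3$ than in the $w=2$ analysis of Theorem~\ref{thm:2seq_results}, because each length-$w$ window contributes a clique on $\binom{w}{2}$ pairs and consecutive windows overlap, so a single misplaced symbol can create many unwanted edges. The plan is to make gadgets locally rigid — a meaningful vertex whose neighbours form a large clique is forced to essentially one position, and this propagates into a global skeleton common to all realizations — while spacer vertices absorb the remaining slack and keep non-adjacent meaningful vertices out of any shared window, the whole construction having polynomial size. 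I would then close the argument by a finite case analysis showing that the decoding is well defined and that yes-instances map to yes-instances, and by checking that each reduction is computable in polynomial time; combined with the \GU algorithm above, this yields the theorem.
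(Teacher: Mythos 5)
There is a genuine gap in the \GU{} part, which is the only place where you must actually deliver a polynomial-time algorithm. Your dynamic program carries in its state ``the set of vertices whose $G$-neighbourhood has already been fully revealed''; that set ranges over subsets of $V$, so the state space has size $\Theta(n^{w-1}\cdot 2^{n})$, and monotone growth of the set does not collapse it (a set-cover-style DP also has a monotonically growing covered set and is still exponential). Likewise, your rejection rule ``would leave a vertex permanently short of a neighbour'' is itself a nontrivial reachability condition that you do not explain how to decide. The paper avoids tracking coverage altogether: it builds the auxiliary graph $H^{(w-1)}$ whose vertices are the $(w-1)$-tuples inducing cliques of $G$ and whose edges are the admissible window shifts, and proves (Lemmas~\ref{lem:GU_forward}, \ref{lem:GU_reverse}, \ref{lem:GU_XP}) that $G$ is $w$-realizable iff a single \emph{connected component} of $H^{(w-1)}$ covers all edges of $G$. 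The point that makes this work, and that your sketch misses, is that in the undirected unweighted setting a walk may freely backtrack, so any connected component covering all edges can be traversed by one walk whose underlying sequence is a realization; realizability therefore reduces to connectivity of a size-$O(n^{w})$ graph, with no bookkeeping of which edges remain to be realized. Without this (or an equivalent) idea, your \GU{} claim is unproven.

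On the hardness side your plan is directionally reasonable but not carried out, and it diverges from the paper for the weighted variants: the paper reduces \emph{all three} hard variants from restricted Hamiltonian Path (HP1 for \DU{}, via the intermediate problem \textsf{OptionalRealizable}$_w$ with compulsory/optional arcs and a forced prefix; HP2 for \DW{}, with \GW{} obtained by erasing orientations from the same construction), whereas you propose Subset-Sum or $3$-Partition with integers encoded in edge multiplicities. That is not obviously wrong, but it is unsubstantiated, it sits uneasily with your later claim that the three cases ``differ only in cosmetic details'' (a Hamiltonian-path gadget and a numeric-packing gadget are not cosmetic variants of each other), and with binary-encoded weights it forces realizations of exponential length, which makes the soundness analysis you yourself flag as the main obstacle considerably harder. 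You correctly observe that hardness of \Realizable{w} transfers to \NumRealizations{w} by testing nullity, and the \DU{} source problem matches the paper's; but as written, neither the weighted gadgets nor the \GU{} algorithm is established.
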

In other words, \NumRealizations{} and \Realizable{} parameterized by $w$ are para-NP-hard in the \GW, \DU and \DW variations and slicewise polynomial (XP) in the \GU variations. 
 \begin{table}
  \caption{Complexity for various instances of our problems $(w \geq 3)$. We remind that a para-NP-hard problem does not admit any XP algorithm unless P=NP.}
  \label{table:summary_graph_sequences}
  \small
  %
    \begin{tabular}{lcccc}
      \toprule
      & \multicolumn{2}{c}{Constant $w$, $w\geq 3$}
      & \multicolumn{2}{c}{Parameter $w$}  \\
      & \multicolumn{1}{c}{\NumRealizations{w}}&\multicolumn{1}{c}{\Realizable{w}} & \NumRealizations{} & \Realizable{} \\
    
      Variation & Complexity & Complexity & Complexity & Complexity\\
    
      \midrule
      \GU  & P  & P   &  W[1]-hard; XP & W[1]-hard; XP \\ 
      \GW  & NP-hard  & NP-hard   &  para-NP-hard  &  para-NP-hard \\
      \DU & NP-hard   & NP-hard  & para-NP-hard  & para-NP-hard \\
      \DW & NP-hard  & NP-hard & para-NP-hard  & para-NP-hard
    \end{tabular}
 \end{table}

We further investigate the parameterized complexity for \GU-\Realizable{}, since an XP algorithm, here in time $O(n^{w})$, can sometimes be improved into an FPT algorithm running in time $f(w)n^{O(1)}$. This is however, not the case here (under usual complexity assumptions), as we prove the parameterized hardness of this problem.
\begin{restatable}{theorem}{theoremcliquereduction}\label{theorem:cliquereduction}
{\GU}-\RealizableG{}   is {\normalfont W}$[1]$-hard for parameter $w$.
\end{restatable}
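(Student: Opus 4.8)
The plan is to reduce from a W[1]-hard problem whose natural parameter can be encoded as the window size $w$. The most promising candidate is $k$-\textsc{Clique} (or equivalently \textsc{Multicolored Clique}), which is W[1]-hard parameterized by $k$; I would set $w = f(k)$ for a suitable polynomial $f$, so that an FPT algorithm in $w$ for \GU-\RealizableG{} would yield an FPT algorithm in $k$ for \textsc{Clique}. Given a graph $H$ with $V(H) = \{h_1,\dots,h_N\}$ in which we seek a clique of size $k$, I would build a sequence graph $G$ together with a window size $w$ (of the form $k + O(1)$, or perhaps $\binom{k}{2}+O(k)$) such that $G$ has a $w$-realization if and only if $H$ contains a $k$-clique. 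The vertex set of $G$ should consist of the vertices of $H$ plus a constant (or $O(k)$) number of auxiliary "gadget" vertices whose role is to force any realization to have a very rigid global structure.

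The key idea I would pursue is that a $w$-realization $x$ of $G$ must thread through all edges of $G$ while respecting the window constraint: two vertices are adjacent in $G$ exactly when some two of their occurrences lie within distance $w-1$ in $x$. To exploit this, I would design $G$ so that any realization is forced to contain a contiguous block of $w$ positions — call it the "selection window" — isolated from the rest of the sequence by auxiliary vertices (for instance, fresh vertices that appear only once and are non-adjacent to everything except designated anchors, so their placement pins down where the block can sit). Inside that block the realization must place occurrences of exactly $k$ of the $h_i$'s. Because all $w$ positions in that block are mutually within the window, every pair of $h_i$'s placed there becomes an edge of $G$; conversely, the construction forbids edges of $G$ between vertices of $H$ that are non-adjacent in $H$ (the only edges of $G$ among $H$-vertices mirror $E(H)$). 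Hence a length-$k$ sub-block realizable without creating a forbidden edge corresponds precisely to a $k$-clique in $H$. The auxiliary gadget must simultaneously (i) make $G$ realizable-in-principle (so the reduction is not trivially a "no"-instance), (ii) force at least $k$ distinct $H$-vertices into the selection window, and (iii) prevent $H$-vertices from "leaking" adjacencies outside the window — this last point is the delicate one, since an $H$-vertex may recur elsewhere in $x$.

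The main obstacle I anticipate is exactly controlling recurrences: in a sequence graph a symbol may appear many times, and two occurrences of $h_i$ and $h_j$ far apart in $x$ must not accidentally fall within distance $w-1$, nor may the gadget vertices needed to enforce rigidity inadvertently create $H$-edges. I would handle this with "spacer" vertices — degree-controlled vertices inserted in long monotone runs so that any two $H$-occurrences outside the selection window are separated by more than $w$ positions — and with a parity/counting argument (in the spirit of the Eulerian-type reasoning used for $w=2$, extended here) showing that the edge set of $G$ can be covered by a $w$-realization only if the covering walk is globally "straight", i.e. essentially increasing along a linear order, with a single local rearrangement allowed precisely at the selection window. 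Verifying that such a straightening lemma holds for the constructed $G$, and that it pins the window tightly enough, is where the real work lies; the combinatorial bookkeeping of the gadget (number of auxiliary vertices, their adjacencies, the exact value of $w$) then follows routinely. Finally I would check that $|V(G)|$, $|E(G)|$ and $w$ are all bounded by a polynomial in $N$ with $w$ depending only on $k$, which establishes the parameterized reduction and hence W[1]-hardness for parameter $w$.
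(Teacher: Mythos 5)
Your high-level plan coincides with the paper's: reduce from $k$-\textsc{Clique} with $w$ of the form $k+O(1)$, add auxiliary vertices to the input graph $H$, keep exactly the edges of $H$ among the original vertices, and force any realization to contain $w-1$ consecutive positions occupied by distinct, pairwise-adjacent $H$-vertices. But the proposal stops short of the one idea that makes this work, and the machinery you propose in its place is both unnecessary and unlikely to go through. The paper's gadget is just two vertices $a$ and $b$, each carrying a self-loop and adjacent to \emph{every} vertex of $H$ but \emph{not} to each other, with $w=k+1$. The backward direction then needs no ``straightening lemma'', no parity or Eulerian-type counting, and no control over where $H$-vertices recur: since $a$ and $b$ both occur in any realization and are non-adjacent, between the last occurrence of $a$ preceding the first occurrence of $b$ and that first $b$ there are at least $w-1$ positions containing neither $a$ nor $b$; these lie in a common window, so they are pairwise adjacent, and since $H$ is simple (no self-loops) they are distinct --- a clique of size $w-1=k$. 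Your proposed enforcement devices would not deliver this: in the unweighted undirected setting you cannot force a vertex to ``appear only once'', and realizations may revisit vertices arbitrarily often, so a global ``essentially increasing'' structure theorem for the covering walk is not available.

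For the forward direction (realizability when a clique exists), the self-loops on $a$ and $b$ are what play the role of your ``spacers'': the paper realizes the constructed graph by
$a^w\,u_1 u_1'\,a^w\,u_2 u_2'\,a^w\cdots a^w\; C\; b^w v_1\, b^w v_2\cdots b^w v_{|V|}$,
where $\{u_i,u_i'\}$ ranges over the edges of $H$ and $C$ lists the clique; each $a^w$ (resp.\ $b^w$) block separates consecutive edges (resp.\ vertices) so that no unintended pair of $H$-vertices ever shares a window, and the unique $a$-to-$b$ transition passes through the clique. So your instinct that separators are needed is right, but they must be reusable high-degree vertices (hence the self-loops), not single-occurrence fresh vertices, and the ``rigidity'' you seek comes locally from a single non-adjacency rather than from any global structure of the realization. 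As written, your proposal leaves the decisive construction unspecified and routes the argument through lemmas that would be very hard, if not impossible, to establish.
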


Finally, we describe in Section~\ref{sec:effective_general_algo} an integer linear formulation of \DW and \GW-\Realizable{}, and a $\mathcal{O}(n^w 2^{w\,p})$ dynamic programming algorithm for \NumRealizations{}.

\section{The special case of window size 2 
(\texorpdfstring{$w=2$}{w=2})}\label{sec:proofsTwoSeq}

In this section we present the proofs of the results gathered in Table \ref{table:summary_graph_sequences_2} and Theorem~\ref{thm:2seq_results}. Apart from the \GU variant (Lemma~\ref{results_unweighted_undirected}), we use direct reductions to standard well-known problems in graph theory. The \DU variant can be treated with a reduction to simple step graphs (cf. Definition \ref{def:simple_step}, Lemma~\ref{DU_2_characterization} and Corollary~\ref{corollary:DU_2_count}). The weighted cases (\GW and \DW) are treated with direct reductions to the problem of existence and counting of  (semi-)Eulerian paths in a graph (Lemma~\ref{lemma:count:2seq}).

\subsection{Unweighted realizability in undirected (\GU) and directed (\DU) graphs}

The following two propositions are obtained using a simple greedy algorithm: starting from any vertex of the graph, if an edge is not realized by a candidate sequence, append any path from the current last vertex of the sequence to the missing edge, and repeat until all edges are realized. 

\begin{lemma}[\GU characterization]\label{results_unweighted_undirected}
  If $G=(V,E)$ is unweighted and undirected, with $|V|>1$, the following are equivalent:\\
  \statement{(i)} $G$ is connected\\
  \statement{(ii)} $G$ has a $2$-realization\\
  \statement{(iii)} $G$ admits an infinite number of $2$-realizations. 
  
  In these conditions, a $2$-realization can start and end at any vertex.
\end{lemma}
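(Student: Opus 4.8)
The plan is to prove the cycle of implications $(iii)\Rightarrow(ii)\Rightarrow(i)\Rightarrow(iii)$, obtaining the ``arbitrary endpoints'' claim as a by‑product of the last implication. The implication $(iii)\Rightarrow(ii)$ is immediate, since an infinite family of $2$-realizations in particular contains one. For $(ii)\Rightarrow(i)$, let $x=x_1\cdots x_p$ be a $2$-realization; by Definition~\ref{definition_sequence_graphs} applied with $w=2$ we have $V=\{x_i : i\in[p]\}$ and $E=\{\{x_i,x_{i+1}\} : i\in[p-1]\}$, so $G$ is exactly the underlying graph of the walk $x_1,\dots,x_p$, which is connected (we use $|V|>1$ only to note $p\geq 2$).

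For the main implication $(i)\Rightarrow(iii)$ I would formalize the greedy‑walk idea quoted just before the lemma. Assume $G$ is connected with $|V|>1$, so $G$ has no isolated vertex, and fix arbitrary target endpoints $s,t\in V$. Initialise the candidate sequence as the single symbol $s$. While some edge $e=\{a,b\}\in E$ is not yet covered by a consecutive pair of the current sequence, let $z$ be its last symbol; since $G$ is connected, choose a walk $z=u_0,u_1,\dots,u_k=a$ in $G$ (possibly $k=0$) and append $u_1,\dots,u_k,b$. Every appended consecutive pair is an edge of $G$, so this step covers $e$ and \emph{never} introduces a pair outside $E$; since the number of uncovered edges strictly decreases, the loop terminates with a sequence $x$ whose consecutive pairs lie in $E$ and which covers all of $E$. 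As $G$ has no isolated vertex, covering every edge forces every vertex to occur, hence $V=\{x_i\}$ and $E=\{\{x_i,x_{i+1}\}\}$, i.e.\ $x$ is a $2$-realization. Appending once, at the very end, a walk of $G$ from the current last symbol to $t$ (again only traversing edges of $E$) makes $x$ start at $s$ and end at $t$, which proves the final assertion. Finally, picking any neighbour $u$ of $s$ (which exists) and, for each $m\geq 1$, prepending the block $u,s$ repeated $m$ times to such a realization yields again a $2$-realization (no new edges, same vertex set); these are pairwise distinct since they have distinct lengths, so $G$ admits infinitely many $2$-realizations.

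The only points that require care are (a) that appending walks of $G$ can never create an edge outside $E$, which is immediate, and (b) the bookkeeping ensuring that the greedy loop both terminates and produces a sequence whose sequence graph is \emph{exactly} $G$ — no missing and no extraneous edges — which is precisely what the argument above spells out. I do not expect a genuine obstacle here; the content of the lemma is essentially the correctness of this greedy construction together with the trivial padding argument for infinitude.
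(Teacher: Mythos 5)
Your overall route is the same as the paper's: a greedy walk-appending argument establishes realizability from connectivity, padding with a repeated edge gives infinitude, and the observation that a realization is a walk covering all vertices and edges gives connectivity back. The cycle of implications is traversed in the opposite direction, but the three ingredients are identical, and your write-up of the greedy loop (termination, no extraneous edges, every vertex forced to appear) and of the arbitrary-endpoints claim is a correct and more detailed version of the paper's three-sentence sketch.

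One step fails as written, namely the padding used for infinitude. You prepend the block $u,s$ repeated $m$ times to a realization that starts at $s$, so the resulting sequence reads $u\,s\,u\,s\cdots u\,s\,s\,x_2\cdots$; the junction creates the consecutive pair $\{s,s\}$, which by Definition~\ref{definition_sequence_graphs} (with $w=2$) becomes a self-loop at $s$ in the sequence graph, and this edge need not belong to $E$. The padded sequence is therefore in general not a $2$-realization of $G$. The repair is immediate: prepend $(s,u)$ repeated $m$ times instead, so the block ends in $u$ and the junction pair is $\{u,s\}\in E$, or, as the paper does, append extra copies of the final edge $ab$ at the end of the realization. With that one-line correction your proof is complete and coincides with the paper's argument.
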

\begin{proof}
    (i)$\Rightarrow$ (ii) is obtained with the greedy algorithm described above. For (ii)$\Rightarrow$(iii), simply notice that any 2-realisation ending with $ab$ can be continued with arbitrarily many additional repeats of $ab$. For (iii)$\Rightarrow$(i), a path between any two vertices can be found using any substring of any realization starting and ending with the two given vertices.
\end{proof}

The previous characterization does not extend to strongly connected digraphs. 
However, we do get the following sufficient conditions, which are not necessary, as seen in Figures~\ref{fig:counter_example_2} and~\ref{fig:counter_example_DIG_EULER}. Furthermore, they only apply in the unweighted case, as shown in Figure~\ref{counter_ex_weighted_w2}. 

\begin{proposition}\label{UW_UW2}\label{prop:sufficient_condititions_DU2}
  Let $G=(V,E)$ be an unweighted digraph. 
 
 (i) If $G$ is strongly connected then $G$ has a $2$-realization. A $2$-realization can start or end at any given vertex of $G$.
 
 (ii) If $G$ is (semi-)Eulerian then $G$ has a $2$-realization.

\label{UW_UW2b} 
\end{proposition}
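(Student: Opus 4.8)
The plan is to prove both parts by exhibiting an explicit sequence and verifying it is a $2$-realization, using greedy path-extension exactly as in the proof of Lemma~\ref{results_unweighted_undirected}, but adapted to the directed setting. Recall that $x$ is a $2$-realization of the unweighted digraph $G=(V,E)$ precisely when the consecutive pairs $(x_k,x_{k+1})$ of $x$, viewed as a set of arcs, equal $E$ \emph{and} every vertex of $V$ occurs in $x$; equivalently, $x$ traces a walk in $G$ that traverses every arc at least once and visits every vertex. So in both cases it suffices to build such a walk, with prescribed start (or end) vertex.

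\textbf{Part (i).} First I would fix any target start vertex $s$. Enumerate the arcs of $G$ as $e_1=(a_1,b_1),\dots,e_m=(a_m,b_m)$. Build the walk incrementally: start at $s$; having built a walk ending at some vertex $v$ that has already traversed $e_1,\dots,e_{i-1}$, use strong connectivity to append a directed path from $v$ to $a_i$, then traverse $e_i$ (now ending at $b_i$), and move on to $e_{i+1}$. After processing all $m$ arcs we have a walk from $s$ traversing every arc; strong connectivity also guarantees every vertex is reached (either it is an endpoint of some arc, hence visited, or — in the degenerate case $m=0$ with $|V|=1$ — the single vertex is $s$ itself; if $|V|>1$ and $G$ is strongly connected then $m\ge 1$ and every vertex has in- and out-degree $\ge 1$, so lies on some arc). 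For the ``end at any given vertex $t$'' variant, after the above construction append one more directed path from the current last vertex to $t$ (again available by strong connectivity); this adds only already-present arcs is not required — new arcs may appear, but that is fine since the realization condition only demands that the set of consecutive pairs equals $E$, and every such path uses arcs of $E$. I should double-check the edge case where $G$ is a single vertex with a loop, or a single vertex with no arcs: the latter has $E=\emptyset$ and the length-one sequence $x=s$ is the (unique) $2$-realization.

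\textbf{Part (ii).} If $G$ is (semi-)Eulerian, let $x$ be the vertex sequence of an Eulerian path (or cycle). Then the consecutive pairs of $x$ are exactly the arcs of $E$, each appearing once, so the set of consecutive pairs equals $E$; moreover an Eulerian path visits every vertex of $V$, since in a (semi-)Eulerian digraph every vertex is incident to at least one arc (isolated vertices would make it disconnected as a graph supporting an Eulerian traversal — here we assume, as is standard, that the Eulerian path is required to span $V$, equivalently $G$ has no isolated vertices). Hence $x$ is a $2$-realization. This case needs no greedy argument at all.

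\textbf{Main obstacle.} The only subtlety, and the step I would be most careful about, is the treatment of isolated or degenerate vertices and the empty/loop edge cases — i.e.\ making the statement ``every vertex appears in the sequence'' airtight in part (i) without an extra connectivity hypothesis beyond what is assumed, and similarly clarifying in part (ii) what ``(semi-)Eulerian'' means for a digraph that might have an isolated vertex. Everything else is a routine greedy concatenation of directed paths, entirely parallel to the undirected argument already given for Lemma~\ref{results_unweighted_undirected}; the figures cited (Figures~\ref{fig:counter_example_2}, \ref{fig:counter_example_DIG_EULER}, \ref{counter_ex_weighted_w2}) are only needed to justify the accompanying remark that neither condition is necessary and that both fail in the weighted case, and do not enter the proof of the proposition itself.
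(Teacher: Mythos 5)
Your proposal is correct and follows essentially the same route as the paper: part (i) is the greedy arc-by-arc walk extension using strong connectivity (the paper simply invokes the greedy algorithm stated before Lemma~\ref{results_unweighted_undirected}), and part (ii) reads off the (semi-)Eulerian path directly as a $2$-realization. Your extra care about isolated vertices, prescribed endpoints, and degenerate cases only makes explicit what the paper's two-line proof leaves implicit.
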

\begin{proof}
    Condition (i) is obtained  using the greedy algorithm described above in the proof of Lemma \ref{results_unweighted_undirected}.
    
    Condition (ii) follows from the definition: a (semi-)Eulerian path is a 2-realization in the context of unweighted graphs.
\end{proof}

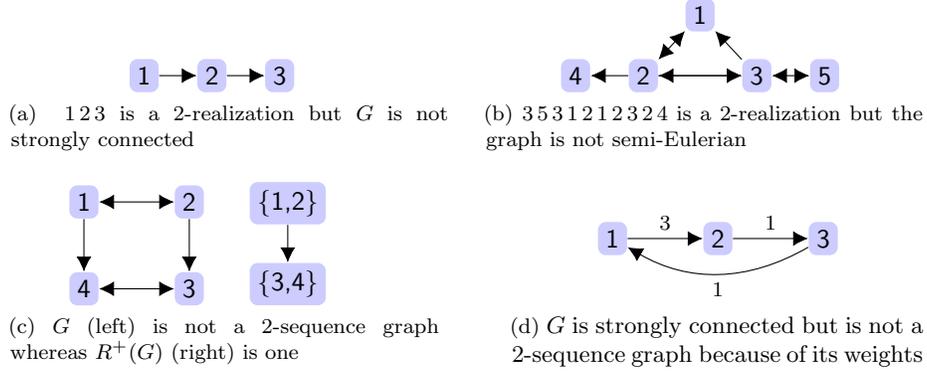
\begin{figure}[!t]
  \subfloat[ $1 \, 2 \, 3$ is a $2$-realization but $G$ is not strongly connected]{ \makebox[0.45\textwidth]{
  \hspace{1cm}
    \begin{tikzpicture}
    \begin{scope}[minimum width = 3cm]
    \node[main node] (1) {1};
    \node[main node] (2) [right = 0.5cm  of 1] {2};
    \node[main node] (3) [right = 0.5cm  of 2] {3};

    \draw[->] (1) edge node[right, yshift=5pt] {} (2);
    \draw[->] (2) edge node[above] {} (3);
    
    \end{scope}
    \end{tikzpicture}
  }
    \label{fig:counter_example_2}
  }
  \hfill
  \subfloat[$3 \, 5 \, 3 \, 1 \, 2 \, 1 \, 2 \,  3 \,  2 \, 4$ is a $2$-realization but the graph is not semi-Eulerian]{
  \makebox[0.45\textwidth]{
    \begin{tikzpicture}
    \begin{scope}[xshift=4cm]
    \node[main node] (1) {1};
    \node[main node] (2) [below left = 0.5cm  of 1] {2};
    \node[main node] (3) [below right = 0.5cm  of 1] {3};
    \node[main node] (4) [left = 0.5cm  of 2] {4};
    \node[main node] (5) [right = 0.5cm  of 3] {5};
    
    \draw[<-] (1) edge node[right, yshift=5pt] {} (3);
    \draw[<->] (1) edge node[right, yshift=5pt] {} (2);
    \draw[<->] (2) edge node[right, yshift=5pt] {} (3);
    \draw[->] (2) edge node[right, yshift=5pt] {} (4);
    \draw[<->] (3) edge node[right, yshift=5pt] {} (5);
    
    \draw[->] (2) edge node[above] {} (3);
    
    \end{scope}
    \end{tikzpicture}
    }
    
  \label{fig:counter_example_DIG_EULER}
    }

  \subfloat[\small $G$ is strongly connected but is not a $2$-sequence graph because of its weights \label{counter_ex_weighted_w2}]{\makebox[0.45\textwidth]{ 
  \begin{tikzpicture}
  \begin{scope}[xshift=4cm]
  \node[main node] (1) {1};
  \node[main node] (2) [right = 1cm  of 1] {2};
  \node[main node] (3) [right = 1cm  of 2] {3};
  
  \draw[->] (1) edge node[above] {{\footnotesize $3$}} (2);
  \draw[->] (2) edge  node[above] {{\footnotesize $1$}} (3);
  \draw[->] (3) edge[bend left=30] node[below] {{\footnotesize $1$}} (1);
  \end{scope}
  \end{tikzpicture}
  }}
    \hfill
  \subfloat[$G$ (left) is not a $2$-sequence graph whereas $R^{+}(G)$ (right) is one]{\makebox[0.45\textwidth]{ 
      \begin{tikzpicture}
      \begin{scope}[xshift=4cm]
      \node[main node] (1) {1};
      \node[main node] (2) [right= 1cm  of 1] {2};
      \node[main node] (3) [below = 0.7cm  of 2] {3};
      \node[main node] (4) [below = 0.7cm  of 1] {4};
      \draw[<->] (1) edge node[right] {} (2);
      \draw[->] (2) edge node[below] {} (3);
      \draw[<->] (3) edge node[right] {} (4);
      \draw[->] (1) edge node[right] {} (4);
      \end{scope}
      \end{tikzpicture}
   
      \begin{tikzpicture}
      \begin{scope}[minimum width=2cm]
      \node[reduced node] (c1){12};
      \node[reduced node] (2) [below= 0.5cm  of c1] {34};
      \draw[rededge, <-] (2) edge[] node[] {} (c1);
      \end{scope}
      \end{tikzpicture}
    }
    \label{counter_example_dw_2first}
    }
   
\caption{Some special cases for $w=2$, acting as counterexamples for variants of Propositions~\ref{prop:sufficient_condititions_DU2} and~\ref{prop:necessary_condititions_DU2}.   }
\end{figure}
For the special case of DAGs, we have the following simple characterization of 2-sequence graphs.
\begin{proposition}\label{prop:DU_DAG}
    If $G=(V,E)$ is a DAG, then it is a $2$-sequence graph if and only if it is a directed path, in which case $G$ has a unique 2-realization.
\end{proposition}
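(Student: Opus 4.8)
The plan is to prove both directions, with the forward direction (DAG $\Rightarrow$ directed path) being the substantive one. For the easy direction: if $G$ is a directed path $v_1 \to v_2 \to \cdots \to v_k$, then the sequence $x = v_1 v_2 \cdots v_k$ clearly has $G$ as its $2$-sequence graph, since consecutive pairs in $x$ are exactly the edges of $G$ and no vertex repeats (so no extra edges appear); uniqueness of the realization follows because any $2$-realization must be a walk covering every edge, but a directed path has a unique such walk up to the fact that $v_1$ has in-degree $0$ and $v_k$ has out-degree $0$, forcing the walk to start at $v_1$, end at $v_k$, and traverse each edge exactly once.

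For the forward direction, suppose $G=(V,E)$ is a DAG that is a $2$-sequence graph, and let $x = x_1 \cdots x_p$ be a $2$-realization. First I would observe that since $G$ is acyclic, no symbol can repeat in $x$: if $x_i = x_j$ with $i < j$, then the factors $x_i x_{i+1} \cdots x_j$ trace a closed walk in $G$ through at least two distinct vertices (or a loop), contradicting acyclicity — here one needs that consecutive symbols in a realization are distinct when there are no loops, which holds because $G$ being a DAG has no loops. Hence $p = |V| = k$ and $x$ is a permutation of $V$. Then the edge set is exactly $E = \{(x_i, x_{i+1}) : 1 \le i \le k-1\}$, which is precisely the edge set of a directed Hamiltonian path; since these are all the edges, $G$ is exactly this directed path. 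Uniqueness then follows as above, or simply because a $2$-realization determines and is determined by such a Hamiltonian-path edge list together with the endpoints, which are forced.

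The main obstacle — really the only place requiring care — is the claim that in a DAG no symbol repeats in a realization. The subtlety is ruling out the degenerate possibility $x_i = x_{i+1}$ (a "repeat" spanning a single edge): this would force a loop $(x_i, x_i) \in E$, which a DAG cannot have, so it is excluded; and any longer repeated factor yields a directed cycle of length $\ge 2$. One should also note that this is exactly where the hypothesis "DAG" (as opposed to merely "no directed cycle of length $\ge 2$") is used, since loops must also be forbidden. Everything else is bookkeeping: once $x$ is a permutation of $V$ whose consecutive-pair set equals $E$, both the path structure and uniqueness are immediate.
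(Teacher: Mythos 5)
Your proof is correct, and it reaches the conclusion by a slightly different route than the paper. The paper argues the forward direction by contrapositive on the degree structure: it characterizes a directed path as a connected DAG with in- and out-degrees at most one, assumes $G$ is not a path, extracts a vertex with two children (or two parents), and observes that the realization walk must traverse both incident arcs, forcing a return path and hence a cycle. You instead argue directly on the realization: in a DAG no symbol can repeat in a $2$-realization (a repeat at consecutive positions would be a loop, a repeat at distance $\ge 2$ a closed walk containing a directed cycle), so the realization is a permutation of $V$ and $E$ is exactly the set of consecutive pairs, i.e.\ $G$ is a Hamiltonian directed path. The underlying mechanism is the same in both proofs --- a revisited vertex in the covering walk yields a cycle --- but your decomposition is arguably cleaner: it establishes the stronger intermediate fact that the realization is a simple path, from which both the path structure of $G$ and the uniqueness of the realization fall out at once, whereas the paper handles uniqueness separately in the backward direction by appealing to Proposition~\ref{prop:sufficient_condititions_DU2}(ii) and the uniqueness of the semi-Eulerian path of a directed path. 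You also correctly flag the one point of care (ruling out loops as well as longer cycles), which the paper glosses over. No gaps.
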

\begin{proof}
The backward direction ($\Leftarrow$) is an application of Proposition~\ref{prop:sufficient_condititions_DU2}(ii): directed paths admit a unique semi-Eulerian path, that yield a unique 2-realization.

For ($\Rightarrow$), note first that a DAG is a path if and only if it is connected with in- and out-degree at most 1. 
 Let us suppose $G$ is a $2$-sequence graph and is not a directed path, then we show that it contains a cycle. Indeed,  there exists a vertex $v$ having either two children, or two parents. In the first case, denote $c_1$ and $c_2$ for the two distinct children of $v$. 
 Then there exists a walk going through both $(v, c_1)$ and $(v, c_2)$ (w.l.o.g. in this order), so $G$ also has a path from $c_1$ to $v$, which creates a cycle with $(s,c_1)$. Similarly, if $v$ has two parents $p_1$ and $p_2$, then any 2-realization yields a walk through both $(p_1,v)$ and $(p_2,v)$, and thus a cycle. 
\end{proof}

We now turn to general digraphs and give a necessary conditions for $G$ to admit a 2-realization (which is not sufficient, as shown using an example in Figures~\ref{counter_example_dw_2first}).

\begin{proposition}\label{prop:necessary_condititions_DU2}
 Let $G=(V,E)$ be a digraph. If $G$ is a $2$-sequence graph then $R(G)$ is a $2$-sequence graph.  
\end{proposition}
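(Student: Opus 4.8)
The plan is to combine Proposition~\ref{prop:DU_DAG} with a projection argument onto strongly connected components. Since $R(G)$ is a DAG by construction, Proposition~\ref{prop:DU_DAG} says that $R(G)$ is a $2$-sequence graph if and only if it is a directed path. Hence it suffices to show: if $G$ admits a $2$-realization, then its condensation $R(G)$ is a directed path, i.e.\ it is weakly connected with in- and out-degree at most~$1$.

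Let $x = x_1 \cdots x_p$ be a $2$-realization of $G$, and for a vertex $v$ of $G$ write $c(v)$ for the strongly connected component of $G$ containing $v$ (a vertex of $R(G)$). Consider the sequence of components $c(x_1), c(x_2), \ldots, c(x_p)$. For each $i$, the pair $x_i x_{i+1}$ is an edge of $G$, so either $c(x_i) = c(x_{i+1})$ or $\bigl(c(x_i), c(x_{i+1})\bigr)$ is an edge of $R(G)$; thus, after contracting maximal constant runs, $c(x_1), \ldots, c(x_p)$ is a walk in $R(G)$. The key observation is a contiguity property: for any component $u$, the set $I_u := \{\, i \mid c(x_i) = u \,\}$ is an interval of consecutive positions. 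Indeed, if $c(x_i) = c(x_j) = u$ with $i < j$ but $c(x_k) \ne u$ for some $i < k < j$, then the walk would leave $u$ and later return, contradicting the acyclicity of $R(G)$. Moreover, since $x$ realizes $G$, every vertex — hence every component — occurs in $x$, so each $I_u$ is nonempty.

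Now we read off connectivity and the degree bounds. Listing the components in order of first occurrence along $x$ gives distinct components $u_1, \ldots, u_m$ (distinct by the contiguity property), with $m$ the number of components of $G$, and consecutive ones joined by an edge of $R(G)$; hence the underlying undirected graph of $R(G)$ is connected. For the degrees: if $(u, w)$ is an edge of $R(G)$, pick an edge $(a,b)$ of $G$ with $a \in u$, $b \in w$; since $x$ realizes $G$ it traverses $(a,b)$ at some step $i$, so $c(x_i) = u$ and $c(x_{i+1}) = w \ne u$, which forces $i = \max I_u$ and $w = c(x_{\max I_u + 1})$. Thus every out-neighbour of $u$ in $R(G)$ equals $c(x_{\max I_u + 1})$, so $u$ has out-degree at most~$1$; symmetrically, every in-neighbour of $u$ equals $c(x_{\min I_u - 1})$ (and $u$ has no in-neighbour if $\min I_u = 1$), so $u$ has in-degree at most~$1$. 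Therefore $R(G)$ is a weakly connected acyclic digraph with all in- and out-degrees at most~$1$, i.e.\ a directed path, and Proposition~\ref{prop:DU_DAG} finishes the proof.

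\textbf{Main obstacle.}
The only delicate point is the contiguity property of the intervals $I_u$ and its consequence that a realization can neither exit a component more than once nor re-enter it after leaving — this is precisely where acyclicity of $R(G)$ is used, and it is the crux of the argument. Everything else is a direct unwinding of the definitions of $R(G)$ and of $2$-realization.
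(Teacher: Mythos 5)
Your proof is correct and follows essentially the same route as the paper's: both reduce, via Proposition~\ref{prop:DU_DAG}, to showing that the condensation $R(G)$ is a directed path, and both hinge on the observation that a $2$-realization cannot re-enter a strongly connected component after leaving it (acyclicity of $R(G)$). The paper argues by contradiction from a node of $R(G)$ with two children (or two parents), which is exactly your out-degree (resp.\ in-degree) bound phrased contrapositively; your contiguity-interval formulation is just a more explicit packaging of the same idea.
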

\begin{proof}
 Let $G$ be a $2$-sequence graph, and for the sake of contradiction let us suppose that $R(G)$ is not a $2$-sequence graph. Since $R(G)$ is a (weakly) connected DAG, then using Proposition~\ref{prop:DU_DAG}, it cannot be a directed path, so $R(G)$ has either a node having two children or two parents. Let us assume  without loss of generality that $R(G)$ has a node $v$ in $R(G)$ with two children $c_1$ and $c_2$. Recall that $v$, $c_1$ and $c_2$ represent three distinct connected components of $G$ (in particular three sets of vertices in $G$ that have empty intersection). Hence, there exist $v_1, v_2 \in V, w_1 \in c_1$, and $w_2 \in C_2$ such that $(v_1,w_1) \in E$ and $(v_2,w_2) \in E$. 
    Consider now the 2-realization of $G$,  assuming without loss of generality that $(v_1,w_1)$ is realized before $(v_2,w_2)$. Then there exists a path between $w_1$ and $v_2$ in $G$, which implies that $w_1$ belongs to the same component as $v_1$ and $v_2$: a contradiction.
\end{proof}

Based on the sufficient and necessary conditions above, we can now converge to a complete characterization of 2-sequence graphs in the \DU setting:
\begin{lemma}[\DU characterization]\label{DU_2_characterization}
  Let $G=(V,E)$ be an unweighted digraph. $G$ is a $2$-sequence graph if and only if it is a simple step graph. This property can be verified in linear time.
\end{lemma}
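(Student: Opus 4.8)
The plan is to prove the characterization in two directions, leveraging the structural results already established. For the forward direction (if $G$ is a $2$-sequence graph then $G$ is a simple step graph), I would argue by iterated application of Proposition~\ref{prop:necessary_condititions_DU2} together with Proposition~\ref{prop:DU_DAG}. Indeed, if $G$ admits a $2$-realization, then by Proposition~\ref{prop:necessary_condititions_DU2} so does $R(G)$; since $R(G)$ is a (weakly connected) DAG, Proposition~\ref{prop:DU_DAG} forces $R(G)$ to be a directed path. This already gives that the strongly connected components of $G$ are linearly ordered. It remains to show that consecutive components in this path are joined by exactly one edge in $G$, i.e. that $R^+(G)$ has all edge weights equal to $1$. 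For this I would use the $2$-realization $x$ directly: if two distinct edges $(a_1,b_1)$ and $(a_2,b_2)$ both go from component $u$ to the next component $u'$ in the path, then in $x$ the first occurrence of a vertex of $u'$ happens after some such crossing edge, say $(a_1,b_1)$; but then after reaching $b_1$ the walk $x$ can never return to $a_2 \in u$ (there is no path from $u'$ back to $u$ since $R(G)$ is a DAG and $u$ precedes $u'$), so the edge $(a_2,b_2)$ is never realized — contradiction. Hence $R^+(G)$ is a directed path with all weights $1$, which is precisely the definition of a simple step graph.

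For the backward direction (simple step $\Rightarrow$ $2$-sequence graph), suppose $R^+(G)$ is a directed path $C_1 \to C_2 \to \cdots \to C_m$ with every edge of weight $1$. I would build a $2$-realization explicitly: by Proposition~\ref{prop:sufficient_condititions_DU2}(i), each strongly connected component $C_i$ admits a walk that realizes all its internal edges and can be made to start at any prescribed vertex and end at any prescribed vertex of $C_i$. For $i < m$, let $(a_i, b_i)$ be the unique edge of $G$ from $C_i$ to $C_{i+1}$; note $a_i \in C_i$ and $b_i \in C_{i+1}$. Traverse $C_1$ ending at $a_1$, cross the edge to $b_1 \in C_2$, traverse $C_2$ starting at $b_1$ and ending at $a_2$, cross to $b_2$, and so on, finishing with a traversal of $C_m$ starting at $b_{m-1}$. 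Concatenating these walks yields a single walk in $G$; it realizes every internal edge of every $C_i$ and every cross-edge $(a_i,b_i)$, and since the $(a_i,b_i)$ are all the edges of $G$ that run between distinct components (there are no edges going backwards, as $R(G)$ is a DAG respecting the path order, and no ``skipping'' edges since $R^+(G)$ is exactly the path), this walk realizes all of $E$ and visits all of $V$. Thus it is a $2$-realization.

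For the linear-time claim, I would note that computing the strongly connected components of $G$ and the condensation DAG takes linear time (Tarjan's algorithm); checking that the condensation is a directed path is a linear scan of in/out-degrees and connectivity; and checking that each inter-component edge set has size exactly $1$ can be done in a single pass over $E$ while bucketing edges by the (ordered) pair of component indices of their endpoints. All of this is $O(|V| + |E|)$.

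The main obstacle I anticipate is the weight-$1$ argument in the forward direction: Propositions~\ref{prop:necessary_condititions_DU2} and~\ref{prop:DU_DAG} hand us the path structure of $R(G)$ essentially for free, but ruling out multiplicities above $1$ between consecutive components requires genuinely using the \emph{sequential} nature of a realization (once the walk leaves a component it cannot come back), rather than just a static graph property — so care is needed to make the ``never realized'' contradiction precise, in particular handling the case where one of the two parallel cross-edges shares an endpoint with the other, and making sure the argument is symmetric in whichever cross-edge is realized first.
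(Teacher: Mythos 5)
Your proposal is correct and takes essentially the same route as the paper: the forward direction combines Propositions~\ref{prop:necessary_condititions_DU2} and~\ref{prop:DU_DAG} with the observation that a realization crossing twice between two consecutive components would have to return from the later component to the earlier one (the paper phrases the resulting contradiction as $c_1\cup c_2$ becoming strongly connected, which is equivalent to your ``second edge never realized'' formulation), and the backward direction is the same greedy concatenation of component walks through the unique cross-edges, with the linear-time claim handled identically via the strongly connected components decomposition. The only cosmetic difference is that you ask each component walk to both start and end at prescribed vertices, whereas the paper only invokes Proposition~\ref{prop:sufficient_condititions_DU2}(i) and then appends an extra walk to reach the source of the next cross-edge; both are immediate from strong connectivity.
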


\begin{proof}
  If $G$ is a $2$-sequence graph,  $R(G)$ is a $2$-sequence graph using Proposition~\ref{prop:necessary_condititions_DU2}. Proposition~\ref{prop:DU_DAG} implies that $R(G)$ and $R^{+}(G)$ are directed paths. Moreover, if $R^{+}(G)$ has an edge with weight greater that $1$, then there would be more than one edge between two strongly connected components $c_1$ and $c_2$. All these edges go in the same direction otherwise $c_1 \cup c_2$ would form a strongly connected component. This is a contradiction since any $2$-realization would have to go from $c_1$ to $c_2$ and then come back to $c_1$ (or conversely), which would make $c_1 \cup c_2$ a strongly connected component.

  Conversely, let us suppose $R^{+}(G)$ is a directed path and its weights are equal to one. By definition, there exists a list of sets of vertices $\mathcal{P}=(c_1, ..., c_p)$ such that:  
  \begin{itemize}
  \item[(i)] the entries of $\mathcal{P}$ form a partition of $V(G)$, i.e. $c_i \subset V(G)$ with $\lvert c_i\rvert \geq 1$, $\bigcup_{i \in \{1, \cdots, p\}} c_i = V(G) $ and for any $i \neq j$, $c_i \cap c_j = \emptyset$.
  
   \item[(ii)] For any $i\in \{1, \cdots, p-1\}$, there exists a unique 
  edge $(u,v)$ of $G$ with $u\in c_i$ and $v\in c_{i+1}$.
    \end{itemize} We construct a $2$-realization $y$ for $G$ by means of the following procedure.

  Base case:  $c_1$ is a strongly connected component of $G$, we initialize $y$ with any $2$-realizations of $c_1$ (which exists by Proposition~\ref{prop:sufficient_condititions_DU2}(i)).

   For $i \in \{1,..,p-1\}$: let $e=(v, w)$ be the single edge with $v\in c_i$ and $w\in c_{i+1}$. By construction, all the edges induced by $c_i$ have already been added to $y$. Suppose at the previous step the last vertex added is $z \in c_i$. 
    We first add all vertices of a walk starting at $z$ (excluded) and ending at $v$. Then, consider a walk starting at $w$ (included) and which visits every edge of $c_{i+1}$ (again using  Proposition~\ref{prop:sufficient_condititions_DU2}(i)). We add all vertices of this walk after $v$.   
    
 The process stops when $i=p-1$, and all edges of $G$ are realized by $y$.
 
Finally, note that verifying if a graph is simple step is linear, as it directly follows from a strongly connected components decomposition. 
\end{proof}

A direct consequence of Lemma~\ref{DU_2_characterization} is the following:

\begin{corollary}\label{corollary:DU_2_count}
  Let $G$ be an unweighted digraph. The possible numbers of $2$-realizations for $G$ are only 0, 1 and $+\infty$. Moreover, $G$ admits a unique  $2$-realization if and only if $G$ is a directed path.
\end{corollary}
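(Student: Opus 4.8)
The plan is to combine the characterization of $\DU$ $2$-sequence graphs (Lemma~\ref{DU_2_characterization}) with the DAG case (Proposition~\ref{prop:DU_DAG}) and a short pumping argument, organized as a three-way case split on $G$. First, if $G$ is \emph{not} a simple step graph, then by Lemma~\ref{DU_2_characterization} it has no $2$-realization at all, contributing the value $0$. So for the rest of the argument I would assume $G$ is a simple step graph; by Definition~\ref{def:simple_step}, $R^{+}(G)$ is then a directed path all of whose edges have weight $1$, so the strongly connected components of $G$ are linearly ordered $c_1,\dots,c_p$ with exactly one edge of $G$ between consecutive components.

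Second, I would distinguish whether $G$ is a directed path. If it is, Proposition~\ref{prop:DU_DAG} applies and gives that $G$ has a \emph{unique} $2$-realization, contributing the value $1$. If instead $G$ is a simple step graph but not a directed path, I claim that $G$ contains a directed closed walk of length at least $1$. Indeed, were $G$ acyclic, every $c_i$ would be a single vertex carrying no self-loop, and since $R^{+}(G)$ is a directed path with unit weights this would force $G$ itself to be a directed path, a contradiction; hence $G$ has a directed cycle, where a self-loop counts as a cycle of length $1$. Fix such a closed walk $W = z\,a_1\,\cdots\,a_k\,z$ with $k\ge 0$, based at some vertex $z\in V$.

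Third, I would manufacture infinitely many realizations by pumping $W$. Start from any $2$-realization $x$ of $G$, which exists since $G$ is simple step (Lemma~\ref{DU_2_characterization}); since $z\in V$, the vertex $z$ occurs in $x$ by definition of a realization, say $x = x_1\cdots x_{j-1}\,z\,x_{j+1}\cdots$. Inserting the block $a_1\cdots a_k\,z$ immediately after this occurrence yields $x' = x_1\cdots x_{j-1}\,z\,a_1\cdots a_k\,z\,x_{j+1}\cdots$; the only new consecutive pairs created are precisely the edges of $W$, all already present in $E$, no new vertex is introduced, and multiplicities are irrelevant in the unweighted setting, so $\phi_2(x') = G$. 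Iterating this insertion $m$ times for $m=0,1,2,\dots$ produces sequences of pairwise distinct lengths, all realizing $G$, so $G$ has $+\infty$ realizations in this case. This exhausts all cases, proving both that the number of $2$-realizations always lies in $\{0,1,+\infty\}$ and that it equals $1$ exactly when $G$ is a directed path (the converse being Proposition~\ref{prop:DU_DAG}).

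I do not expect a serious obstacle here: the only care needed is the bookkeeping in the pumping step, namely checking that the inserted block introduces no adjacency outside $E$ and no new vertex, and noticing that the degenerate situations — a one-vertex graph, or a vertex with a self-loop — are already subsumed by the ``closed walk of length $\ge 1$'' case.
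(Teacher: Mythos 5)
Your proof is correct and follows essentially the same route as the paper's: dispose of the non-realizable case, invoke Proposition~\ref{prop:DU_DAG} when $G$ is a directed path, and otherwise pump a directed cycle (or self-loop) into an existing realization to produce arbitrarily long realizations. The only cosmetic difference is that you organize the case split around the simple-step characterization of Lemma~\ref{DU_2_characterization}, whereas the paper splits directly on whether $G$ is a DAG; the substance of the argument is identical.
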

\begin{proof}
First, if $G$ is a DAG, then by Proposition~\ref{prop:DU_DAG} it has either zero or a unique $2$-realization (exactly one if it is a directed path). If $G$ is not a DAG, $G$ has a cycle $u_0 u_1 \ldots (u_\ell = u_0)$ (possibly with $\ell=1$ in case of self-loops) and admits a $2$-realization~$y$. Then $y$ has at least one occurrence of $u_0$, and a strictly longer 2-realization $y'$ can be obtained by inserting $u_0 \ldots u_{\ell-1}$ just before any occurrence of $u_0$. Therefore $G$ has infinitely many $2$-realizations.
\end{proof}

\subsection{Weighted realizability in undirected (\GW) and directed (\DW) graphs}\label{sec:2:weighted}

The weighted cases (\GW and \DW) cannot be treated similarly due to the weight constraints implying that a weighted graph has a finite number of realizations (as was seen in   Figure~\ref{counter_ex_weighted_w2}). However, in this setting, we can use (semi-)Eulerian paths to obtain the desired result.

\begin{theorem}\label{results_weighted_2}
  If $G$ is a weighted graph (possibly directed), with weight matrix $\Pi(G)$, then: $G$ is 2-realizable if and only if $\psi(G)$ is connected and (semi-)Eulerian. 
\end{theorem}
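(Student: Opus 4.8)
The plan is to reduce the statement to the classical notion of an Eulerian trail by setting up a correspondence between $2$-realizations of $(G,\Pi)$ and Eulerian trails of the multigraph $\psi(G)$. Given an arbitrary sequence $x = x_1 x_2 \cdots x_p$, I would associate to it the walk $W_x = (x_1, x_2, \ldots, x_p)$, so that each consecutive pair $x_i x_{i+1}$ traverses exactly one edge. With $w=2$ the window condition in Definition~\ref{definition_sequence_graphs} (namely $0<|k-k'|\le 1$ in the undirected case, and $k<k'<k+2$ in the directed case) forces $k'=k+1$, so that, by Equation~\eqref{undirected_weighted_seq_graph_def} and its directed analogue, the weight $\pi_{\{u,v\}}$ (resp.\ $\pi_{(u,v)}$) is exactly the number of indices $i$ at which $W_x$ traverses the edge $\{u,v\}$ (resp.\ $(u,v)$); moreover an edge of $G$ has weight $\ge 1$ by definition. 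Hence $x$ is a $2$-realization of $(G,\Pi)$ if and only if (a) $W_x$ traverses each edge $e$ of $G$ exactly $\pi_e$ times and traverses no other edge, i.e.\ $W_x$ is an Eulerian trail of $\psi(G)$, and (b) every vertex of $V$ occurs in $x$, i.e.\ $W_x$ visits every vertex of $\psi(G)$.

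For the forward direction, I would start from a $2$-realization $x$: by the correspondence, $W_x$ is an Eulerian trail of $\psi(G)$, which is in particular a (semi-)Eulerian path in the sense of Definition~\ref{DEF:EULER}, so $\psi(G)$ is (semi-)Eulerian. Since $W_x$ is a \emph{single} walk visiting all of $V$ and traversing all edges of $G$, the visited vertices together with the traversed edges induce a connected subgraph of $\psi(G)$ equal to $\psi(G)$ itself; hence $\psi(G)$ is connected. (The degenerate case $|V|=1$ is immediate: $\psi(G)$ is then a single vertex, carrying only the self-loops of $G$, which is connected and trivially (semi-)Eulerian.)

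For the converse, assume $\psi(G)$ is connected and (semi-)Eulerian, take an Eulerian trail $T=(y_1,y_2,\ldots,y_q)$, and read off the sequence $y = y_1 y_2 \cdots y_q$. Since $T$ traverses each of the $\pi_e$ parallel copies of every edge $e$ exactly once and no other edge, the weighted sequence graph of $y$ has edge set exactly $E$ with weights exactly $\Pi$. Its vertex set is the set of vertices visited by $T$, and since $\psi(G)$ is connected it is either a single vertex (visited by $T$) or has no isolated vertex, in which case $T$, covering all edges, covers all vertices; either way this set equals $V$. Thus $y$ is a $2$-realization of $(G,\Pi)$, which closes the equivalence.

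Finally I would note that the directed and undirected cases run in parallel: one only replaces the unordered pair $\{x_i,x_{i+1}\}$ by the ordered pair $(x_i,x_{i+1})$ — the directed window condition $k<k'<k+w$ makes $W_x$ respect edge orientation — and ``Eulerian trail'' is read with the corresponding in-/out-degree balance, which is already packaged inside Definition~\ref{DEF:EULER}. The substance is entirely the equivalence ``$2$-realization $\Leftrightarrow$ Eulerian trail of $\psi(G)$ visiting every vertex''; the step from there to the stated condition is immediate since being ``(semi-)Eulerian'' is \emph{defined} as admitting such a trail. I expect no real obstacle, only bookkeeping: correctly tracking edge orientations in the directed case, disposing of the degenerate single-vertex/self-loop situations, and observing that it is precisely connectivity of $\psi(G)$ (equivalently of $G$) that prevents an Eulerian trail from missing an isolated vertex of $V$.
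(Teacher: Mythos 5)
Your proposal is correct and follows essentially the same route as the paper: both rest on the correspondence between $2$-realizations of $(G,\Pi)$ and (semi-)Eulerian paths of the multigraph $\psi(G)$, obtained by reading a realization as a walk that traverses each edge $e$ exactly $\pi_e$ times. The paper merely packages this as a stronger counting statement (Lemma~\ref{lemma:count:2seq}, giving $|\mathcal{E}| = |\mathcal{S}|\prod_{e\in E}\pi_e!$ by quotienting Eulerian paths by permutations of parallel edge copies) and deduces the theorem as a corollary, whereas you argue the existence equivalence directly.
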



This theorem follows from the following stronger result, that also relates the number of 2-realizations to the number of (semi-)Eulerian paths of $\psi(G)$.

\begin{lemma}\label{lemma:count:2seq}
    Let $G=((V,E), \Pi)$ be a weighted $2$-sequence graph  (possibly directed). Let $\mathcal{E}$ be the set of (semi-)Eulerian paths of $\psi(G)$ and $\mathcal{S}$ be the set of $2$-realizations of $G$. Then
    $$
                |\mathcal{E}| = | \mathcal{S} | \prod_{e\in E}{\pi_e !} 
   $$
\end{lemma}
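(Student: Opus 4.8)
The plan is to realize $\mathcal E$ as a fibered cover of $\mathcal S$, with all fibers of size $\prod_{e\in E}\pi_e!$, and then just sum the fiber sizes. Concretely, I would view an element of $\mathcal E$ as a walk $f=(y_1,g_1,y_2,g_2,\dots,g_m,y_{m+1})$ of the multigraph $\psi(G)$, where $m=\sum_{e\in E}\pi_e$ is the number of edges of $\psi(G)$, each $g_i$ is one of the $\pi_e$ parallel copies of the $G$-edge $e$ joining $y_i$ and $y_{i+1}$ (the ordered pair $(y_i,y_{i+1})$ in the directed case), and each edge of $\psi(G)$ appears exactly once among $g_1,\dots,g_m$; and I would let $\rho(f):=y_1y_2\cdots y_{m+1}$ be the underlying vertex word. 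The two things to prove are then: (a) $\rho$ maps $\mathcal E$ \emph{onto} $\mathcal S$, and (b) for every $s\in\mathcal S$ the fiber $\rho^{-1}(s)$ has exactly $\prod_{e\in E}\pi_e!$ elements. Granting these, $\mathcal E=\bigsqcup_{s\in\mathcal S}\rho^{-1}(s)$ gives $|\mathcal E|=|\mathcal S|\prod_{e\in E}\pi_e!$. (The case $E=\varnothing$ is degenerate: $G$ being a $2$-sequence graph forces $|V|=1$ and both sides equal $1$; so assume $E\neq\varnothing$, in which case $\psi(G)$ is connected with no isolated vertex since $G$, being a $2$-sequence graph, is.)

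For (a), given $f\in\mathcal E$, I would check that $\rho(f)$ is a $2$-realization of $(G,\Pi)$: every consecutive pair of $\rho(f)$ is the $G$-edge underlying some $g_i$, the number of indices $i$ whose underlying edge is $e$ equals the number of copies of $e$ used, which is $\pi_e$ (each of the $\pi_e$ copies of $e$ is used exactly once), and every vertex of $G=V(\psi(G))$ is visited since $\psi(G)$ has no isolated vertex; by Definition~\ref{definition_sequence_graphs} with $w=2$ this says exactly that the weighted $2$-sequence graph of $\rho(f)$ is $(G,\Pi)$, i.e.\ $\rho(f)\in\mathcal S$. Conversely, given $s=s_1s_2\cdots s_{m+1}\in\mathcal S$, for $e\in E$ let $P_e=\{\,i\in[m]:\{s_i,s_{i+1}\}=e\,\}$ ($(s_i,s_{i+1})=e$ in the directed case): by the very definition of a $2$-realization of $(G,\Pi)$ we have $|P_e|=\pi_e$ and the $P_e$ partition $[m]$, so choosing for each $e$ an arbitrary bijection from $P_e$ onto the set of $\pi_e$ copies of $e$ in $\psi(G)$ turns $s$ into a walk of $\psi(G)$ using each multigraph edge exactly once, i.e.\ an $f\in\mathcal E$ with $\rho(f)=s$.

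For (b), I would observe that, by the construction of $\rho$, an element of $\rho^{-1}(s)$ is exactly a choice — for each step $i\in[m]$ — of which parallel copy of the underlying $G$-edge is traversed, subject to each copy being used exactly once; this is the same data as a family of bijections $\big(P_e\to\{\text{copies of }e\text{ in }\psi(G)\}\big)_{e\in E}$, chosen independently, so $|\rho^{-1}(s)|=\prod_{e\in E}\pi_e!$ irrespective of $s$. Summing over $s\in\mathcal S$ finishes the proof. The only point that is not pure bookkeeping — hence the main obstacle — is the verification in (a) that the walk built from $s$ is a genuine (semi-)Eulerian walk of $\psi(G)$, covering every vertex and traversing every multigraph edge once; this is precisely where the hypothesis that $G$ is a bona fide weighted $2$-sequence graph (so that $\psi(G)$ is connected with no isolated vertices, and $|P_e|=\pi_e$) is used. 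The directed case is handled verbatim, replacing every unordered pair $\{s_i,s_{i+1}\}$ by the ordered pair $(s_i,s_{i+1})$.
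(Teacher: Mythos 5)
Your proposal is correct and follows essentially the same route as the paper: both exhibit the projection from (semi-)Eulerian paths of $\psi(G)$ to their underlying vertex sequences, show it is a surjection onto $\mathcal S$, and count $\prod_{e\in E}\pi_e!$ preimages per realization (the paper phrases the fibers as equivalence classes of a relation identifying permutations of parallel copies, which is the same count). Your write-up is if anything slightly more careful about the degenerate case $E=\varnothing$ and about making the fiber bijections explicit.
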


\begin{proof}
First note that (semi-)Eulerian paths of $\psi(G)$ (writing $h$ for the number of edges in $\psi(G)$) can be characterized by a pair $(u_0 u_1 \ldots u_h, e_1\ldots e_h)$ where each $u_i$ is a vertex of $G$, $e_1\ldots e_h$ is a permutation  of the edges of $\psi(G)$, and $e_i=(u_{i-1},u_i)$ (directed case) or $e_i=\{u_{i-1},u_i\}$ (undirected case). Note that  $u_0 u_1 \ldots u_h$ is a 2-realization of $G$, and that, conversely, a (semi-)Eulerian path can be obtained from any $u_0 u_1 \ldots u_h$ by taking $e_i$ to be one copy of $(u_{i-1},u_i)$ or $e_i=\{u_{i-1},u_i\}$ for each $i$ (the path indeed goes through all $\pi_{e}$ copies of each edge $e$ in $\psi(G)$ by definition of weighted 2-realizations).

Consider the map:  
\begin{equation}
            \begin{alignedat}{2}
    f: \mathcal{E} & \longrightarrow \mathcal{S}       \\
    (u_0 u_1 \ldots u_h, e_1\ldots e_h) & \mapsto   (u_0 u_1 \ldots u_h)
    \end{alignedat}
    \end{equation}
     We have already noted that $f$ is surjective  (visiting multiple copies of the same edge in different orders give the same 2-realization but with different (semi-)Eulerian paths).
   An element $x \in \mathcal{E}$ can be seen as a list of edges of $G$, each appearing $\pi_e$ times, since each edge  $\psi(G)$ is obtained by copying $\pi_e$ times every edge of $G$. Therefore this map is not injective, as soon as there is one $\pi_e > 1$, because one can permute the corresponding edges in the (semi-)Eulerian path, and the corresponding $2$-sequence is the same.
   
    We thus consider the following relation $\sim$ on $\mathcal{E}$: For two (semi-)Eulerian paths $P_1$ and $P_2$, $P_1 \sim P_2 \iff$ $P_1$ can be obtained from $P_2$ by permuting edges of $\psi(G)$ that are copies of the same edge in $G$. $\sim$ is an equivalence relation because it is symmetric, transitive and reflexive.  Let $\mathcal{E}/\sim$ be $\mathcal{E}$ quotiented by $\sim$. We have $ P_1 \sim P_2 \iff f(P_1) = f(P_2) $ (equivalently,  $P_1$ and $P_2$ yield the same sequence of vertices), so $|\mathcal{S}|$ is the number of equivalence classes of $\sim$, or equivalently, $|\mathcal{E}/\sim|$. Note that each equivalence class of $\sim$ has cardinality $\prod_{e\in E}\pi_e !$ (number of permutations which are product of permutations with disjoint supports, where each support has size $\pi_e$).
  Therefore $|\mathcal S| = |\mathcal{E}/\sim| = |\mathcal{E}| (\prod_{e\in E}\pi_e)^{-1}$.
\end{proof}

On the one hand, counting the number of (semi-)Eulerian paths in a undirected graph is a  $\#P$-complete problem \cite{brightwell2005counting}. Since $G \mapsto \psi(G)$ is bijective, counting the number of $2$-realizations is also $\#P$-complete in the \GW setting. On the other hand, for the \DW setting, counting (semi-)Eulerian paths of a weighted digraph is in P, and can be derived using the following proposition (writing $\deg_{\psi(G)}(v)$ for the  indegree of a vertex $v$ in $\psi(G)$, i.e.  $ \deg_{\psi(G)}(v) = \sum_{\substack{u \in V }}\pi_{(u,v)}  $): 

\begin{proposition}\label{count_2}
  Let $G=(V,E)$ be a weighted digraph, with $\Pi(G)$ an $n\times n$ matrix of integers. Then, the number $p_2$ of $2$-realizations is given by
  \begin{align}\label{formula_2:count}
  &\text{-If $\psi(G)$ is Eulerian,} \qquad p_2 =  \frac{t(\psi(G))}{\prod_{e\in E} \pi_e! } \prod_{v\in V} \bigl(\deg_{\psi(G)}(\psi(v))-1\bigr)!
 \end{align}
 where $t(G)$ is the number of spanning trees of a graph $G$. If $L$ is the Laplacian matrix of $G$ and $Sp(L)$ the set of eigenvalues of $L$, then 
  \begin{equation*}
  t(G)=\prod_{\substack{\lambda_i \in Sp(L) \\ \lambda_i \neq 0}} \lambda_i
  \end{equation*}
 
    - If $\psi(G)$ is semi-Eulerian, make it Eulerian by adding one arc $(u,v)$ between the two vertices with unbalanced degrees ($u$ is the one with the least outdegree, $v$ has the least indegree). Then apply Formula \ref{formula_2:count} to $\tilde{\psi}(G):=\psi(G) + (u,v)$, and divide the output by the number of vertices $\lvert V \rvert $.

\end{proposition}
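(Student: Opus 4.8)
The plan is to reduce the count of $2$-realizations to a count of Eulerian circuits of the multigraph $\psi(G)$, and then to invoke two classical theorems as black boxes: the BEST theorem, which counts Eulerian circuits of an Eulerian digraph in terms of spanning arborescences, and the directed Matrix--Tree theorem (Tutte), which turns the arborescence count into the spectral quantity appearing in the statement. The starting point is Lemma~\ref{lemma:count:2seq}, which already gives $p_2=|\mathcal S|=|\mathcal E|/\prod_{e\in E}\pi_e!$, where $\mathcal E$ is the set of (semi-)Eulerian paths of $\psi(G)$, each viewed as a vertex sequence with a designated first vertex that traverses every arc of $\psi(G)$ exactly once; so it remains to evaluate $|\mathcal E|$.

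In the Eulerian case I would argue as follows. Every element of $\mathcal E$ is a closed walk, and grouping the elements of $\mathcal E$ by the cyclic Eulerian circuit they unroll (keeping the $\pi_e$ parallel copies of each arc distinct, as in $\psi(G)$) partitions $\mathcal E$ into classes of equal size $\sum_{e\in E}\pi_e$: an Eulerian circuit uses each arc of $\psi(G)$ exactly once, hence has no nontrivial rotational symmetry, so its cyclic edge sequence unrolls into exactly $\sum_{e\in E}\pi_e$ distinct linear vertex sequences. Thus $|\mathcal E|=\bigl(\sum_{e\in E}\pi_e\bigr)\cdot\mathrm{ec}(\psi(G))$, with $\mathrm{ec}$ the number of Eulerian circuits up to rotation, and the BEST theorem gives $\mathrm{ec}(\psi(G))=t_r(\psi(G))\prod_{v\in V}(\deg_{\psi(G)}(v)-1)!$, where $t_r(\psi(G))$ is the number of spanning arborescences rooted at any vertex $r$ (independent of $r$ precisely because $\psi(G)$ is Eulerian, with parallel copies counted as distinct). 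Next I would rewrite $t_r(\psi(G))$ spectrally: by the directed Matrix--Tree theorem it is a principal minor of the Laplacian $L=D-A$ of $\psi(G)$ obtained by deleting one row and the corresponding column; for an Eulerian digraph all such principal minors are equal, so each equals $\tfrac1{|V|}$ times their sum, i.e. $\tfrac1{|V|}$ times the degree-$(|V|-1)$ elementary symmetric function of the eigenvalues of $L$, which equals $\tfrac1{|V|}\prod_{\lambda_i\in Sp(L),\ \lambda_i\neq0}\lambda_i$ since $0$ is a simple eigenvalue of $L$ (as $\psi(G)$ is connected). Assembling these pieces produces Formula~\ref{formula_2:count}.

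For the semi-Eulerian case, $\psi(G)$ has exactly two unbalanced vertices; writing $u$ for the one of smaller out-degree and $v$ for the one of smaller in-degree, every (semi-)Eulerian path of $\psi(G)$ runs from $v$ to $u$, and adding the arc $(u,v)$ gives an Eulerian digraph $\tilde\psi(G)$ that is itself the $\psi$ of a digraph, so Formula~\ref{formula_2:count} applies to it with the denominator $\prod_e\pi_e!$ unchanged (the new arc has weight one). Deleting that arc is a bijection between Eulerian circuits of $\tilde\psi(G)$ and (semi-)Eulerian paths of $\psi(G)$ from $v$ to $u$, and pushing the unrolling count of the previous paragraph through this bijection shows that $p_2$ equals the value of Formula~\ref{formula_2:count} on $\tilde\psi(G)$ divided by $|V|$. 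The hard part throughout will be the bookkeeping of starting points and arc multiplicities: matching the set $\mathcal E$ of vertex sequences (with a free start) to the cyclic objects counted by BEST, tracking the $\pi_e$'s consistently inside both the unrolling factor and the arborescence count, and carrying that same accounting across the edge-addition reduction so that the final division by $|V|$ is exactly right.
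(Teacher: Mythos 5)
Your plan follows exactly the same route as the paper's own proof, which is essentially a one-line appeal to Lemma~\ref{lemma:count:2seq}, the BEST theorem and the Matrix--Tree theorem for the Eulerian case, plus the arc-addition bijection for the semi-Eulerian case; you have simply expanded that outline, and all of your intermediate claims are individually correct. The gap is in the very last step of the Eulerian case, ``assembling these pieces produces Formula~\ref{formula_2:count}'': it does not. Writing $h=\sum_{e\in E}\pi_e$ for the number of arcs of $\psi(G)$, your own accounting gives $|\mathcal E| = h\cdot t_r(\psi(G))\prod_{v}(\deg_{\psi(G)}(v)-1)!$ (each cyclic Eulerian circuit unrolls into $h$ linear sequences, one per starting arc) together with $t_r(\psi(G))=\tfrac{1}{|V|}\prod_{\lambda\neq 0}\lambda$. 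Multiplying these out and dividing by $\prod_e \pi_e!$ yields $p_2=\tfrac{h}{|V|}\cdot\tfrac{t(\psi(G))}{\prod_e\pi_e!}\prod_v(\deg_{\psi(G)}(v)-1)!$ with $t$ as defined spectrally in the statement, which agrees with Formula~\ref{formula_2:count} only when $h=|V|$. A concrete check: take $V=\{1,2\}$ with arcs $(1,2)$ and $(2,1)$, each of weight $2$. There are exactly two $2$-realizations ($12121$ and $21212$), but the displayed formula evaluates to $\tfrac{4}{2!\,2!}\cdot 0!\cdot 0!=1$.

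In other words, the ``bookkeeping of starting points'' that you explicitly defer is precisely where the argument breaks: the number of linearizations of a cyclic Eulerian circuit is the number of arcs of $\psi(G)$, not the number of vertices, and rotations are indeed counted as distinct realizations elsewhere in the paper (Corollary~\ref{corollary:DW_GW_2_count} counts $n$ realizations for the unit-weight directed $n$-cycle, a case where $h=|V|$ masks the discrepancy). Your semi-Eulerian paragraph, by contrast, does close correctly: a semi-Eulerian path carries its own starting point, corresponds to exactly one cyclic Eulerian circuit of $\tilde{\psi}(G)$, and the division by $|V|$ exactly cancels the $|V|$ hidden in the spectral definition of $t$. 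To turn the Eulerian case into a complete proof you must carry the factor $h$ through to the end, at which point you will find that the formula needs $\sum_{e\in E}\pi_e$ where it implicitly has $|V|$ --- so the issue lies as much with the statement as with your write-up, but a proof that claims to derive the stated formula cannot silently identify the two.
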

\begin{proof}
  The case of $\psi(G)$ being Eulerian is a direct consequence of Lemma \ref{lemma:count:2seq}, BEST Theorem~\cite{de1951circuits} and Matrix Tree Theorem~\cite{chaiken1982combinatorial}.

 When $\psi(G)$ is semi-Eulerian, this follows from the fact that $\psi(G)$ is semi-Eulerian if and only if $\psi(G) + (u,v)$ is Eulerian where: $u$ is the the vertex whose outdegree is less than its indegree, and $v$ is the vertex whose indegree is less than its outdegree. In that case, the number of semi-Eulerian paths of $\psi(G)$ is exactly the number of Eulerian paths of $\psi(G)+(u,v)$ divided by $\lvert \psi(G) \rvert = \lvert V \rvert$ (since for one semi-Eulerian path in $\psi(G)$ there are exactly $\lvert V \rvert $ Eulerian paths in $\psi(G) + (u,v)$).
  \end{proof}

\begin{corollary}\label{corollary:DW_GW_2_count}
  Let $G$ be a weighted graph (directed or undirected). For every non-negative integer $n$, there exists a sequence graph having $n$ $2$-realizations. 
\end{corollary}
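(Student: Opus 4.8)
## Proof proposal for Corollary~\ref{corollary:DW_GW_2_count}

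The plan is to exhibit, for each non-negative integer $n$, an explicit small weighted graph whose number of $2$-realizations is exactly $n$, treating the edge case $n=0$ separately from $n \geq 1$. For $n=0$, any weighted graph whose underlying multigraph $\psi(G)$ is disconnected (or fails the degree parity/balance condition) works, by Theorem~\ref{results_weighted_2}; for instance two disjoint edges. The substance is in realizing every $n \geq 1$.

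For $n \geq 1$, I would work in the \DW setting (a directed weighted graph also being a valid instance for the general statement) and use Proposition~\ref{count_2} as a calculator. The idea is to take a very simple skeleton digraph — say a small "theta"-like digraph or a bouquet of two directed cycles sharing a single vertex — and tune the edge multiplicities $\pi_e$ so that the closed-form count $p_2 = \dfrac{t(\psi(G))}{\prod_e \pi_e!}\prod_{v} (\deg_{\psi(G)}(v)-1)!$ evaluates to $n$. A clean choice: let $G$ have two vertices $a,b$ with a single arc $(a,b)$ of weight $1$ and a single arc $(b,a)$ of weight $1$, giving $p_2=1$ (this handles $n=1$). To reach larger values, add a self-loop at $a$ with weight $k$: then $\deg_{\psi(G)}(a) = k+1$, the spanning-tree count of $\psi(G)$ stays $1$ (the loop does not affect it, and the two arcs between $a,b$ give a unique spanning tree), $\prod_e \pi_e! = k!$, and $\prod_v(\deg-1)! = k!\cdot 0!$, so $p_2 = k!/k! = 1$ — not enough spread. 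So instead I would use a vertex with two outgoing loop-like cycles: e.g. two parallel arcs from $a$ to $b$ and two from $b$ to $a$, with weights chosen so that the degree factorials in the numerator outpace $\prod_e \pi_e!$ in the denominator. Concretely, with $a$ of in/out-degree $m$ one gets a factor $(m-1)!$ in the numerator against $\prod \pi_e!$ in the denominator, and by distributing a total degree $m$ among few edges one can make the ratio sweep through a range of integers; iterating or combining two such gadgets at a shared vertex multiplies the counts (the \DW count is multiplicative over blocks joined at a cut vertex, since an Eulerian circuit decomposes accordingly). Combining a gadget that produces any value in $\{1,\dots,N\}$ with independent "multiply by a prime" gadgets, one can hit every positive integer.

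The cleanest route, and the one I would actually write up, avoids fiddly arithmetic: induct on $n$. Given a weighted digraph $G_n$ with exactly $n$ $2$-realizations, having (say) a vertex $v$ that is visited by every $2$-realization, attach at $v$ a new "choice" gadget — a pair of vertices $b_1,b_2$ with arcs $v\to b_i$, $b_i\to v$ of weight $1$ for $i=1,2$ plus one extra arc — designed so that an Eulerian circuit of the enlarged $\psi$ must traverse the gadget in one of exactly two relative orders at $v$, multiplying the realization count appropriately; more simply, increasing one multiplicity by one in a controlled spot changes the count from $n$ to $n+1$ via the $\prod(\deg-1)!/\prod\pi_e!$ bookkeeping of Proposition~\ref{count_2}. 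The main obstacle is precisely this: ensuring the gadget increments (rather than multiplies or jumps) the count, which requires checking that adding the new edge raises exactly one vertex-degree factorial while raising $\prod_e \pi_e!$ by a compensating but strictly smaller factor — a short but genuine case analysis with the BEST-theorem formula. Once a "+1 gadget" is verified, the corollary follows by induction from the base case $n=1$ (and $n=0$ handled directly), and since every such instance is directed, the statement holds a fortiori for the weighted (directed or undirected) claim as stated.
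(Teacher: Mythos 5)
There is a genuine gap: your proposal never actually lands on a verified construction. For the directed case you sketch several candidate gadgets, discard some, and end with an inductive ``$+1$ gadget'' whose key property --- that attaching it increments the count by exactly one rather than multiplying or jumping it --- you explicitly flag as the unresolved obstacle. That verification is the entire content of the claim, so as written the argument is a plan, not a proof. Moreover the heavy machinery (BEST theorem bookkeeping via Proposition~\ref{count_2}, multiplicativity over cut vertices) is unnecessary: the paper simply takes the directed cycle $C_n$ with unit weights, whose $2$-realizations are exactly the $n$ rotations of the cyclic word, one per starting vertex. That one-line construction settles every $n\geq 2$ in the \DW setting without any Eulerian-path counting.

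The second gap is the undirected case, which you dismiss with ``a fortiori.'' The corollary (and the paper's proof) asserts the existence of such an instance in \emph{each} setting, directed and undirected, and a directed example does not produce an undirected one. The undirected case is also where the real work lies: the undirected cycle $C_{n/2}$ only yields even counts (each cyclic rotation can be traversed in two directions), so odd $n$ requires a different graph. The paper handles $n=2p+1$ with an explicit path $v_1 - v_2 - \cdots - v_p - x$ whose internal edges have weight $2$ and whose endpoint $x$ carries a weight-$1$ self-loop, and then enumerates all realizations by a case analysis on the number of occurrences of $x$. Some construction of this kind, together with a proof that it admits \emph{exactly} $n$ realizations and not more, is needed to close the statement; your proposal contains neither the construction nor the exactness argument for the undirected odd case.
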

\begin{proof}
    Let $n \geq 0$ be an integer. The case $n=0$  and $n =1$ are trivial, so we suppose now that $n > 1$. In the directed case, simply consider the oriented cycle  $C_n$ on $n$ vertices where all edges have weight $1$. Then, $C_n$ has exactly $n$ realizations where each sequence is determined by one of the $n$ starting vertices.

    In the undirected case, if $n$ is even, the (undirected) cycle with unit weights $C_{\frac{n}{2}}$  gives $n$ $2$-realizations (both directions are now allowed). If $n=2p+1$ with $p \in \mathbb{N}$, consider  the  sequences defined as follows and their reverse:
    \begin{align*}
        & v_1 v_2   \cdots v_{p} \, x x \, v_{p} \cdots v_2 v_1 &(1)\\
      &  v_{i+1}  v_i \cdots v_2 v_1 v_2  \cdots  v_p \, x \, x  \,  v_{p} \cdots v_{i+1}  \quad \forall i \in [p-1] &(2)\\
      & x \, v_p v_{p-1} \cdots v_2 v_1 v_2  \cdots  v_p \, x \, x  &(3) 
    \end{align*}
    This represents a total of $1+2(p-1)+2= n$ sequences (since $(1)$ is its own reverse). 
    First observe that all these sequences yield the same sequence graph $G$, shown below
    \begin{center}
    \begin{tikzpicture}
  \begin{scope}[xshift=4cm]
    \node[] at (-1,0) {$G$};
  \node[main node] (1) {$x$};
  \node[main node] (2) [right = 1cm  of 1] {$v_p$};
  \node[main node] (3) [right = 1cm  of 2] {$v_{p-1}$};
\node[main node] (4) [right = 1cm  of 3] {...};
\node[main node] (5) [right = 1cm  of 4] {$v_1$};
   \draw [->] (1) edge[loop above] node{{\footnotesize $1$}} (1); 
  \draw[-] (1) edge node[above] {{\footnotesize $2$}} (2);
  \draw[-] (3) edge node[above]{{\footnotesize $2$}} (2);
\draw[-] (4) edge node[above]{{\footnotesize $2$}} (3);
\draw[-] (5) edge node[above]{{\footnotesize $2$}} (4);
  \end{scope}
  \end{tikzpicture}
\end{center}

Thus $G$ admits at least $n$ $2$-realizations. It remains to show that every $2$-realization of $G$ is of one of the forms $(1)$, $(2)$, $(3)$ above. This is based on the observation that the vertex $x$ must appear twice or three times. 
    
    If $x$ appears exactly twice, it has to appear next to itself and  next to $v_p$ twice, so the sequence contains the subsequence $v_p \, x \, x \, v_p$.  We are exactly in case $(1)$ or $(2)$ as the rest of the sequence has to verify the adjacency constraints between the $v_i$'s.
        
    If $x$ appears exactly three times, then it has one occurrence as the first element and another as the last element. Moreover,  it has to appear next to itself as $x \, x$ (either for the first or last occurrence).  The rest of the sequence is then entirely determined by the adjacency constraints and we are in case $(3)$.
\end{proof}

\section{Complexity and algorithms for general window sizes (\texorpdfstring{$w\geq 3$}{w>=3})}\label{w_3_study}

The characterization of general sequence graphs differs from the one of $2$-sequence graphs: the undirected graph in Figure~\ref{fig:counter_example_undirected_3}
satisfies the conditions of Lemma~\ref{results_unweighted_undirected} but has no 3-realization,  and similarly in a directed setting the graph in Figure~\ref{fig:counter_example_directed_3} satisfies the conditions of Lemma~\ref{lemmaDUrealizable} but is not 3-realizable. 

In fact, there is no simple characterization of realizable graphs for larger window size, and most variants of \Realizable{} are already NP-hard for window size 3 (Section~\ref{sec:nphardnessproofs}). However, we do get a polynomial-time algorithm for the \GU variant for any fixed window size that we present first in Section~\ref{sec:GU_XP_alg}, matched with a parameterized complexity lower bound shown in Section~\ref{sec:GU_W1_hard}.

\begin{figure}[ht]
  \centering
  \subfloat[\small $G$ is connected but not a $3$-sequence graph\label{fig:counter_example_undirected_3}]{
    \begin{tikzpicture}
      \begin{scope}
        \node[main node] (1) {1};
        \node[main node] (2) [right = 1cm  of 1] {2};
        \node[main node] (3) [right = 1cm  of 2] {3};
        \draw[-] (1) edge node[right] {} (2);
        \draw[-] (2) edge node[below] {} (3);
      \end{scope}
    \end{tikzpicture}
  }\qquad \qquad
  \subfloat[\small $G$ is strongly connected but is not a $3$-sequence graph \label{fig:counter_example_directed_3}]{
    \centering
    \begin{tikzpicture}
      \begin{scope}
        \node[main node] (1) {1};
        \node[main node] (2) [right = 1cm  of 1] {2};
        \node[main node] (3) [right = 1cm  of 2] {3};
        \draw[<->] (1) edge node[right] {} (2);
        \draw[<->] (2) edge node[below] {} (3);
      \end{scope}
    \end{tikzpicture}
  }
  \caption{Non-realizable graphs for window size 3. Indeed, any 3-realization of length at least 3 in an undirected (resp. directed ) graph yields either a self-edge (resp. self-loop) or a clique (resp. tournament) of size $3$, and these graphs have neither.} 
  
\end{figure}

\subsection{Parameterized results for undirected unweighted graphs (\GU)}
\label{sec:GU_results}
For undirected unweighted graphs (\GU), we describe the construction of a size-$n^w$ auxiliary graph reducing the question of realizability to mere connexity, thus giving a slicewise polynomial (XP) algorithm for the $w$ parameter. We then show that such $n^w$ factor in the complexity is unavoidable due to a W[1]-hardness reduction from \Clique.

\subsubsection{Slicewise polynomial (XP) algorithm for {\GU-\Realizable{}}}
\label{sec:GU_XP_alg}

We introduce the following auxiliary graph used in our polynomial time algorithm for \GU-\Realizable{w}.
See Figure~\ref{fig:illustration_def} for an example.
\begin{definition}\label{def:aux_graph_H}
   Let $G=(V,E)$ be an undirected graph and $k\geq 2$. We write $v_{1:k}$ as a shorthand for a length-$k$ string of nodes $v_1\ldots v_k$.
   Let $H^{k}(G)=(V^{(k)},E^{(k)})$ be the undirected graph with
  \begin{align*}
V^{(k)}& = \{v_{1:k} \mid \forall 1\leq i<j\leq k, \{v_i,v_j\}\in E \}\}\\
       E^{(k)}&=\{\{u_{1:k},v_{1:k}\}\mid u_{2:k} = v_{1:k{-}1}\text{ and } \{u_1, v_k\}\in E\}   
  \end{align*}
\end{definition}

{
  \begin{figure}[H]
    \captionsetup[subfigure]{justification=centering}
    	\centering
\centering
\subfloat[$G$]
{\begin{tikzpicture}
	\begin{scope}[xshift=4cm]
	\node[main node] (1) {1};
	\node[main node] (2) at (0,1) {2};
	\node[main node] (3) at (0,-1) {3};
	\draw [ -] (1) edge[out=10, in=-10, looseness=15] (1);
	\draw[-] (1) edge node[right] {} (2);
	\draw[-] (3) edge node[right]{} (1);	
	\end{scope}
	\end{tikzpicture}}\qquad \qquad
\subfloat[$H^{(2)}$]
{\begin{tikzpicture}
	\begin{scope}[xshift=4cm]
	\node[main node]  (11) at (0,0) {11};	
	\node[main node]  (13) at (-.5,-1) {13};
	\node[main node]  (31) at (.5,-1) {31};
		\node[main node]  (21) at (.5,1) {21};
	\node[main node]  (12) at (-.5,1	) {12};
	
	\draw[red, -] (11) edge node[right] {} (13);
	\draw[red, -] (13) edge node[right] {} (31);
	\draw[red, -] (31) edge node[right] {} (11);
	\draw[red, -] (11) edge node[right] {} (21);
	\draw[red, -] (11) -- (12); 
    \draw[red, -] (12) -- (21);
    \draw [red, -] (11) edge[out=10, in=-10, looseness=15] (11);
	
	\end{scope}
	\end{tikzpicture}}
	\qquad \qquad
\subfloat[$H^{(3)}$]
{\begin{tikzpicture}
	\begin{scope}[xshift=4cm]

	\node[main node]  (131) at (0,-1) {131};
	\node[main node]  (311) at (-1,-.5) {311};
    \node[main node] (113) at (1,-.5) {113};
	\node[main node]  (111) at (0,0) {111};
    \node[main node] (211) at (-1,.5) {211};

	\node[main node]  (121) at (0,1) {121};
    \node[main node] (112) at (1,.5) {112};

	\draw[red, -] (113) edge (131);
	
    \draw[red, -] (311) edge (111);

    \draw[red, -] (111) edge (113);

    \draw[red, -] (111) edge (112);

    \draw[red, -] (211) edge (111);

    \draw[red, -] (121) edge (211);

    \draw[red, -] (112) edge (121);

    \draw[red, -] (131) edge (311);

    \draw [red, -] (111) edge[out=10, in=-10, looseness=15] (111);
 
	\end{scope}
	\end{tikzpicture}}

\subfloat[A walk visiting all arcs of (the single connected component of) $H^{(3)}$ and its underlying string that is a 4-realization of $G$ by Lemma~\ref{lem:GU_reverse}.]
{\begin{tikzpicture}
		\begin{scope}[shift={(-5.5,0)}]
		\foreach \u [count=\i] in {111,211,121,112,111,111,113,131,311,111} {
			\node[main node] (y\i) at (\i*.9,0) {\u};
		}
	   \foreach \v [count=\u] in {2,...,10} {
	    \draw[red] (y\u) -- (y\v);
	}
\end{scope}
\begin{scope}[shift={(-4.2,0)}]	
\foreach \i/\off [count=\u] in {1/0,4/0,7/0,10/0,13/0,14/1,15/2,16/0,17/1,18/2}{
	\pgfmathsetmacro{\y}{-.77+\off*.1}
	\draw[blue!80!black, line width=2pt] (\i*.4-.07,\y) --  (\i*.4+.8+.07,\y);
	\draw[blue!80!black, opacity=.5] (y\u) -- (\i*.4+.3,\y+.02);
}
		\foreach \x [count=\i] in {1,1,1,2,1,1,1,2,1,1,1,2,1,1,1,1,3,1,1,1}
		{	\node (x\i) at (\i*.4,-1) {\x};
	}
\end{scope}
\end{tikzpicture}}
    \caption{Top: example of construction of the auxiliary graphs $H^{(k)}$ for $k=2$ and $k=3$. Bottom: conversion of a walk in $H^{(3)}$ into a $4$-realization of $G$.  } \label{fig:illustration_def}
  \end{figure}  
}
We now show that finding a $w$-realization of $G$ is equivalent to finding a connected component of the auxiliary graph that \emph{covers} all edges, as defined below. With this step we remove the need to consider long permutations of vertices, thus reducing the combinatorics to the size of $H$ (that is, $n^{O(w)}$).

\begin{definition}
  An edge $\{x,x'\}$ of $G$ is \emph{covered} by a vertex $y\in V^{(k)}$ if $x=y_i$ and $x'=y_{j}$ for some $1\leq i<j\leq k$; it is covered by an edge $\{y,y'\}$ in $E^{(k)}$ if $y= x_{1:k}$, $y'=x'_{1:k}$ with $x=x_1$, $x'=x'_k$ and $x_{2:k}=x'_{1,{k-1}}$.
  Edge $\{x,x'\}$ is covered by a subgraph of $H^{(k)}$ if it is covered by at least one vertex or one edge in this subgraph.
\end{definition}
\begin{lemma}\label{lem:GU_forward}
  If $G$ has a $w$-realization, then $H^{(w-1)}$ has a connected component covering all edges.
\end{lemma}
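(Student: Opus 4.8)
The plan is to take a $w$-realization $x = x_1 \ldots x_p$ of $G$ and extract from it a walk in $H^{(w-1)}$ whose vertices and edges jointly cover every edge of $G$; since a walk lives inside a single connected component, this will immediately give the conclusion. Concretely, for each position $i$ with $1 \le i \le p - w + 2$ I would form the length-$(w{-}1)$ string $s_i := x_i x_{i+1} \ldots x_{i+w-2}$, i.e.\ the $i$-th sliding window of width $w-1$. The sequence $s_1, s_2, \ldots, s_{p-w+2}$ will be the desired walk.

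First I would check that each $s_i$ is genuinely a vertex of $H^{(w-1)}$: by Definition \ref{def:aux_graph_H} this requires $\{x_{i+a}, x_{i+b}\} \in E$ for all $0 \le a < b \le w-2$, and this holds because $x$ is a $w$-realization — any two positions within distance $w-1$ contribute an edge to $G$ (Statement \eqref{undirected_seq_graph_def}). Next I would check that consecutive strings $s_i$ and $s_{i+1}$ form an edge of $H^{(w-1)}$: writing $u = s_i$ and $v = s_{i+1}$, we have $u_{2:w-1} = x_{i+1}\ldots x_{i+w-2} = v_{1:w-2}$, and $\{u_1, v_{w-1}\} = \{x_i, x_{i+w-1}\} \in E$ again because these positions are within distance $w-1$. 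So $\{s_i, s_{i+1}\} \in E^{(w-1)}$, and $s_1 s_2 \ldots s_{p-w+2}$ is a walk in $H^{(w-1)}$, hence contained in one connected component $C$.

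It then remains to argue that $C$ covers every edge of $G$. Given an edge $\{x_k, x_{k'}\} \in E$ with (say) $k < k'$, by \eqref{undirected_seq_graph_def} there exist positions $k, k'$ in $x$ with $x_k = x_k$, $x_{k'} = x_{k'}$ and $0 < k' - k \le w - 1$ realizing it. I would split into two cases. If $p \le w - 1$, the single string $x_1 \ldots x_p$ (padded if necessary, but in fact $p \ge 2$ for any edge to exist and one can take $i=1$ after noting the window fits) is a vertex of $H^{(w-1)}$ covering all edges at once; this degenerate case should be handled separately, or absorbed by allowing $s_1$ to be the whole sequence. Otherwise, pick $i = \min(k,\, p-w+2)$ so that both $k$ and $k'$ lie in $\{i, i+1, \ldots, i+w-2\}$ (using $k' - k \le w-1$ and $k' \le p$); then the edge is covered by the vertex $s_i$ of $C$, by the definition of "covered by a vertex." This exhausts all edges, so $C$ covers all edges of $G$.

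The only subtle point — and the step I would be most careful about — is the bookkeeping at the boundary: ensuring that for \emph{every} edge one can choose a window index $i$ in the valid range $[1, p-w+2]$ that contains both endpoint positions, and handling short realizations $p < w-1$ where no window of full width exists. Both are routine once stated precisely: for $p \ge w-1$ the choice $i = \min(k, p-w+2)$ works since $k' \le k + w - 1 \le i + w - 1 \le p$ forces $k' \le i + w - 2$ when $i = p-w+2$, and $k \ge i$ holds by construction; for $p < w-1$ the whole string is a single vertex of $H^{(w-1)}$ (its underlying substring-closure condition is satisfied since $x$ is a $w$-realization and all pairs are within distance $w-1$), trivially covering everything. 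No genuine combinatorial obstacle arises here — the content of the auxiliary-graph construction is exactly that it makes this direction easy.
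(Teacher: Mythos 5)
Your overall strategy — slide a width-$(w{-}1)$ window along the realization to obtain a walk in $H^{(w-1)}$, then argue that this walk's component covers every edge — is exactly the paper's approach, and your verification that each $s_i$ is a vertex of $H^{(w-1)}$ and that consecutive windows form edges of $E^{(w-1)}$ is correct. However, your covering argument has a genuine gap: you claim that for any edge of $G$ realized at positions $k < k'$ with $k' - k \le w-1$, one can choose $i$ so that both $k$ and $k'$ lie in $\{i,\dots,i+w-2\}$, and conclude that the edge is covered by the \emph{vertex} $s_i$. This fails precisely when $k' - k = w-1$: a window of width $w-1$ spans positions at most $w-2$ apart, so no single vertex of $H^{(w-1)}$ can contain both endpoints. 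Concretely, for $w=3$ and the realization $1\,2\,3$ of the triangle, the edge $\{1,3\}$ is realized only at distance $2 = w-1$ and is covered by no vertex of $H^{(2)}$ (the vertices of the walk are $12$ and $23$); it is covered only by the \emph{edge} $\{12,23\}$ of $H^{(2)}$. This is exactly why the definition of ``covered'' includes coverage by edges of $H^{(k)}$, and the paper's proof splits into three cases accordingly: pairs at distance at most $w-2$ are covered by a vertex $y_i$, pairs at distance exactly $w-1$ are covered by the edge $\{y_i, y_{i+1}\}$, and pairs whose left endpoint falls past the last full window are covered by the final vertex $y_{\ell-k+1}$. You announce in your plan that vertices and edges ``jointly'' cover all edges, but the argument you actually give never invokes edge-coverage, and the arithmetic claim supporting the vertex-only version is false.

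The fix is short: when $k' - k = w-1$, observe that the edge $\{x_k, x_{k'}\}$ is by definition covered by the edge $\{s_k, s_{k+1}\}$ of the walk, which lies in the same connected component. A secondary, more minor issue: in the degenerate case $p < w-1$ you propose to use ``the whole string'' as a vertex of $H^{(w-1)}$, but vertices of $H^{(w-1)}$ are strings of length exactly $w-1$, so a shorter realization is not itself a vertex; this boundary case needs a different (if trivial) treatment.
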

\begin{proof}
Write $k=w-1$.
 Let $P=x_{1:\ell}$ be a $w$-realization of $G$. For each $i\in [\ell-k+1]$, let $y_i=x_{i:i+k-1}$. Since $P$ is a $w$-realization, we have $\{x_p,x_q\}\in E$ for each $i\leq p< q\leq i+w-1$, so $y_i$ is a vertex of $V^{(k)}$ for each $i\in [\ell-k+1]$ and $(y_i, y_{i+1})$ is an edge of $E^{(k)}$ for each $i\in [\ell-k]$. 
 Thus, $V'=\{y_1, \ldots, y_{\ell-k+1}\}$ induces a connected subgraph of $H^{(k)}$. 
 Moreover, each edge $\{x,x'\}$ is realized in $P$ with $x=x_i$ and $x'=x_j$, $i+1\leq j\leq i+w-1$.
 We distinguish three cases:
(a)  if $j\leq i+k-1=i+w-2$ and $i\leq \ell-k+1$, then  $\{x,x'\}$ is covered by vertex $y_i \in V'$. (b) if $j=i+w-1$, then  $\{x,x'\}$ is covered by $\{y_i, y_{i+1}\}$ in $H^{(k)}[V']$. (c) if $i>\ell-k+1$, then  $\{x,x'\}$ is covered by vertex $y_{\ell-k+1} \in V'$.

Overall, $\{x,x'\}$ is covered by the connected subgraph $H^{(k)}[V']$, so some connected component of $H^{(k)}$ covers all edges of $G$.
\end{proof}
The following definition gives the main algorithmic tool to build a $w$-realization from any walk in the auxiliary graph.

\begin{definition}
  
  Let  $\{y,y'\}\in E^{(k)}$ with  $y=x_{1:k}$, $y'=x'_{1:k}$ and $x_{2:k}=x'_{1:k-1}$. 
  The \emph{addition $a_{y}(y')$ of $y'$ with respect to $y$} is the single-character string  $x'_{k}$.   
  The \emph{addition $a_{y'}(y)$ of $y$ with respect to $y'$} is the length-$k$ string  $x_{1:k}$.  
  Given a walk $P=(y_1,\ldots, y_\ell)$ of $H^{(k)}$, the \emph{underlying sequence of $P$}, denoted $S_P$ is the concatenation $y_1\cdot a_{y_1}(y_2)\cdot a_{y_2}(y_3)\cdots a_{y_{\ell-1}}(y_\ell)$.
\end{definition}
\begin{lemma}\label{lem:GU_reverse}
  
  For a path $P$, the underlying sequence of $P$ realizes (with window size $k+1$) exactly the edges covered by the subgraph $H^{(k)}[P]$.
\end{lemma}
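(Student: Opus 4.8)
The plan is to turn the statement into a bookkeeping identity between pairs of nearby positions of the underlying string $S_P$ and the two ways an edge of $G$ can be \emph{covered} (by a single vertex of $H^{(k)}[P]$, or by an edge of $H^{(k)}[P]$). Throughout I write $k=w-1$, $P=(y_1,\ldots,y_\ell)$ with each $y_i$ a length-$k$ string over $V(G)$, and $S_P=s_1\cdots s_m$ with $m=k+\ell-1$.

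The first step is to pin down the window structure of $S_P$: I would show by induction on $i$ that the length-$k$ factor $s_i s_{i+1}\cdots s_{i+k-1}$ equals $y_i$ for every $i\in\{1,\ldots,\ell\}$, and that these $\ell$ factors are exactly the length-$k$ factors of $S_P$ (there are $m-k+1=\ell$ of them). The induction is immediate from the definition of $S_P$: the addition $a_{y_i}(y_{i+1})$ is the single symbol $(y_{i+1})_k$, and since $\{y_i,y_{i+1}\}$ is an edge of $H^{(k)}$ traversed from $y_i$ to $y_{i+1}$ we have $(y_i)_{2:k}=(y_{i+1})_{1:k-1}$. The useful consequence is that $(y_i)_a=s_{i+a-1}$ for all $i\in\{1,\ldots,\ell\}$ and $a\in\{1,\ldots,k\}$.

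Then I would prove the two inclusions, much in the spirit of the proof of Lemma~\ref{lem:GU_forward}. For ``$S_P$ realizes $\{x,x'\}$ $\Rightarrow$ $\{x,x'\}$ is covered'': a realized edge comes from positions $p<q$ of $S_P$ with $q-p\le k$ and $\{s_p,s_q\}=\{x,x'\}$ (allowing $x=x'$ for a self-loop); if $q-p\le k-1$ then, taking $i=\max(1,q-k+1)$, both positions lie inside the single window $y_i$, so $\{x,x'\}$ is covered by the vertex $y_i$; if $q-p=k$ then $q=p+k\le m$ forces $p\le\ell-1$, so $y_p$ and $y_{p+1}$ are consecutive vertices of $P$, $s_p=(y_p)_1$ and $s_q=(y_{p+1})_k$, and $\{x,x'\}$ is covered by the edge $\{y_p,y_{p+1}\}$ of $H^{(k)}[P]$. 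Conversely, a vertex $y_i$ of $P$ covering $\{x,x'\}$ exhibits $x$ and $x'$ as $(y_i)_a=s_{i+a-1}$ and $(y_i)_b=s_{i+b-1}$ with $a<b$, hence at distance $b-a\le k-1\le w-1$ in $S_P$, so the edge is realized; and an edge of $P$ covering $\{x,x'\}$ is some $\{y_i,y_{i+1}\}$, which, read in the orientation for which the length-$(k-1)$ overlap holds, places $x$ and $x'$ at positions $i$ and $i+k$ of $S_P$, again within distance $w-1$. The two inclusions give the claimed equality.

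The point that needs care is the boundary case $q-p=k$ together with the two ends of $S_P$: the window identity from the first step is exactly what guarantees that such a pair always straddles two genuinely consecutive vertices of $P$ and, conversely, that no pair of positions at distance larger than $k$ can produce a ``spurious'' edge outside the coverage of $P$. I would also flag that ``$H^{(k)}[P]$'' has to be read as the subgraph carried by the walk $P$ (its vertices and its traversed edges) rather than the subgraph of $H^{(k)}$ induced by $\{y_1,\ldots,y_\ell\}$: an induced chord $\{y_i,y_j\}$ with $j>i+1$ can cover an edge of $G$ that $S_P$ does not realize, which would break the word ``exactly''. Finally, the degenerate cases — $x=x'$, and the rare alternating configurations in which both orientations of a consecutive edge $\{y_i,y_{i+1}\}$ are admissible — require no separate argument: they are subsumed by the same position count (in the second of those the two symbols sit at distance $k-2\le w-1$).
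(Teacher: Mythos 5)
There is a genuine gap, and it lies at the very foundation of your bookkeeping: you assume that each step of the walk appends exactly one character, so that $|S_P|=k+\ell-1$ and $(y_i)_a=s_{i+a-1}$ for all $i,a$. This is false in general, because $H^{(k)}$ is \emph{undirected}: an edge $\{u_{1:k},v_{1:k}\}\in E^{(k)}$ only guarantees the shift relation $u_{2:k}=v_{1:k-1}$ for \emph{one} labelling of its endpoints, and the walk $P$ may traverse it against that orientation. The paper's definition of the addition accounts for exactly this: $a_y(y')$ is a single character only in the ``forward'' case; in the ``backward'' case it is the full length-$k$ string, so that step appends $k$ characters. Your sentence ``since $\{y_i,y_{i+1}\}$ is an edge of $H^{(k)}$ traversed from $y_i$ to $y_{i+1}$ we have $(y_i)_{2:k}=(y_{i+1})_{1:k-1}$'' is precisely where the argument breaks. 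The worked example in Figure~5(d) of the paper makes this concrete: the walk has $\ell=10$ and $k=3$, yet the underlying string has $20$ characters, not $12$, because several edges are traversed backward. Once the position correspondence fails, both inclusions of your argument collapse: the length-$k$ factors of $S_P$ are no longer exactly the $y_i$, and a realized pair at distance $\leq k$ can straddle a backward addition, where showing it is covered requires the case analysis on pairs $\{x_i,x'_j\}$ with $j<i$ that the paper carries out (each such pair is covered by $y$, by $y'$, or by the edge $\{y,y'\}$, depending on whether $j>1$, $i<k$, or $(i,j)=(k,1)$).

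The paper avoids your global indexing altogether: it proves a purely local claim --- for each single edge traversal, the string $y\cdot a_y(y')$ starts with $y$, ends with $y'$, and realizes exactly the edges covered by $y$, $y'$ and $\{y,y'\}$ --- with an explicit forward/backward dichotomy, and then concatenates. Your approach could be salvaged only by restricting to walks that respect every edge's shift orientation, which is not something one can assume (and is exactly why the directed variant in Remark~\ref{rem:GU_to_DU} becomes harder). Two of your side remarks are sound and worth keeping: reading $H^{(k)}[P]$ as the subgraph carried by the walk (its vertices and traversed edges) rather than the induced subgraph is the correct interpretation, and the outer inclusion ``covered $\Rightarrow$ realized'' for vertex-covered edges does go through as you state it. But the core of the proof needs the forward/backward case distinction that your write-up omits.
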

\begin{proof}
  We first prove the following claim: for any $\{y,y'\}\in E^{(k)}$, the string $y\cdot a_{y}(y')$ ends with $y'$, and realizes exactly the  edges covered by $y$, $y'$ and $\{y,y'\}$ with window $k+1$.
  
  Let $x_{1:k}=y$ and $x'_{1:k}=y'$.
  We distinguish the \emph{forward} case with  $x_{2:k}=x'_{1:{k-1}}$ from the \emph{backward} case with  $x'_{1:{k-1}}=x_{2:{k}}$. 
  
  First note that string $y\cdot a_{y}(y')$ starts with $y$ (by construction) and ends with $y'$ (this is clear in the backward case, and follows from $x_{2:k}=x'_{1:{k-1}}$ in the forward case). Thus, the first size-$k$ window of $y\cdot a_{y}(y')$ realizes edges $\{x_i,x_j\mid 1\leq i<j\leq k\}$ which are exactly edges covered by $y$. Similarly, the last size-$k$ window of $y\cdot a_{y}(y')$ realizes edges $\{x'_i,x'_j\mid 1\leq i<j\leq k\}$ which are exactly edges covered by $y'$. 
  
  We now enumerate the remaining edges realized by size-$(k+1)$ windows of $y\cdot a_{y}(y')$. In the forward case, $\{x_1, x'_{k}\}$ is the only such remaining edge, and it is the one covered by $\{y,y'\}$.
  In the backward case, the remaining edges are those of the form  $\{x_i, x'_{j}\}$ with $1 \leq j<i\leq k$.  
  For $i=k$ and $j=1$,  $\{x_i,x'_j\}$ is the edge covered by $\{y,y'\}$. For $j>1$,  $\{x_i, x'_{j}\}=\{x_i, x_{j-1}\}$ is covered by $y$, and finally for $i<k$,  $\{x_i, x'_{j}\}=\{x'_{i+1}, x'_{j}\}$ is covered by $y'$.
  Overall, the edges realized by all size-$(k+1)$ windows of $y\cdot a_{y}(y')$ are exactly those covered by $y$, $y'$ and $\{y,y'\}$. This completes the proof of the claim.

  Now for the  Lemma statement, let $P=(y_1,\ldots, y_\ell)$ and $S_P$ be the underlying sequence of $P$. Note that $S_P$ contains all  $y_ia_{y_i}(y_{i+1})$ as substrings by construction and each size-$(k+1)$ widows of $S_P$ appear as a size-$(k+1)$ window of some $y_ia_{y_i}(y_{i+1})$. Thus $S_p$ realizes (with window size $k+1$) exactly the edges realized by some  $y_ia_{y_i}(y_{i+1})$, which in turn are exactly the edges covered by vertices $y_i$, $1\leq i\leq \ell$ and edges $\{y_i, y_{i+1}\}$, i.e. edges covered by the subgraph $H^{(k)}[P]$.
\end{proof}

\begin{lemma}\label{lem:GU_XP}
  A graph $G=(V,E)$ is $w$-realizable if and only if $H^{(w-1)}$ has a connected component covering all edges. Moreover, a realization (if any) can be computed in time $O(n^w)$.
\end{lemma}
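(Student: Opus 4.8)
The plan is to assemble the equivalence from the two preceding lemmas and then argue the running time. First I would prove the two directions of the characterization. For the forward direction, if $G$ is $w$-realizable, then Lemma~\ref{lem:GU_forward} immediately gives a connected component of $H^{(w-1)}$ covering all edges of $G$. For the reverse direction, suppose some connected component $C$ of $H^{(w-1)}$ covers all edges of $G$. Since $C$ is connected, it admits a walk $P$ visiting every vertex and traversing every edge of $C$ at least once (for instance, a depth-first traversal of a spanning tree of $C$, doubling tree edges, or simply concatenating paths to each uncovered edge as in the greedy argument of Lemma~\ref{results_unweighted_undirected}); in particular the subgraph $H^{(w-1)}[P]$ equals $C$ and hence covers all edges of $G$. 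By Lemma~\ref{lem:GU_reverse}, the underlying sequence $S_P$ realizes, with window size $(w-1)+1 = w$, exactly the edges covered by $H^{(w-1)}[P]=C$, which is all of $E$. We must also check that $S_P$ introduces no spurious edge and uses exactly the vertex set $V$: by Lemma~\ref{lem:GU_reverse} the set of realized edges is \emph{exactly} the covered edges, so no edge outside $E$ appears; and every vertex of $G$ lies in some edge (assuming $|V|>1$; isolated vertices would need the standard separate treatment, or one notes $H^{(w-1)}$ would then have a lone self-looped vertex), hence appears in $S_P$. Therefore $S_P$ is a $w$-realization of $G$.

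Next I would address the complexity claim. The vertex set $V^{(w-1)}$ has at most $n^{w-1}$ elements, and each is adjacent to at most $n$ others, so $H^{(w-1)}$ has $O(n^{w-1})$ vertices and $O(n^{w})$ edges; it can be built in $O(n^{w})$ time by enumerating candidate strings $v_{1:w-1}$, checking the clique condition, and listing out-neighbours. Deciding whether a connected component covers all edges of $G$ is done by computing connected components (linear in the size of $H^{(w-1)}$) and, for each component, marking which edges of $G$ it covers — each vertex $y\in V^{(w-1)}$ covers $O(w^2)$ edges and each edge of $H^{(w-1)}$ covers one additional edge, so the total marking work is $O(w^2 n^{w-1} + n^{w}) = O(n^{w})$ for fixed $w$. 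If a covering component $C$ is found, extract a spanning tree, form the closed walk $P$ that doubles each tree edge (length $O(|C|) = O(n^{w-1})$), and output $S_P$, whose length is $O(n^{w-1})$ and which is computed in linear time in that length by the concatenation in the definition of the underlying sequence. All steps are $O(n^{w})$.

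The main obstacle I anticipate is not any single step but making the reverse direction airtight: one must be careful that the walk $P$ chosen inside the covering component $C$ really has $H^{(w-1)}[P] = C$ (so that Lemma~\ref{lem:GU_reverse} yields \emph{all} edges, not just some), which is why the walk must traverse every edge of $C$, not merely visit every vertex; and one must invoke the ``exactly'' clause of Lemma~\ref{lem:GU_reverse} to rule out spurious edges, together with a remark handling the degenerate cases ($|V|\le 1$, isolated vertices, or realizations shorter than $w-1$ characters) consistently with Definition~\ref{definition_sequence_graphs}. Everything else is bookkeeping on the size of the auxiliary graph.
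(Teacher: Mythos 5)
Your proof takes essentially the same route as the paper's: forward direction from Lemma~\ref{lem:GU_forward}, reverse direction by taking a walk through the covering component and applying Lemma~\ref{lem:GU_reverse}, and the time bound from the $O(n^w)$ size of $H^{(w-1)}$. One internal inconsistency to fix: in your complexity paragraph you construct $P$ by doubling the edges of a spanning tree of $C$, but such a walk traverses only the tree edges, so $H^{(w-1)}[P]$ would be a proper subgraph of $C$ and could miss edges of $G$ that are covered only by non-tree edges of $C$ --- exactly the pitfall you yourself flag at the end. The correct construction is the one you mention first (and the one the paper uses): a DFS-style walk that traverses \emph{every} edge of $C$; its length is then $O(n^{w})$ rather than $O(n^{w-1})$, which still yields the claimed $O(n^w)$ bound. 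With that correction your argument matches the paper's proof.
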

\begin{proof}
  The forward direction is proven in Lemma~\ref{lem:GU_forward}. The reverse follows from Lemma~\ref{lem:GU_reverse}: pick such a connected component of $H^{(w-1)}$, and let $P$ be a walk covering all edges of this component: the underlying sequence of $P$ is a $w$-realization of $G$. Walk $P$ can be computed with a DFS, by visiting all edges incident to each successive node, which can be performed linearly with respect to the number of edges in $H$. There are at most $n^w$ such edges (with $n=|V|$) since each edge involves at most $w$ distinct vertices of $G$, so we obtain the desired time bound.
\end{proof}

\begin{remark}\label{rem:GU_to_DU}
  For digraphs, the aforementioned procedure can easily be translated using a directed definition of graph $H^{(k)}$, where edge $\{u_{1:k},v_{1:k}\}$ is replaces by arc $(u_{1:k},v_{1:k})$ in Definition~\ref{def:aux_graph_H}. See Figure~\ref{fig:second_procedure}. One then needs to produce a directed walk visiting strongly connected components and covering all edges of $G$. However, there can be exponentially many walks between components in the auxiliary graph, and the choice of which components to visit leads to an exponential blow-up in the complexity of the problem.
\end{remark} 
\begin{figure}
  \captionsetup[subfigure]{justification=centering}
  	\centering
\centering
\subfloat[$G$]
{\begin{tikzpicture}
	\begin{scope}[xshift=4cm]
	\node[main node] (1) {1};
	\node[main node] (2) [right = 1.2cm  of 1] {2};
	\node[main node] (3) [below = 1.2cm  of 2] {3};
	\node[main node] (4) [left = 1.2cm  of 3] {4};
	
	\draw[->] (4) edge node[right] {} (1);
	\draw[<->] (3) edge node[below] {} (4);
	\draw[<->] (2) edge node[right] {} (3);
	\draw[->] (4) edge node[right] {} (2);
	\draw[->] (2) edge node[right] {} (4);
	\draw[->] (3) edge node[right] {} (1);
	
	\end{scope}
	\end{tikzpicture}}\qquad \qquad
\subfloat[$H^{(2)}$]
{\begin{tikzpicture}
	\begin{scope}[xshift=4cm]
	\foreach \s in {1,...,\n}
	{
		\node[main node]  (\getnthelement{\s}) at ({360/\n * (\s - 1)}:\verySmallRadius) {\scriptsize{\getnthelement{\s}}};
		arc ({360/\n * (\s - 1)+\margin}:{360/\n * (\s)-\margin}:\verySmallRadius);
	}
	\draw[, ->] (42) edge node[right] {} (23);
	\draw[, ->] (23) edge node[right] {} (34);
	\draw[, ->] (23) edge node[right] {} (34);
	\draw[ ->] (34) edge node[right] {} (41);
	\draw[, ->] (34) edge node[right] {} (42);	
	\draw[, ->] (43) -- (31); 
	\draw[, ->] (43) -- (32); 
	\draw[, ->] (32) -- (24); 
	
	\end{scope}
	\end{tikzpicture}}\qquad \qquad
\subfloat[$R(H^{(2)})$]
{\begin{tikzpicture}
	\begin{scope}[xshift=4cm]
	
	\newcommand{\s}{1}
	\node[reduced node]  (\getnthelement{\s}) at ({360/\n * (\s- 1)}:\verySmallRadius) {\scriptsize{\getnthelement{\s}}};
	arc ({360/\n * (\s - 1)+\margin}:{360/\n * (\s)-\margin}:\verySmallRadius);
	
	\renewcommand{\s}{2}
	\node[reduced node]  (\getnthelement{\s}) at ({360/\n * (\s - 1)}:\verySmallRadius) {\scriptsize{\getnthelement{\s}}};
	arc ({360/\n * (\s - 1)+\margin}:{360/\n * (\s)-\margin}:\verySmallRadius);
	
	\renewcommand{\s}{4}
	\node[reduced node]  (\getnthelement{\s}) at ({360/\n * (\s - 1)}:\verySmallRadius) {\scriptsize{\getnthelement{\s}}};
	arc ({360/\n * (\s - 1)+\margin}:{360/\n * (\s)-\margin}:\verySmallRadius);
	
	\renewcommand{\s}{6}
	\node[reduced node]  (\getnthelement{\s}) at ({360/\n * ( \s- 1)}:\verySmallRadius) {\scriptsize{\getnthelement{\s}}};
	arc ({360/\n * (\s - 1)+\margin}:{360/\n * (\s)-\margin}:\verySmallRadius);
	
	\renewcommand{\s}{8}
	\node[reduced node]  (\getnthelement{\s}) at ({360/\n * ( \s- 1)}:\verySmallRadius) {\scriptsize{\getnthelement{\s}}};
	arc ({360/\n * (\s - 1)+\margin}:{360/\n * (\s)-\margin}:\verySmallRadius);
	
	\renewcommand{\s}{4}
	\node[reduced node]  (34234) at ({360/\n * ( \s)}:\verySmallRadius) {\scriptsize{34234}};
	arc ({360/\n * (\s - 1)+\margin}:{360/\n * (\s)-\margin}:\verySmallRadius);
	
	\draw[rededge, ->] (34234) edge (41); 
	\draw[rededge, ->] (43) -- (31); 
	\draw[rededge, ->] (43) -- (32); 
	\draw[rededge, ->] (32) -- (24); 
	
	\end{scope}
	\end{tikzpicture}}
  \caption{Adaptation of the algorithm for \GU-\Realizable{w} to \DU-\Realizable{w}, here with $w=3$ (see Remark~\ref{rem:GU_to_DU}). The walk 34234, 41 in $R(H^{(2)})$ gives a $3$-realization: $3 \, 4 \, 2 \, 3 \, 4 \, 1$. However, finding such a walk takes exponential time in the directed case. } \label{fig:second_procedure}
\end{figure}
\subsubsection{Complexity lower bound for parameter \texorpdfstring{$w$}{w}}
\label{sec:GU_W1_hard}

We now prove Theorem~\ref{theorem:cliquereduction}, restated below,  using a    
polynomial time parameterized reduction from
\Clique{} (decide whether a graph contains $k$ vertices inducing all possible ${k \choose 2}$ edges). See Figure~\ref{fig:reduction_GU} for an illustration.
\theoremcliquereduction*
\begin{proof}
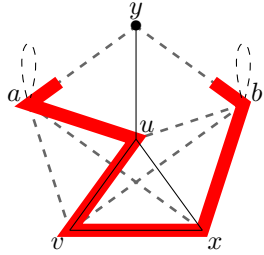
\begin{figure}
    \centering
    \begin{tikzpicture}[inner sep=0,every loop/.style={min distance=10mm,looseness=10}]
        \foreach \l [count=\i] in {a,v,x,b,y}{
            \node[label=\i*72+90:$\l$] (\l) at (\i*72+90:1.5) {$\bullet$};
        }
        \node[label=65:$u$] (u) at (0,0) {$\bullet$};
        \begin{pgfonlayer}{background}
        
        \foreach \q in {u,v,y,x} {
            \draw[, black!60] (a.center)--(\q.center);            
            \draw[, black!60] (b.center)--(\q.center);
        }
        \end{pgfonlayer}
        
        \draw[red, line width=7pt, opacity=.3,line join=round] ($(a)+(36:.5)$) -- (a.center) --(u.center)--(v.center)--(x.center)--(b.center) -- ($(b)+(72*2:.5)$);
        
        \foreach \u/\v in {y/u,u/v,v/x,x/u}{
            \draw[line width=2.5pt] (\u.center) --(\v.center);
        }
        \path (a) edge [loop above,] (a);
        \path (b) edge [loop above,] (b);
        
    \end{tikzpicture}
    \caption{\label{fig:reduction_GU}Illustration of the reduction for Theorem~\ref{theorem:cliquereduction}, with $w=4$. The source graph $G$ has vertices $\{u,v,x,y\}$ with the bold edges. Vertices $a$ and $b$ and thin edges are added in the reduction. A realization follows a path visiting both vertices $a$ and $b$: the first $w$ vertices of the transition between $a$ and $b$ (highlighted in red) must form a clique in the graph, yielding a $(w-1)$-clique in the original graph.
    }
\end{figure}
Let $G=(V, E)$ be a simple graph. Let $G'$ be a graph constructed from $G$ by adding two nodes $a$ and $b$ with self loops, such that $a$ and $b$ are connected to each vertex of $G$. Let $k$ be a positive integer and $w=k+1$.
We will show that $G$ has a $k$-clique if and only if $G'$ is $w$-realizable.

First, let us suppose that $G$ has a $k$-clique.  Let $C$ be an arbitrary sequence of the vertices of one of its $k$-cliques. Let $v_1, \, \dots, \, v_{|V|}$ be the vertices of $G$  and $ \{u_1, u'_1\}, \, \dots , \, \{u_{|E|}, u'_{|E|}\}$ be its edges. We write $A$ (resp. $B$) for the string containing $w$ successive copies of $a$ (resp. $b$).
Then, the following sequence is a $w$-realization of $G'$:
$$ A \; u_1 \; u'_1 \;  A  \; u_2 \; u'_2 \;  A \,  \dots \,  A \; u_{|E|} \;  u'_{|E|} \;  A \quad C \quad  B \; v_1  \; B \; v_2 \;  B \; \dots B \; v_{|V|} $$

Now let us suppose that $G'$ is $w$-realizable and let  $x=x_1, \, \dots, \, x_p$ be a $w$-realization of $G'$. Without loss of generality, we can suppose $a$ appears before $b$ in $x$. Let $i_b$ be the index of the first appearance of $b$ and let $i_a$ be the largest index of the appearance of $a$ before $i_b$.  Then $i_b -i_a \geq w$, since there is no edge between $a$  and $b$. Furthermore, since $G$ is simple, there cannot be two repetitions of a vertex in the sequence $x_{i_a+1}, \dots, x_{i_a +w -1}$.  Due to the definition of a sequence graph, all vertices $\{x_{i_a+1}, \dots, x_{i_a +w -1}\}$  are connected, forming a clique in $G$ of  size $w-1=k$, which ends the proof.
\end{proof}

\subsection{NP-hardness of directed and weighted variants
for constant window size \texorpdfstring{$w\ge 3$}{w>=3}}\label{sec:nphardnessproofs}

We prove in this section the NP-hardness of the remaining three variants (\DU,  \GW, \DW) for constant window size greater than 3.
\begin{restatable}{proposition}{propositioncomplexityvariants}
\label{proposition_complexity_variants}
{\DU-\Realizable{}$_w$}, {\GW-\Realizable{}$_w$}, and {\DW-\Realizable{}$_w$} are all NP-hard for any $w\geq 3$.
\end{restatable}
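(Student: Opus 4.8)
The goal is to establish NP-hardness for the three remaining variants (\DU, \GW, \DW) at every fixed window size $w\geq 3$. The plan is to design a reduction from a well-known NP-hard problem — the natural candidate is \textsc{Hamiltonian Path} (directed version for \DU, or a variant such as \textsc{3-SAT} or \textsc{Hamiltonian Path} in bounded-degree graphs) — since a $w$-realization visiting all edges in a prescribed, acyclic-like order is morally a path-covering requirement. For the \DU case at $w=3$, I would first handle the base case $w=3$ and then explain how to lift the construction to arbitrary fixed $w$ by ``padding'' (replacing single transitions by length-$(w-2)$ gadgets of fresh vertices that force a unique local order, in the spirit of the strings $A$ and $B$ used in the proof of Theorem~\ref{theorem:cliquereduction}).

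**Key steps, in order.**
First I would fix $w=3$ and the \DU variant. Given an instance $G_0$ of directed \textsc{Hamiltonian Path}, build a digraph $G$ whose realizability as a $3$-sequence graph encodes a Hamiltonian path of $G_0$: each vertex of $G_0$ becomes a vertex of $G$, and for each arc $(u,v)$ we add an intermediate gadget so that any $3$-realization is forced to traverse vertices in an order that projects onto a Hamiltonian path. The main point to control is that the sequence-graph operator forgets multiplicities and only records which pairs co-occur within a window, so the gadgets must be rigid enough that no ``shortcut'' co-occurrences are possible — this is exactly what self-loops and long monochromatic blocks buy us (a block $A$ of $w$ copies of a fresh vertex $a$ with a self-loop forces $a$ to appear consecutively and isolates the two sides). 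Second, I would prove correctness in both directions: a Hamiltonian path yields an explicit realization (written as a concatenation of gadget-strings, analogous to the displayed realization in the proof of Theorem~\ref{theorem:cliquereduction}); conversely, any $w$-realization, restricted to the ``real'' vertices, must visit each of them and the adjacency constraints force the visiting order to be a Hamiltonian path. Third, I would obtain the \GW and \DW cases: for \GW one can either adapt the same gadgetry in the undirected weighted setting (using prescribed weights to simulate the ``direction'' of an arc, since a weight-$1$ edge traversed once behaves almost like an arc) or reduce from \DU by a standard direction-encoding trick; for \DW, directedness plus weights only adds constraints, so the \DU reduction transfers with weights all set to make $\psi(G)$ rigid. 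Finally, the lift from $w=3$ to general fixed $w$: insert blocks of $w-2$ fresh degree-two vertices into each transition so that a window of size $w$ sees exactly the same information it saw with a window of size $3$ in the base construction; verify the padded instance has size polynomial in the original (for fixed $w$) and that realizability is preserved.

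**Main obstacle.**
The hard part will be ensuring the gadgets are ``leak-proof'': because $\phi_w$ records every pair of vertices co-occurring in any length-$w$ window, it is easy to accidentally create edges in $G$ that admit realizations not corresponding to Hamiltonian paths (e.g. a realization that revisits a gadget, or that orders vertices differently but still produces the same unweighted/weighted graph). Getting the separator blocks $A$, $B$, $\dots$ long enough (length $\geq w$) and equipped with self-loops so that crossing a block forces a ``reset'' — while simultaneously making sure these blocks themselves do not create spurious edges among real vertices — is the delicate combinatorial core, and in the weighted variants one must additionally check that the prescribed weight matrix is realized exactly (not just that the support is correct). I would structure the proof so that this rigidity lemma is isolated and proved once, then reused for all three variants and all $w$.
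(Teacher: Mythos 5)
Your overall strategy (reduction from \textsc{Hamiltonian Path}, separator gadgets, padding to lift from $w=3$ to general $w$) matches the paper's in spirit, and you correctly flag the central danger. But the plan leaves the decisive mechanism unresolved in both the unweighted and the weighted branches, and these are exactly the places where the paper has to work hardest. For \DU, the step ``any $w$-realization, restricted to the real vertices, must visit each of them and the adjacency constraints force the visiting order to be a Hamiltonian path'' does not go through as stated: in the unweighted setting a realization may traverse any arc arbitrarily many times and may be arbitrarily long, so nothing prevents revisits, and requiring that every arc of $G$ be realized is a \emph{covering} condition on arcs, not a path condition on vertices. The paper resolves this with an intermediate problem (\textsf{OptionalRealizable$_w$}) in which some arcs are \emph{compulsory} and a starting sequence is prescribed; vertex-visits are encoded as compulsory arcs $(v_0,v_1)$, a grid of $2n+1$ columns of $w-2$ separator vertices forces the sequence into exactly $2n+1$ blocks (so each vertex is visited exactly once), and a further reduction to \DU-\Realizable{} prepends a rigid prefix $Z$ that pre-realizes all optional arcs so that the suffix need only cover the compulsory ones. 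None of this machinery is present or replaced in your plan, and without it the backward direction of your \DU reduction fails.

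For \GW and \DW, ``the \DU reduction transfers with weights all set to make $\psi(G)$ rigid'' is not workable: the \DU construction deliberately allows arcs to be reused an unbounded number of times, so there is no consistent weight matrix to assign, and in the weighted problems the matrix $\Pi$ is part of the input and must be realized \emph{exactly}. The paper instead gives a separate reduction from undirected \textsc{Hamiltonian Path} with two degree-one endpoints, whose correctness hinges on a counting argument you would need some version of: the total incident weight of a vertex determines its exact number of occurrences in any realization (weight $2i(w-1)$ forces exactly $i$ occurrences, with the weight-$(w-1)$ vertex $s_0'$ forced to be an endpoint of the sequence), and this pins each primed vertex $u'$ to a single occurrence flanked by two occurrences of the hub $a$, which is what extracts the Hamiltonian order. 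Your proposal is not wrong in outline, but the two lemmas it defers --- preventing revisits in the unweighted directed case, and occurrence-counting via weights in the weighted cases --- are the actual content of the proof, not routine verifications.
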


We prove each case directly or indirectly by reduction from restricted versions of {\sf Hamiltonian Path}. We first verify the NP-hardness of these variants (see Lemma \ref{lemmaHamiltonianVariant} below). 
We then give the reduction for the unweighted case (see Lemma \ref{lemmaDUrealizable} in Section~\ref{sec:nphardnessUnweighted}), for which we introduce an intermediate variant with \emph{optional} arcs. Finally we give a reduction for both weighted cases in Section~\ref{sec:nphardnessWeighted}, using the same construction for both directed and undirected cases (simply ignoring arc orientations in the latter case,  see Lemma \ref{lemmaXWrealizable}).

We consider slightly constrained versions of {\sf Hamiltonian Path} where we require that the input graph contains up to two degree-one vertices. More formally, we reduce from the following restrictions, which are known to be NP-hard (these are folklore results, included here for completeness):

\begin{lemma} \label{lemmaHamiltonianVariant}
{\sf Hamiltonian Path} is NP-hard even on the graph classes defined by the following restrictions:
\begin{itemize}
\item {\bf HP1:} The input graph has no self-loop, is directed and has a source vertex $s$ (i.e. with in-degree 0)
\item {\bf HP2:} The input graph has no self-loop, is undirected and has two degree-1 vertices $s$ and $t$.
\end{itemize}

\end{lemma}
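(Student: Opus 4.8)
The plan is to verify both restrictions by standard padding arguments, starting from the unrestricted NP-hard versions of \textsf{Hamiltonian Path}.

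\textbf{Restriction HP1 (directed, no self-loop, with a source vertex).} I would start from the NP-hard problem of deciding whether a directed graph $D=(V,A)$ (without self-loops, which we may assume since self-loops are irrelevant to Hamiltonicity) admits a Hamiltonian path. To force the existence of a source vertex, I would reduce to the variant that asks for a Hamiltonian path \emph{starting at a prescribed vertex}; this is itself well-known to be NP-hard (guess/iterate over the start vertex, or reduce from Hamiltonian cycle by splitting a vertex). Given $(D, v_0)$ with a prescribed start $v_0$, construct $D'$ by adding a fresh vertex $s$ with a single arc $(s, v_0)$ and no incoming arcs. Then $s$ has in-degree $0$, so $D'$ has a source, and $D'$ has a Hamiltonian path if and only if it has one starting at $s$ (as $s$ has no in-arcs it can only be the first vertex), which happens if and only if $D$ has a Hamiltonian path starting at $v_0$. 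This is a polynomial-time reduction, establishing NP-hardness of HP1.

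\textbf{Restriction HP2 (undirected, no self-loop, two degree-1 vertices).} Here I would reduce from undirected \textsf{Hamiltonian Path} with two prescribed endpoints $u_0, u_1$ — again classically NP-hard (e.g. reduce from Hamiltonian cycle by picking an edge $\{u_0,u_1\}$, deleting it, and asking for a Hamiltonian $u_0$--$u_1$ path; or iterate over endpoint pairs). Given such an instance $(G, u_0, u_1)$, build $G'$ by attaching a pendant vertex $s$ joined only to $u_0$ and a pendant vertex $t$ joined only to $u_1$. Then $\deg_{G'}(s)=\deg_{G'}(t)=1$, and any Hamiltonian path of $G'$ must have $s$ and $t$ as its two endpoints (a degree-1 vertex cannot be internal), forcing the second and penultimate vertices to be $u_0$ and $u_1$; removing $s$ and $t$ yields a Hamiltonian $u_0$--$u_1$ path of $G$, and conversely. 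If $u_0 = u_1$ or the chosen endpoints happen to have an adjacency issue this still works since we only add pendant edges. Hence HP2 is NP-hard.

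\textbf{Main obstacle.} None of the steps is technically deep; the only care needed is to cite or re-derive cleanly the NP-hardness of the "prescribed endpoint(s)" variants of Hamiltonian Path, which is the base case each padding reduction rests on, and to make sure the added vertices genuinely force the desired structural properties (source / degree-one endpoints) without creating spurious Hamiltonian paths. I would state these base-case hardness facts as folklore (with a one-line justification via vertex splitting / edge deletion from Hamiltonian cycle) and then give the two pendant-vertex constructions above, checking the ``if and only if'' in each direction. Since the statement is explicitly flagged as a folklore result included for completeness, a concise argument of this form suffices.
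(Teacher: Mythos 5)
Your proposal is correct and takes essentially the same route as the paper: both rest on the folklore vertex-splitting reduction from {\sf Hamiltonian Cycle}, with the split vertex (plus, in the undirected case, pendant vertices $s$ and $t$) supplying the source / degree-one endpoints; your version merely factors the argument through an intermediate ``prescribed endpoint(s)'' variant, which composes to the paper's one-step construction. The only point to tidy is your parenthetical base-case justification for HP2 via deleting a chosen edge $\{u_0,u_1\}$ --- a Hamiltonian cycle need not use that particular edge, so as a many-one reduction you should use the vertex-splitting argument you already mention rather than edge deletion.
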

\begin{proof}
The reduction to HP1 is from {\sf Hamiltonian Cycle} in directed graphs: pick any vertex $v$ and duplicate it into $v_1,v_2$. Each arc $(v,u)$ becomes $(v_1,u)$ and each arc $(u,v)$ becomes $(u,v_2)$. Then $v_1$ is a source vertex and any cycle in the original graph is equivalent to a path  from $v_1$ to $v_2$ in the new graph.

The reduction to HP2 is from {\sf Hamiltonian Cycle} in undirected graphs: pick any vertex $v$ and duplicate it into $v_1,v_2$. Each edge $\{u,v\}$ becomes two edges $\{u,v_1\}$ and $\{u,v_2\}$. Add pending vertices $s$ and $t$ connected to $v_1$ and $v_2$ respectively.  Then any cycle in the original graph is equivalent to a path in the new graph with $\{s,v_1\}$ at one end and $\{t,v_2\}$ at the other.
\end{proof}

\subsubsection{Reduction for \DU-Realizable}
\label{sec:nphardnessUnweighted}
In the directed and unweighted setting, we use the following intermediate generalization which allows \emph{some} arcs to be ignored in the realization. For convenience in the final reduction, we further assume that the first $w-1$ elements of the sequence are given in input. 
\Problem{\textsf{OptionalRealizable$_w$}}%
{problemoptionalrealizable}%
{directed unweighted graph $D=(V,A)$ without self-loops, a subset $A_c\subseteq A$ of \emph{compulsory arcs}, a \emph{starting sequence} $P=(s_1,\dots,s_{w-1})$ of $w-1$ distinguished vertices of $V$.}
{True if there is a sequence $S$, starting with $P$, such that the graph of $S$ with window size $w$ contains only arcs in $A$ and (at least) all arcs in $A_c$; False otherwise.}

Note the following similarity between \textsf{OptionalRealizable$_2$} and {\sf Hamiltonian Path}: in the former some arcs are optional and other are compulsory, while in the latter all arcs are optional but vertices are compulsory. This constraint is easily implemented in  \textsf{OptionalRealizable$_2$} by duplicating each vertex and adding a compulsory arc between the two copies. Intuitively, this is the main building block of our reduction  for \textsf{OptionalRealizable$_w$} (Lemma~\ref{lemmaoptionalrealizable} below). With $w\geq 3$, we further introduce $w-2$ \emph{padding vertices} that will take place between any two consecutive vertices of the path, so that the window overlaps successive pairs of vertices along the path. These padding vertices also help enforce that no vertex is visited more than once.

\begin{lemma}\label{lemmaoptionalrealizable}
For any fixed $w\geq 3$, \textsf{OptionalRealizable$_w$} is NP-hard.
\end{lemma}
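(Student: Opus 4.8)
# Proof Proposal for Lemma~\ref{lemmaoptionalrealizable}

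The plan is to reduce from \textbf{HP1} (Hamiltonian Path in a directed, loopless graph with a source vertex $s$), whose NP-hardness is given by Lemma~\ref{lemmaHamiltonianVariant}. Fix $w\geq 3$. Given an instance $D=(V,A)$ of \textbf{HP1} with source $s$ and $n=|V|$ vertices, I would build a \textsf{OptionalRealizable$_w$} instance as follows. For every vertex $v\in V$ create a gadget consisting of two ``copies'' $v^{\mathrm{in}}, v^{\mathrm{out}}$, joined by a chain of $w-2$ fresh \emph{padding vertices} $p_v^1,\dots,p_v^{w-2}$, with compulsory arcs forming the directed path $v^{\mathrm{in}}\to p_v^1\to\cdots\to p_v^{w-2}\to v^{\mathrm{out}}$. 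The role of the padding is to force that a length-$w$ window can contain \emph{at most one} ``real'' vertex pair, so the realization is forced to traverse the gadgets one at a time, exactly as a Hamiltonian path traverses vertices. For each arc $(u,v)\in A$ add a (non-compulsory, i.e. optional) arc $u^{\mathrm{out}}\to v^{\mathrm{in}}$; these are the optional arcs, mirroring the ``all arcs optional'' flavour of Hamiltonian Path. The starting sequence $P=(s_1,\dots,s_{w-1})$ is set to $(s^{\mathrm{in}}, p_s^1,\dots,p_s^{w-2}, s^{\mathrm{out}})$, i.e. the full first gadget read off in order, which has length exactly $w-1$. Crucially, I must also ensure that \emph{no other arcs} can appear: I would include in $A$ only the arcs just listed (gadget-internal compulsory arcs plus inter-gadget optional arcs), so that the condition ``the window-$w$ graph contains only arcs in $A$'' becomes a genuine constraint ruling out revisits and shortcuts.

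The key steps, in order: (i) define the construction precisely and check it is polynomial-time computable for fixed $w$ (the instance has $n w + O(|A|)$ vertices and arcs); (ii) \emph{completeness} — given a Hamiltonian path $s=v_1, v_2,\dots,v_n$ in $D$, exhibit the sequence $S$ obtained by concatenating the gadget-strings $v_i^{\mathrm{in}}\,p_{v_i}^1\cdots p_{v_i}^{w-2}\,v_i^{\mathrm{out}}$ in order, verify $S$ starts with $P$, and check that its window-$w$ sequence graph uses only arcs in $A$ and covers all compulsory arcs (the compulsory gadget arcs are covered inside each block; every length-$w$ window spanning a block boundary sees exactly one optional arc $v_i^{\mathrm{out}}\to v_{i+1}^{\mathrm{in}}$, which exists because $(v_i,v_{i+1})\in A$, plus padding arcs already in $A$); (iii) \emph{soundness} — given any valid sequence $S$ starting with $P$, argue that $S$ must be exactly such a concatenation. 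Here one shows: each vertex appears at most once (a repeated occurrence would, via some length-$w$ window, create an arc not in $A$, using that $D$ is loopless and padding vertices have in/out-degree $1$ in the allowed arc set); the compulsory arcs force each gadget to be traversed contiguously in its canonical order; and the inter-gadget transitions are forced to be single optional arcs $u^{\mathrm{out}}\to v^{\mathrm{in}}$, so $S$ reads off a permutation $v_1=s,v_2,\dots$ of $V$ with $(v_i,v_{i+1})\in A$ for all $i$, i.e. a Hamiltonian path. Since all $n$ compulsory ``$v^{\mathrm{in}}\to\cdots\to v^{\mathrm{out}}$'' chains must be covered, every vertex is visited, so the path is indeed Hamiltonian.

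The main obstacle I anticipate is the soundness direction, specifically pinning down that \emph{no spurious arc} can sneak into the window-$w$ sequence graph. The delicate point is controlling all length-$w$ windows that straddle gadget boundaries or that might arise if the realization tried to ``interleave'' two gadgets: I need the padding length $w-2$ to be exactly right so that as soon as a window contains two real vertices from different gadgets, it also forces a forbidden arc (e.g. between a padding vertex of one gadget and a vertex of another, or between two $\mathrm{out}$/$\mathrm{in}$ copies that are not $A$-adjacent). Making this airtight requires a careful case analysis of window contents by position relative to the gadget decomposition; I would organize it as a short structural claim (``in any valid $S$, the sequence decomposes into canonical gadget blocks''), proved by induction on the blocks read so far, using the starting sequence $P$ as the base case. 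A secondary technicality is handling the endpoints — since \textbf{HP1} only prescribes a source $s$ and not a sink, the last gadget's trailing window has fewer than $w$ characters after $v_n^{\mathrm{out}}$, which is fine and imposes no extra arc; I just need to note that the empty ``continuation'' is permitted. Once the block-decomposition claim is established, both directions follow by routine bookkeeping.
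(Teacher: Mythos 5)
Your high-level strategy matches the paper's: reduce from HP1, interleave each pair of ``real'' vertices with $w-2$ padding vertices so that consecutive real vertices in a realization must be joined by an arc of the source graph, and use compulsory arcs to force every vertex to be visited. However, there is a concrete gap that breaks the completeness direction as written. Your gadget string $v^{\mathrm{in}}\,p_v^1\cdots p_v^{w-2}\,v^{\mathrm{out}}$ has length $w$ (not $w-1$, so your starting sequence $P$ is also one character too long), and a single size-$w$ window over it realizes \emph{every} ordered pair in the block: $v^{\mathrm{in}}\to v^{\mathrm{out}}$, $v^{\mathrm{in}}\to p_v^j$ for $j\geq 2$, $p_v^i\to p_v^j$ for $j\geq i+2$, and so on. Likewise every boundary-straddling window realizes cross-gadget arcs such as $p_u^i\to v^{\mathrm{in}}$, $u^{\mathrm{out}}\to p_v^j$ and $p_u^i\to p_v^j$ for $j\leq i-1$. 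Since you explicitly restrict the allowed arc set $A$ to the consecutive compulsory arcs plus the arcs $u^{\mathrm{out}}\to v^{\mathrm{in}}$, your intended realization already violates the ``contains only arcs in $A$'' condition — even for $w=3$, where the window $v^{\mathrm{in}}p_v^1 v^{\mathrm{out}}$ realizes the forbidden arc $v^{\mathrm{in}}\to v^{\mathrm{out}}$. Your instances are therefore unconditional no-instances.

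The fix is to enlarge $A$ to contain every arc realized by a canonical traversal, which is exactly what the paper does: it uses a \emph{shared, position-indexed} grid of separators $x^i_p$ ($p\in[2n+1]$, $i\in[w-2]$) rather than per-vertex padding, and declares as optional all intra- and inter-column separator arcs (with index conditions $i<j$ within a column and $j\leq i$ across columns) together with all separator-to-vertex arcs satisfying the right parity. But once $A$ is enlarged in this way, your soundness argument collapses as stated: the padding vertices no longer have in/out-degree $1$ in $A$, so your reason for why no vertex can be revisited and why gadgets cannot interleave disappears. The paper instead recovers soundness by a pigeonhole argument on the out-neighbourhood of each separator $x^i_p$ (a size-$w$ window following it must contain the $w-2$ remaining separators of that column pair plus exactly one real vertex, forcing the sequence to be $X_1 u_1 X_2 u_2\cdots u_{2n}X_{2n+1}$ with the $u_i$ alternating between $v_0$- and $v_1$-type vertices). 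That structural claim is the heart of the proof, and it is precisely the part your proposal leaves as a ``careful case analysis'' to be done; with per-vertex padding it is not clear it can be carried out at all, since nothing position-specific distinguishes one gadget's padding from another's.
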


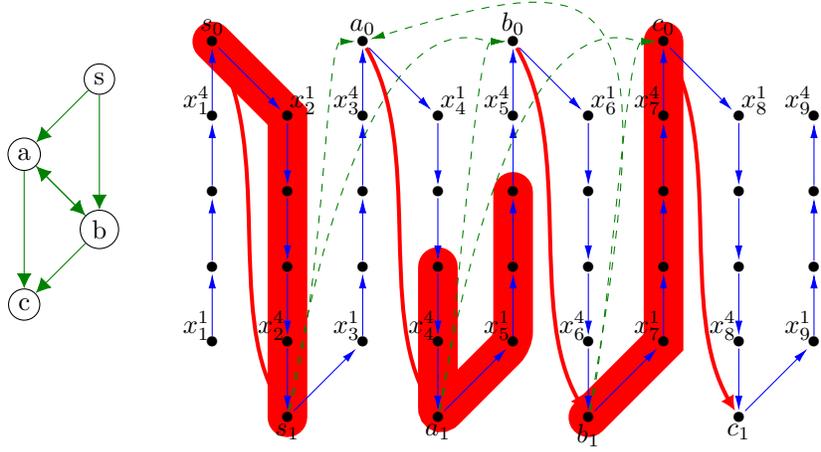
\begin{figure}
    \centering

\begin{tikzpicture}[base/.style={green!50!black,-{>[width=1mm]}}, orig/.style={dashed, base},inner sep=0, path/.style={blue,-{>[width=1mm]}}]

\begin{scope}[ every path/.style={base},
shift={(-2.5,1.5)}, inner sep=2pt, every node/.style={draw, circle, black}]
\node (s) at (2,3){s};
\node (a) at (1,2){a};
\node (b) at (2,1){b};
\node (c) at (1,0){c};
\draw[->] (s)--(a);
\draw[->] (s)--(b);
\draw[->] (a)--(b);
\draw[->] (b)--(a);
\draw[->] (b)--(c);

\draw[->] (a)--(c);
\end{scope}

\foreach \label [count=\i] in {s,a,b,c}{
    
    \node[label=north:$\label_0$] (v\i0) at (\i*2-1,5) {$\bullet$};        
    \node[label=south:$\label_1$] (v\i1) at (\i*2,0) {$\bullet$};
    
    \draw[red, line width=1.5pt, out=-60, in=120,->, looseness=.8] (v\i0) to(v\i1);
}

\foreach \c in {1,...,9} { 
    \pgfmathparse{mod(\c, 2)} 
    \let\parity\pgfmathresult 
    \foreach \l [remember=\l as \prev] in {1,2,3,4} {
        \node (x\c\l) at (\c,\l) {$\bullet$};
        \ifthenelse{\l>1}{
            \ifthenelse{\equal{\parity}{0.0}}{    
                \draw[path] (x\c\l) -- (x\c\prev);
            }{         
                \draw[path] (x\c\prev) -- (x\c\l);
            }
        }{}
    }  
}
\foreach \c/\i [remember=\c as \prev] in {1/1,3/2,5/3,7/4} {
    \pgfmathtruncatemacro{\nc}{\c +1}
    \pgfmathtruncatemacro{\nnc}{\c +2}    
    \draw[path] (x\c4) -- (v\i0);
    \draw[path] (v\i0) -- (x\nc4);
    \draw[path] (x\nc1) -- (v\i1);
    \draw[path] (v\i1) -- (x\nnc1);
    \node [label=north west:$x_\c^1$] at (x\c1){};    
    \node [label=north west:$x_\c^4$] at (x\c4){};
    \node [label=north east:$x_\nc^1$] at (x\nc4){};    
    \node [label=north west:$x_\nc^4$] at (x\nc1){};
}
    \node [label=north west:$x_9^1$] at (x91){};    
    \node [label=north west:$x_9^4$] at (x94){};
\foreach \s/\t/\a in {1/2/75:2,2/3/75:1,3/4/75:1,2/4/80:6,1/3/80:6} {
    \draw[orig] (v\s1) .. controls +(\a) and +(-180:.5) .. (v\t0);
}\draw[orig] (v31) .. controls +(80:6) and +(17:3.5) .. (v20);
\begin{pgfonlayer}{background}
\begin{scope} [
line width=15pt, 
line cap=round, 
red,
opacity=.2
]
\draw (v11.center) -- (x24.center) -- (v10.center);
\draw[rounded corners] (x42.center)  -- (v21.center)-- (x51.center) -- (x53.center);
\draw (v31.center) -- (x71.center) -- (v40.center);

\end{scope}
\end{pgfonlayer}
\end{tikzpicture}

    \caption{Left: an instance $G$ of {\sf Hamiltonian Path} with a source vertex $s$ and solution $(s,a,b,c)$. Right: the corresponding instance $G'$ of \textsf{OptionalRealizable$_6$}. Heavy (red) arcs are compulsory, light (blue) arcs are a solution path in the graph, dashed (green) arcs are optional arcs issued from the intput graph. Other optional arcs are not depicted.  Three size-6 windows are overlined: A window using $v_0$ and $v_1$ realizes the compulsory arc for vertex $v$, a window using $u_1$ and $v_0$ enforce that the arc $(u,v)$ exists in $G$, other windows with $w-1$ separator vertices $x_p^i$ help structure the whole sequence. }
    \label{fig:optrealHard}
\end{figure}

\begin{proof}
By reduction from {\sf Hamiltonian Path} (see Lemma~\ref{lemmaHamiltonianVariant}, HP1).
Given a directed graph $G=(V,A)$ with a source vertex $s$ and no self-loop,  build an instance of \textsf{OptionalRealizable$_w$} with directed unweighted graph $G'=(V',A')$, compulsory arcs $A'_c$ and starting sequence $P$ as follows (see Figure~\ref{fig:optrealHard} for an example).

 We introduce vertices denoted $v_0,v_1$ for each vertex $v$ of the original graph, as well as a grid of vertices $x^i_p$ for each $ p\in[2n+1]$, $i\in [w-2]$. The overall vertex set is thus 
$$V'=\{x^i_p \mid  p\in[2n+1], i\in [w-2] \}\ \cup\ \bigcup_{v\in V}\{v_0, v_1\}$$
  
The set of compulsory arcs is $A'_c = \{(v_0,v_1) \mid v\in V\}$. We further introduce the following optional arcs:
    \begin{itemize}
        \item arc $(u_1,v_0)$ for each $(u,v)$ in $A$
        \item arcs $(x^i_{2p-1}, v_0)$, $(v_0, x^i_{2p})$, $(x^i_{2p}, v_1)$, $(v_1, x^i_{2p+1})$ for each $v\in V$, $p\in[n]$, $i\in[w-2]$.
        \item arc $(x^i_p, x^j_p)$ for $i<j$ and $p\in[ 2n+1]$; 
        \item arc $(x^i_p, x^j_{p+1})$ for $j\leq i$ and $ p\in[2n]$  
    \end{itemize}
The starting sequence is defined as $P=(x_1^1, \ldots, x_1^{w-2}, s_0)$. Note that the resulting graph has no self-loop.

\begin{claim} $(G',P, A'_c)$ is a yes-instance for \textsf{OptionalRealizable$_w$} 
$\Leftrightarrow$ $G$ admits an Hamiltonian path
\end{claim}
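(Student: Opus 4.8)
The plan is to prove the claimed equivalence in two directions, using the grid of separator vertices $x_p^i$ to force any realization into a rigid ``zig-zag'' shape that alternates between traversing the grid and visiting the gadget pairs $v_0,v_1$.

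\textbf{Forward direction ($\Leftarrow$).} Suppose $G$ has a Hamiltonian path $s=u^{(1)},u^{(2)},\dots,u^{(n)}$. I would exhibit an explicit realization $S$ starting with $P=(x_1^1,\dots,x_1^{w-2},s_0)$. The idea: for $p=1,\dots,n$, place a block that goes through column $2p-1$ of the grid (in some order, say increasing superscript), then $u^{(p)}_0$, then column $2p$, then $u^{(p)}_1$, then into column $2p+1$, and so on. Concretely the sequence looks like $x_1^1\cdots x_1^{w-2}\,u^{(1)}_0\,x_2^1\cdots x_2^{w-2}\,u^{(1)}_1\,x_3^1\cdots x_3^{w-2}\,u^{(2)}_0\,\cdots$. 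I would then check two things: (i) every window of size $w$ in $S$ only spans consecutive columns plus at most one gadget vertex, so each induced arc is among the optional arcs listed (the arcs $(x_p^i,x_p^j)$ for $i<j$, $(x_p^i,x_{p+1}^j)$ for $j\le i$, the arcs connecting $x^i_{2p-1}$ to $v_0$ etc., and crucially the arc $(u^{(p-1)}_1, u^{(p)}_0)$ which is present because $(u^{(p-1)},u^{(p)})\in A$); and (ii) all compulsory arcs $(v_0,v_1)$ appear, since $v_0$ and $v_1$ are separated by exactly column $2p$ which has $w-2$ vertices, so they lie within a common window of size $w$. This is a routine but slightly tedious window-by-window verification; a clean way to organize it is to note that consecutive grid columns together with one gadget vertex always form $\le w-1$ distinct vertices spanning a window, matching exactly the arc set we put in.

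\textbf{Reverse direction ($\Rightarrow$).} Suppose $S$ is a realization starting with $P$. I would first argue that $G'$ (restricted to the arcs actually available) has a very constrained structure: $s_0$ has the grid column $x_1^\bullet$ before it and, by the optional-arc list, the only way to leave $s_0$ is toward column $x_2^\bullet$ (there is no arc from $s_0$ to any $v_1$ or to a non-adjacent $v_0$; the only arcs out of $v_0$ go to $v_1$ and to $x^i_{2p}$ vertices). The key structural claim is that $S$ must visit the gadget vertices in the order $s_0,s_1,v^{(2)}_0,v^{(2)}_1,\dots$ with the grid columns strictly interleaved, and that each $v_0$ is immediately followed (within the same window) by its $v_1$ because the compulsory arc $(v_0,v_1)$ must be realized and $v_1$ is otherwise unreachable from $v_0$'s neighborhood except across a full grid column. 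From the fact that there are only $2n+1$ grid columns and each column can be used only once per ``pass'' (the arcs $(x^i_p,x^j_{p'})$ only connect columns $p,p+1$, never skip), I would deduce that at most $n$ gadget pairs can be threaded, and since all $n$ compulsory arcs must be realized, exactly the pairs $\{v_0,v_1\}$ for all $v\in V$ appear, each exactly in its ``slot.'' Finally, whenever $v_1$ is followed (after one grid column) by $u_0$, the window spanning $v_1,\dots,u_0$ forces the arc $(v_1,u_0)$ to exist, which by construction means $(v,u)\in A$. Reading off the order of the gadget pairs then gives a Hamiltonian path in $G$ starting at $s$.

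\textbf{Main obstacle.} The delicate part is the reverse direction: proving rigidly that the realization \emph{cannot} cheat — e.g., cannot revisit a grid column, cannot skip a column, cannot reach $v_1$ without going through column $2p$, and cannot visit a gadget vertex twice. The padding/separator vertices $x_p^i$ are designed precisely to prevent these, but making the argument airtight requires carefully enumerating which arcs are available out of each vertex type and tracking how windows of size exactly $w$ overlap consecutive blocks; in particular one must use that the graph has no self-loop (so no vertex repeats inside a single window) together with the precise arc list to pin down that after reading $P$ the sequence is forced, column by column. I expect this case analysis — rather than any single clever idea — to be the bulk of the work, and I would structure it as a short induction on $p$ showing that after processing the $p$-th gadget pair the sequence so far is uniquely determined up to the choice of which original vertex plays the role of $v^{(p)}$ and the intra-column ordering.
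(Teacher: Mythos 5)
Your plan follows essentially the same route as the paper's proof: the same explicit block construction $x_{2p-1}^1\cdots x_{2p-1}^{w-2}\,v^p_0\,x_{2p}^1\cdots x_{2p}^{w-2}\,v^p_1$ for the forward direction, and for the reverse direction the same rigidity argument forcing every window that starts at a grid vertex $x_p^i$ to contain exactly one gadget vertex (the paper gets this by pigeonhole on the out-neighbourhood of $x_p^i$, using the absence of self-loops and of arcs between two same-level gadget vertices), which yields the decomposition $S=X_1u_1X_2u_2\cdots u_{2n}X_{2n+1}$ from which the Hamiltonian path is read off. The case analysis you defer to the end is exactly the content of the paper's induction on the prefix length, so your approach is sound and matches the paper's.
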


$\Leftarrow$
Let $v^1, v^2, \ldots, v^n$ be an Hamiltonian path of $G$, so that $v_0^p,v_1^p$ are the corresponding vertices in $G'$, $p\in[n]$. Without loss of generality, since $s$ has degree 1, $v^1=s$. Define the following sequences:
\begin{align*}
X^p &= x_{2p-1}^1 \ldots x_{2p-1}^{w-2} v^p_0 x_{2p}^1 \ldots x_{2p}^{w-2} v^p_1 \quad \text{ for }p\in[n]\\
  X^{n+1}&=x_{2n}^1\ldots x_{2n}^{w-2}\\
  S&=X^1\ldots X^nX^{n+1}
  \end{align*}
We show that $S$ is a solution for \textsf{OptionalRealizable$_w$}($G',P, A_c$). By construction $S$ starts with $P$. 
Further, for each compulsory arc $(v_0,v_1)$, $v$ is part of the Hamiltonian path so there is $p$ such that $v=v^p$:
then compulsory arc $(v_0,v_1)$ is realized in subsequence $X^p$. Finally, it can be checked that the graph of $S$ contains only arcs of $A'$. Indeed, the sequence uses the following arcs: $(v_0,v_1)$ for each $v$ (which are compulsory arcs), arcs $(v^p_1,v^{p+1}_0)$ for each arc $(v^p,v^{p+1})$ of the Hamiltonian path, so  $(v^p,v^{p+1})\in A$ and  $(v^p_1,v^{p+1}_0)\in A'$, arcs with an endpoint $v_i$ and an endpoint $x_p^j$ (which satisfy the parity conditions so they belong to $A'$), and finally arcs of the form $(x_i^p, x_j^q)$, either with $q=p$ (in which case $i<j$) or with $q=p+1$ (in which case by the window size we have $j\leq i$): both kinds are also in $A'$.

$\Rightarrow$
Consider a sequence $S$ solution for  \textsf{OptionalRealizable$_w$}($G',P, A_c$). 
We first show that for each occurrence of $x^i_{p}$ in $S$ (except for $p=2n+1$), the next $w-1$  characters in $S$ are necessarily $x^{i+1}_{p} \ldots x^{w-2}_p v_q x^{1}_{p+1}\ldots x^{i}_{p+1}$ for some $v_q$ with $\in V$ and $q\in\{1,2\}$. 
To this end, consider some size-$w$ window $x S'$ in $S$, where $x=x^i_{p}$ for some $ i\in[w-2]$, $p\in[2n]$ (note that $p\neq 2n+1$). Let $T=x^{i+1}_p\ldots x^{w-2}_p$ and $U=x^{1}_{p+1}\ldots x^{i}_{p+1}$ (note that $T$ is possibly empty). $T$ and $U$ are seen both as strings and as sets of vertices.
The out-neighborhood of $x^i_p$ contains all vertices of $T\cup U$,  as well as all vertices $v_q$ for $v\in V$, where  $q=0$ if $p$ is odd and $q=1$ if $p$ is even. Since there are $k-2$ vertices in $T\cup U$, and no vertex has a self-loop, then by the pigeon-hole principle string $S'$ must contain at least one vertex $v_q$, $v\in V$. Since there are no arc $(a_q, b_q)$ for $a,b\in V$, $S'$ contains exactly one such vertex $v_q$, thus it also contains all vertices of $T \cup U$. Based on the direction of the arcs in $T\cup U\cup\{v_q\}$, it follows that $S' = T\cdot v_q \cdot U$. 

Let $X_p$ be the string $x^1_p \ldots x^{w-2}_p$. From the arguments above, and the fact that $S$ starts with $X_1$ (since $P$ uses vertices of $X_1$), there exist vertices  $u_1,\ldots u_{2n}$ in $\bigcup_{v\in V} \{v_0,v_1\}$ such that 
$$S=X_1 {u_1} X_2 u_2 X_3 u_3 \ldots u_{2n} X_{2n+1}$$
From the window size $w$, there must exist an arc $(u_p,u_{p+1})$ for each $p\in[2n-1]$. So if $u_p=v_0$ for some $p\in [2n-1]$ and $v\in V$, then $u_{p+1}=v_1$ (for the same $v$). Moreover, if $u_p=v_1$ for some $p\in [2n-1]$ and $v\in V$, then $u_{p+1}=v'_0$ for some vertex $v'\in V$ with $(v,v')\in A$. Overall vertices $u_i$ alternate between $\{v_0\mid v\in V\}$ and $\{v_1\mid v\in V\}$, starting with $u_1=s_0$ (by the starting sequence constraint), so there are vertices $(v^1,\ldots, v^n)$ such that $u_{2p-1}=v^p_0$, $u_{2p}=v^p_1$, and $(v^1,\ldots, v^n)$ is a walk of $G$.

Furthermore, arcs $(v_0,v_1)$ are compulsory for each vertex $v\in V$, so $(v^1,\ldots,v^n)$ must visit all $n$ vertices of $G$: Thus, $(v^1,\ldots, v^n)$ is an Hamiltonian path in $G$.
\end{proof}

We can now prove that \textsf{\DU-Realizable$_w$} is NP-hard by reduction from \textsf{OptionalRealizable$_w$}. The main idea of the reduction is to attach a gadget to the graph, that allows to visit each optional arc in some order; this becomes a prescribed prefix of any realization (denoted $Z$ in the proof). Then, the end of a realization must visit all remaining (compulsory) arcs, but can still use any optional arc thanks to the unweighted setting. 
The gadget is heavily constrained to enforce that the prescribed prefix $Z$ is indeed visited ``as is'' (in particular, without realizing any compulsory arc), and that vertices of the gadget are no longer accessible once $Z$ is over.

\begin{lemma}\label{lemmaDUrealizable}
For any fixed $w\geq 3$, {\DU-\Realizable{}$_w$} is NP-hard.
\end{lemma}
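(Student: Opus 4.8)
The plan is to reduce from \textsf{OptionalRealizable$_w$}, shown NP-hard in Lemma~\ref{lemmaoptionalrealizable}. From an instance $(D=(V,A),A_c,P)$ with $P=(s_1,\dots,s_{w-1})$, I would build a directed, unweighted, self-loop-free graph $G$ consisting of a faithful copy of $D$ (so that $G[V]=D$) together with a rigid \emph{prefix gadget} $\Gamma$, linked to $V$ through a controlled set of crossing arcs; the arc set of $G$ is exactly $A$ plus the arcs internal to $\Gamma$ plus the crossing arcs. The gadget is engineered so that every $w$-realization of $G$ must begin with one fixed string $Z$ such that: (i) $Z$ traverses every arc internal to $\Gamma$ and every crossing arc; (ii) $Z$ traverses every arc of $A\setminus A_c$ but no arc of $A_c$; (iii) $Z$ ends with $s_1\cdots s_{w-1}=P$; and (iv) after $Z$ no vertex of $\Gamma$ is reachable anymore, so every realization has the shape ``$Z$ then a sequence over $V$''.

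Granting such a gadget, the equivalence is short. For the ``if'' direction, given an \textsf{OptionalRealizable$_w$} solution $S$ (a sequence over $V$, starting with $P$, using only arcs of $A$, covering all of $A_c$), overlap the trailing copy of $P$ in $Z$ with the leading copy of $P$ in $S$: every size-$w$ window of the resulting word is a window of $Z$ or of $S$, so by (i)--(ii) its window graph is $(\text{arcs of }\Gamma)\cup(\text{crossing arcs})\cup(A\setminus A_c)\cup A_c=E(G)$, i.e. $G$ is $w$-realizable. For the ``only if'' direction, a $w$-realization $y$ of $G$ must cover the arcs of $\Gamma$, which by rigidity forces $y$ to start with $Z$; by (iv) the rest of $y$ lives in $V$ and hence uses only arcs of $A$; since $y$ must cover $A_c$ while (ii) says $Z$ covers none of it, the rest covers all of $A_c$; and (iii) lets one read off, from the rest prefixed by $Z$'s trailing $P$, an \textsf{OptionalRealizable$_w$} solution. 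As $|\Gamma|$ and the number of crossing arcs are polynomial in $|D|$ for fixed $w$, the reduction is polynomial.

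The work is in building $\Gamma$. I would give it a single entry vertex of in-degree $0$ (so it is necessarily the first symbol of any realization, and occurs only there), followed by a chain of vertices each with essentially a unique legal continuation window, using $w-2$ fresh padding vertices inserted between consecutive ``meaningful'' vertices to stop width-$w$ windows from creating spurious arcs --- the same padding device as in Section~\ref{sec:nphardnessUnweighted}. Along this chain I would route, one at a time, a short detour crossing each arc of $A\setminus A_c$ (enter $V$ at its tail, cross the optional arc, leave $V$ at its head, return to the chain), adding only the crossing arcs these detours require, and spacing the detours with enough padding that no size-$w$ window straddles two detours or two vertices of $V$ other than the two endpoints of a single optional arc; this yields (ii). The chain finishes by emitting $s_1,\dots,s_{w-1}$, giving (iii).

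The main obstacle is making (i) and (iv) actually hold: forcing $Z$ to be essentially unique despite windows of width $w$, and ensuring that once $Z$ is complete no vertex of $\Gamma$ can ever be re-entered, so the tail of any realization is confined to $D$. I expect to handle both with one mechanism --- in-degree-$0$ entry, out-degree-$1$ or unique-window continuations along the chain, all-distinct padding vertices, and arranging each crossing arc from $V$ back into $\Gamma$ to land on a chain vertex whose only admissible continuation has already been consumed inside $Z$ --- but checking that this single mechanism simultaneously delivers rigidity, full coverage of $A\setminus A_c$, avoidance of $A_c$, and dead-end behaviour after $Z$ is where the delicate case analysis sits. Everything after that is routine verification of the two directions above.
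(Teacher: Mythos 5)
Your overall architecture coincides with the paper's: reduce from \textsf{OptionalRealizable$_w$} (Lemma~\ref{lemmaoptionalrealizable}), prepend a rigid gadget whose forced traversal $Z$ realizes every optional arc and no compulsory arc, ends with the starting sequence $P$, and is unreachable afterwards, so that any realization decomposes as $Z$ followed by an \textsf{OptionalRealizable$_w$} solution; the two directions of the equivalence you state are exactly those of the paper. The genuine gap is that you have deferred the only non-routine content: the gadget $\Gamma$ is never constructed, and properties (i)--(iv) --- rigidity of $Z$, coverage of $A\setminus A_c$, avoidance of $A_c$, and the dead-end behaviour --- are announced as goals rather than proved. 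You correctly locate the difficulty, but locating it is not resolving it, so as written this is a proof plan rather than a proof.

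Two concrete points about why the missing construction is delicate and how the paper handles it. First, your detour ``enter $V$ at the tail, cross the optional arc, leave at the head'' puts $u_p$ and $v_p$ consecutively in $Z$; the paper instead emits $u_p\, z_p\, v_p$ with a fresh separator $z_p$ \emph{between} the two endpoints (the arc is still realized since $w\ge 3$), precisely so that no two vertices of $V$ are ever adjacent in $Z$. The rigidity induction depends on this invariant: at every position, either the current symbol or the \emph{next} one is a separator whose in-neighbourhood consists only of separators already known to occur exactly once, and whose out-neighbourhood restricted to $Z'$ is a tournament forcing the next $w-1$ symbols. With $u_p v_p$ adjacent that case analysis breaks and you would need a different, unspecified argument. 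Second, your mechanism for (iv) --- routing return arcs to chain vertices ``whose only admissible continuation has already been consumed'' --- is more complicated than necessary and is not obviously sound; in the paper the dead-end property is a free corollary of the induction, since every separator is shown to occur exactly once in any realization, so no gadget vertex can reappear after $Z$. To complete the proof you must write the gadget down explicitly (blocks of $w$ separators $y_p^1\cdots y_p^w$ between consecutive optional-arc detours, plus the interposed $z_p$) and carry out the position-by-position induction establishing that every realization begins with $Z$.
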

\begin{proof}
 Assume that we are given a directed unweighted graph $G=(V,A)$, a subset $A_c\subseteq A$ of compulsory arcs (let $A_o=A\setminus A_c$ be the set of optional arcs), and a starting sequence $P=(s_1\ldots s_{w-1})$ of vertices of $V$. 
The following reduction is illustrated in Figure~\ref{fig:redOption_DU}.

Let $m=|A_o|$, and write $A_o=\{(u_1,v_1), \ldots, (u_m,v_m)\}$. Create $G'$ by adding $w(m+1)+m$ \emph{separator} vertices: $w(m+1)$ vertices $y_p^i$ with $1\leq p\leq m+1$ and $1\leq i\leq w$, and $m$ vertices $z_p$ for $1\leq p\leq m$. Build the following strings (using the product operator for concatenation)
\begin{align*}
  Z&=\left(\prod_{p=1}^m (y_p^1 \ldots y_p^w u_p z_p v_p) \right)y_{m+1}^1 \ldots y_{m+1}^w
Z'& = Z s_1 \ldots s_{w-1}
\end{align*}

The arc set can be concisely defined as follows : take the set $A$ and insert all arcs realized by $Z'$ involving at least one separator vertex.  In details, the additional arcs in $G'$ are the following (where indices $i,j,p$ necessarily satisfy $ i\in[w]$, $j\in[w]$ and $ p\in[m]$):
\begin{itemize}
    \item $(y_p^i, y_p^j)$ for $i < j$  and $(y_p^i, y_{p+1}^j)$ for $j\leq  i-4$, 
    \item $(y_{m+1}^i, y_{m+1}^j)$ for $i < j$  and $(y_{m+1}^i, s_j)$ for $j< i$,
    \item $(y_p^i, u_p)$ for $2\leq i$ and $(u_p, y_{p+1}^j)$ for $j\leq w-3$,   
    \item $(y_p^i, z_p)$ for $3\leq i$ and $(z_p, y_{p+1}^j)$ for $j \leq w-2$,
    \item $(y_p^i, v_p)$ for $4\leq i$ and $(v_p, y_{p+1}^j)$ for $j\leq w-1$,
    \item $(u_p, z_p)$ and $(z_p,v_p)$.      
\end{itemize}

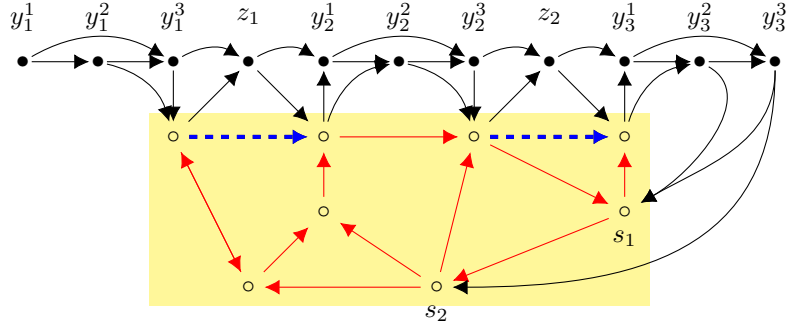
\begin{figure}
    \centering
    
    \begin{tikzpicture}
    \foreach \x/\y [count=\i] in {1/3,3/3,5/3,7/3,2/1,3/2,4.5/1,7/2} {
        \node (v\i) at (\x,\y) {$\circ$};
    }
    \node[label=south:$s_1$] at (v8) {};
    
    \node[label=south:$s_2$] (s2) at (v7) {};
    
    \foreach \s/\t [count=\i] in {1/2,3/4} {
        \draw[dashed,->, line width=1.5pt] (v\s) --(v\t);
        \draw[dashed,blue, ->, line width=1.5pt] (v\s) --(v\t);
    }
    
    \foreach \s/\t [count=\i] in {1/5,5/1,5/6,6/2,7/6,7/3,2/3,7/5,3/8,8/7,8/4} {
        \draw[red,->] (v\s) --(v\t);
    }
    \foreach \i in {1,...,11} {
           \node[outer sep=0] (t\i) at (\i*1-2, 4) {}; 
           \node at (t\i) {$\bullet$};      
    }
    
    \foreach \v [count=\index, remember=\v as \u (initially t1)] in {t2,t3,v1,t4,v2,t5,t6,t7,v3,t8,v4,t9,t10,t11}{
    \draw[->] (\u)--(\v);
    
  \ifnum \index > 1
    \ifnum \index = 5 \else 
    \ifnum \index = 11 \else 
    \draw[opacity=.3, ->] (\lastu) to[bend left] (\v);
\fi    \fi
  \fi
  \xdef\lastu{\u} 
    }
    \foreach \p in {1,2,3}{
        \foreach \x in {1,2,3} {
            \node at (\p*4+\x-6, 4.6) {$y_\p^\x$};
        }
    }
    
    \foreach \p in {1,2}{
            \node at (\p*4+-2, 4.6) {$z_\p$};        
    }
    
    \draw [out=-90,in=30,->] (t11)to (v8);  
    \draw [out=-30,in=30,->,opacity=.3] (t10)to (v8);  
    \draw [out=-90,in=0,->,opacity=.3,looseness=1.2] (t11)to (v7);   
    \begin{pgfonlayer}{background}
     \node[fit=(v1)(v4)(s2), fill=yellow!50]{} ;
    
    \end{pgfonlayer}
    \end{tikzpicture}
    \caption{ Reduction from \textsf{OptionalRealizable$_w$} to \textsf{\DU-Realizable$_w$} with $w=3$. The input instance is the highlighted graph with white vertices (including two optional dashed blue  arcs), as well as the starting sequence $(s_1,s_2)$. The reduction adds the black vertices $y_p^i$ and $z_p$, and the corresponding black and grey arcs. Any solution is thus forced to first realize all optional arcs, and then realize the rest of the graph starting with $s_1,s_2$, and including all compulsory arcs and, as needed, some of the optional arcs again.}
    \label{fig:redOption_DU}
\end{figure}

\begin{claim} $G$ has a $w$-realization with optional arcs $\Leftrightarrow$ $G'$ has a $w$-realization
\end{claim}
$\Rightarrow$ Build a realization for $G'$ by concatenating $Z$ with the realization for $G$ starting with $s_1 \ldots s_{w-1}$. All optional arcs of $G'$ are realized in $Z$, all compulsory arcs of $G'$ are realized in the suffix (the realization of $G'$), and all arcs involving a separator are realized in $Z'$. No forbidden arc is realized.

$\Leftarrow$ 
Let $S$ be a realization of $G'$. We prove by induction on $q$, for $1\leq q \leq |Z|$, that 
(i) $S$ and $Z'$ have the same prefix of length-$(q+w-1)$ and (ii) any separator in $Z[1,\ldots, q]$ may only appear in $S[1,\ldots, q]$. 

For $q=1$, this is obtained by the fact that $Z[1]=y_1^1$ has in-degree 0 in $G'$ (so $S$ starts with $y_1^1$ and $y_1^1$ does not appear again in $S$) and its out-neighborhood forms a size-$(w-1)$ tournament corresponding to $Z[2... w]$, so the length-$w$ prefix of $S$ is $Z[1...w]$.  
Consider now $1<q\leq |Z|$. By induction $S$ and $Z'$ have the same prefix of length-$(q+w-2)$, and separators up to position $q-1$ in $Z$ do not have any other occurrence in $S$.  Let $q'=q$ if $S[q]$ is a separator (case A), and $q'=q+1$ otherwise (case B). In both cases, $S[q']$ is a separator, its in-neighborhood contains at least one separator $Z[q-1]$ or $Z[q-2]$, so in particular vertex $S[q']$ may not have any other occurrence in the sequence (otherwise $Z[q-1]$ and/or $Z[q-2]$ would also have two occurrences). Furthermore, the out-neighborhood of $S[q']$ is $N=\{Z'[q'+1],\ldots,Z'[q'+w-1]\}$ without self-loops, so $S[q'+1,\ldots,q'+w-1]$ is a permutation of $N$. In case A, $w-2$ vertices of $N$ are already accounted for (by induction) in $S[q'+1,\ldots,q'+w-2]$, so the remaining vertex $Z'[q'+w-1]$ must be in position $q'+w-1$ in $S$. In case B, elements of $N$ are all in $Z$, so they form a tournament and, again, the next $w-1$ positions in $S$ and $Z'$ must be equal.

Overall, we have $S=Z S'$ with the following properties: the length-$(w-1)$ prefix of $S'$ is the starting sequence $P$, and no separator appears in $S'$. Thus $S'$ realizes only arcs from $G$. Moreover no compulsory arc of $G$ is realized in $Z$, nor with one vertex in $Z$ and one in $S'$ (since such arcs start with a separator), so all compulsory arcs are realized in $S'$. Overall, $G$ is a yes-instance of {\sf OptionalRealizable}$_w$ with sequence $S'$.

\end{proof}

\subsubsection{Reduction for weighted problems}
\label{sec:nphardnessWeighted}

Now, let us prove that {\GW-\Realizable{}$_w$} and {\DW-\Realizable{}$_w$} are NP-hard for all $w\geq 3$, by reduction from the undirected variant of HP2 of \textsf{Hamiltonian Path} (cf Lemma~\ref{lemmaHamiltonianVariant}). We focus on the directed case first, \DW-\Realizable{}$_w$, the undirected case will simply use the underlying graph introduced in this reduction.

\begin{lemma}\label{lemmaXWrealizable}
For any fixed $w\geq 3$, {\DW{}-\Realizable{}$_w$} and {\GW{}-\Realizable{}$_w$} are NP-hard.
\end{lemma}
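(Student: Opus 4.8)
The plan is to reduce from the undirected restriction \textbf{HP2} of {\sf Hamiltonian Path} (Lemma~\ref{lemmaHamiltonianVariant}): from an undirected graph $G=(V,E)$ with no self-loop and two degree-$1$ vertices $s,t$, I would build in polynomial time a weighted digraph $(G',\Pi')$ and fix the constant window $w\ge 3$, so that $(G',\Pi')$ is $w$-realizable if and only if $G$ has a Hamiltonian path --- which, since $s$ and $t$ have degree $1$, is necessarily an $s$--$t$ path spanning $V$. The undirected statement ({\GW}) then follows by taking the underlying undirected graph of $G'$ with the very same weight matrix, since the weights alone will pin down enough of the structure of any realization that arc orientations become superfluous.

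\textbf{Construction.} I would fix a long rigid ``skeleton'' word $W$ over a family of fresh \emph{separator} and \emph{padding} vertices, built in the spirit of the prefix/suffix of the {\Clique} reduction of Theorem~\ref{theorem:cliquereduction}: one block enumerating every edge of $G$ exactly once (each edge $\{u,u'\}$ written as $u\,u'$ and flanked by separators so that distinct edge-blocks never share a window), one block enumerating every vertex of $G$ exactly once, and one dedicated \emph{Hamiltonian slot} of $n$ positions, consecutive ones kept exactly $w-2$ padding vertices apart --- so that two positions filled by vertices that are consecutive in the slot land in a common window (forcing their adjacency in $G$), while non-consecutive ones do not. The vertex set of $G'$ is $V$ together with all separators and padding vertices; the arc set and $\Pi'$ are \emph{defined to be exactly those of $W$} once an arbitrary Hamiltonian path has been substituted into the slot, so $\Pi'$ is read off directly in polynomial time --- provided one has checked (see the obstacle below) that this is independent of the chosen path. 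The separators opening and closing $W$ are given in-degree $0$ (resp.\ out-degree $0$) in $G'$, so any realization must start (resp.\ end) with them, and the in-/out-weight of each vertex of $V$ is set to force it to occur exactly once and strictly inside the sequence.

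\textbf{Correctness.} For the ``if'' direction, given a Hamiltonian path $v_1=s,\dots,v_n=t$, substitute $v_1\,[\text{pad}]\,v_2\,[\text{pad}]\cdots v_n$ into the slot of $W$; a window-by-window check shows the resulting sequence realizes exactly $(G',\Pi')$. For the ``only if'' direction I would run a rigidity induction in the style of the proofs of Lemmas~\ref{lemmaoptionalrealizable} and~\ref{lemmaDUrealizable}: the in-degree-$0$ separator forces any realization to begin with a prefix of $W$; a pigeon-hole argument on the out-neighbourhoods of the separator/padding vertices (there are $w-2$ per slot, all self-loop-free) forces $W$ to be reproduced verbatim and each padding slot of the Hamiltonian block to be filled by a single vertex of $V$; the prescribed in-/out-weights force each vertex of $V$ to appear exactly once; and the size-$w$ window sliding over the Hamiltonian block forces consecutive inserted vertices to be adjacent in $G$, so the inserted word is a Hamiltonian path of $G$. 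The undirected case is handled by re-running this induction on the underlying undirected graph of $G'$: every step used only degrees and weights at separators, padding vertices, and vertices of $V$, together with the absence of self-loops and the impossibility of putting two vertices of $V$ in one padding slot --- none of which refers to arc orientation.

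\textbf{Main obstacle.} The crux is to engineer $W$ (the padding lengths, the precise separator gadget, the opening/closing anchors) and the weights $\Pi'$ so that the contribution of the Hamiltonian slot is \emph{the same for every Hamiltonian path of $G$} --- otherwise $\Pi'$ is not a well-defined, polynomial-time computable object but a function of an unknown path --- while remaining rigid enough that no ``hybrid'' realization can partially reuse the separator gadgets to fake the Hamiltonian block or conversely, and robust enough to survive the passage to the undirected graph. In particular one must arrange that the arcs used between successive slot vertices (or between a slot vertex and the edge-vertex that certifies their adjacency) are already saturated, in a path-independent way, by the edge-enumeration block, so that substituting one Hamiltonian path rather than another changes no entry of $\Pi'$. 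Making all these constraints mutually consistent for every fixed $w\ge 3$ is where essentially all the technical work lies; the remaining bookkeeping is of the same window-by-window flavour already used for the $w=2$ and \DU\ cases.
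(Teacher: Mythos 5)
Your high-level plan (reduce from HP2, build a weighted digraph whose weight matrix is the same for every Hamiltonian path, then pass to the underlying undirected graph) matches the paper's, but you have left the central difficulty unresolved, and the specific skeleton you sketch would fail it. You correctly identify that $\Pi'$ must be independent of the chosen Hamiltonian path, but your design --- an edge-enumeration block realizing every edge once, \emph{plus} a Hamiltonian slot in which consecutive slot vertices land in a common window --- necessarily realizes the $n-1$ path edges a second time, so path edges would receive strictly larger weight than off-path edges, and $\Pi'$ would depend on the path after all. Saying the path arcs must be ``already saturated, in a path-independent way, by the edge-enumeration block'' is circular: which arcs need extra saturation is exactly what the unknown path determines. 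The paper's resolution is to make the two realization mechanisms contribute \emph{identically}: an on-path edge is realized only inside $S_{path}$ via a fragment $u^{k}\,a\,v^{k}$ (with $k=w-2$), an off-path edge only inside $S_{queue}$ via $b^{w}\,u^{k}\,b\,v^{k}\,b^{w}$, and both fragments give the arc $(u,v)$ weight exactly $\binom{k+1}{2}$ (Figure~\ref{fig:graph_fragments}); all other weights (self-loops $(\delta_u+2)\binom{k}{2}+\binom{k+1}{2}$, arcs to $b$, etc.) depend only on vertex degrees. This is the missing idea, and it is not bookkeeping --- it is the reduction.

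A second gap is the reverse direction in the undirected case. You propose to port the rigidity induction of Lemmas~\ref{lemmaoptionalrealizable} and~\ref{lemmaDUrealizable}, claiming it uses nothing orientation-dependent; but that induction is anchored on in-degree-$0$ vertices and on out-neighbourhoods forming tournaments, neither of which survives forgetting orientations. The paper instead uses a purely weight-counting argument that is orientation-free by design: the total incident weight of $s_0'$ is the odd number $w-1$, forcing it to be an endpoint of the realization; a vertex of total weight $2i(w-1)$ then occurs exactly $i$ times, so each $u'$ occurs once and $a$ occurs $n+1$ times; each $u'$ sits in a factor $a\,u^{*}u'u^{*}a$, and with only $n+1$ copies of $a$ the $n$ factors must chain up, yielding the Hamiltonian path. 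Without some substitute of this kind, your undirected argument does not go through.
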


\paragraph{Reduction for \DW-\Realizable{}.}

\begin{figure}[]
  \centering
  \scalebox{.85}{
		\begin{tikzpicture}[bul/.style={main node}]
        
    \begin{scope}[shift={(3,5)}]
    
    \node[align=center] at (-.5,3) {Start Gadget:};
    
    
    \node [main node]  (s0) at (-2,0) {$s_0$};
    \node [main node] (s0p) at (-3,1.5) {$s_0'$};
    \node [main node] (a) at (0.5,.5) {$a$};
    \node [main node]  (s) at (1,-0.5) {$s$};
    
    \draw[->] (s0p) edge[bend left=5] node[midway, left]{$k$} (s0);    
    \draw[->] (s0) edge[loop below, min distance=10mm,in=30,out=60] node[midway, right]{${k \choose 2}$} (s0);
    \draw[->] (s0) edge[bend right=5] node[midway, below]{$k$} (a);
    \draw[->] (s0p) edge[bend left=10] node[midway, above]{$1$} (a);    
    \draw[->] (s0) edge[bend right=5] node[midway, below]{${k+1 \choose 2}$} (s);
    

    \end{scope}
    
    \begin{scope}[shift={(6,5)}]
    
    \node[align=center] at (3,3) {Queue Gadget:};

    \node [main node] (a) at (1,0.5) {$a$};
    \node [main node]  (t) at (1,-.5) {$t$};
    \node [main node] (b) at (3.5,0) {$b$};
    
    
    \draw[->] (t) edge[bend right=10] node[midway, below]{${k+1 \choose 2}$} (b);
    \draw[->] (a) edge[bend left=5] node[midway, above]{$k+1$} (b);    
    \draw[->] (b) edge[loop below, min distance=12mm,in=100,out=80] node[midway, above]{$(m-n+1){w \choose 2}+2(m-n)$} (b);
    \end{scope}

    \begin{scope}[shift={(-1,0)}]
    \node[align=center] at (3.5,3) {Vertex Gadget\\ (for each vertex $u$, including $s$ and $t$):  };
    \node [main node] (a) at (0,0) {$a$ };
\node [main node] (b) at (6.8,0) {$b$ };
\node [main node]  (u) at (2,0) {$u$ };
\node [main node] (up) at (1,2) {$u'$};
    \draw[->] (u) edge[loop below, min distance=18mm,in=40,out=60] node[midway, above]{$(\delta_u +2){k \choose 2}+{k+1 \choose 2}$} (u);
    \draw[<->] (u) edge[bend left=5] node[midway, left]{$k$} (up);
    \draw[<->] (a) edge[bend right=5] node[midway, above]{$k$} (u);
    \draw[<->] (a) edge[bend left=5] node[midway, below]{$1$} (up);
    
    \draw[->] (b) edge[bend left=5] node[midway, below]{$\delta^{out}_u({w\choose 2} -1)+\delta^{in}_u k$} (u);
    \draw[->] (u) edge[bend left=5] node[midway, above]{$\delta^{in}_u({w\choose 2} -1)+\delta^{out}_u k$} (b);
    
    \end{scope}
    \begin{scope}[shift={(8,0)}]
    \node[align=center] at (1,3) {Edge Gadget\\ (for each $\{u,v\}$):    };
    \node [main node] (u) at (0,1) {$u$};
    \node [main node] (v) at (2,1) {$v$};
    
    \draw[<->] (u) edge[] node[midway, below]{${k+1 \choose 2}$} (v);
    
    \end{scope}

    \end{tikzpicture}
    
  }
  \caption{\label{fig:gadgetsDW}Subgraphs used in the reduction from {\sf Hamiltonian Path} to {\sf DW-Realizable}$_3$. Weights on double arcs apply to both directions. Note that arcs $(t,b)$ appear in two different gadgets, so their weights should be summed.
  } 

\begin{tikzpicture}
    \node at (-.8,2) {$G$};
    \foreach \l [count=\i] in {s,u,v,w,t} {
    \pgfmathtruncatemacro{\a}{180+60/2-\i*60}
        \node[main node] (u\i) at (\a:1) {$\l$};
        \node[main node] (u\l) at (u\i) {$\l$};
    }
    \foreach \u/\v in {s/u,u/v,v/w,w/t,u/w}{
       \draw  (u\u) --(u\v);
    }
    \begin{scope}[shift={(4,0)}]
    \node at (-1.6,2) {$G'$};
    \foreach \l [count=\i] in {s,u,v,w,t,s_0} {
    \pgfmathtruncatemacro{\a}{180+60/2-\i*60}
        \node[main node] (u\i) at (\a:1) {$\l$};
        \node[main node] (u\l) at (u\i) {$\l$};
        \node[main node] (up\i) at (\a:1.8) {$\l'$};
        \node[main node] (up\l) at (up\i) {$\l'$};
    }
    \foreach \u/\v in {u/w}{   \draw[black!40]  (u\u) --(u\v);}
    \foreach \u/\v in {s/u,u/v,v/w,w/t}{\draw[black, line width=2pt,->]  (u\u) --(u\v);}
    \node[main node] (a) at (0,0) {$a$};
    \foreach \i in {1,...,6}{
    \pgfmathtruncatemacro{\a}{180+60/2-\i*60}
      \ifnum \i<6
        \draw[red]         
            (\a+30:.2) 
            -- (\a+10:1) 
              edge[loop below, min distance=4mm,in=\a+70,out=\a+110] (\a+10:1) 
            -- (\a:1.5);
            \fi
            \draw[red]  (\a:1.5)
            -- (\a-10:1)edge[loop below, min distance=4mm,in=\a-70,out=\a-110] (\a-10:1)   
            --    (\a-30:.2) ;
    }
    \end{scope}
    \begin{scope}[shift={(9,0)}]
    \node at (-1.6,2) {$G'$};
    \foreach \l [count=\i] in {s,u,v,w,t,s_0} {
    \pgfmathtruncatemacro{\a}{180+60/2-\i*60}
        \node[main node] (u\i) at (\a:1) {$\l$};
        \node[main node] (u\l) at (u\i) {$\l$};
        \node[main node] (up\i) at (\a:1.8) {$\l'$};
        \node[main node] (up\l) at (up\i) {$\l'$};
    }
    \foreach \u/\v in {u/w}{ \draw[black, line width=2pt,->]  (u\u) --(u\v);}
    \foreach \u/\v in {s/u,u/v,v/w,w/t}{\draw[black!40]  (u\u) --(u\v);}
    \node[main node] (a) at (0,0) {$b$};
    \pgfmathtruncatemacro{\i}{3}
    \pgfmathtruncatemacro{\a}{180+60/2-\i*60}
        \draw[red]     
            (\a+70:.2) 
              edge[loop below, min distance=4mm,in=\a+170,out=\a+130] (\a+70:.2)         
            -- (\a+60:.8)
            edge[loop below, min distance=4mm,in=\a+170,out=\a+130] 
            (\a+60:.8)        
           -- (\a:.2)
            -- (\a-60:.8)
            edge[loop below, min distance=4mm,in=\a-170,out=\a-130] 
            (\a-60:.8)        
           --(\a-70:.2) 
              edge[loop below, min distance=4mm,in=\a-170,out=\a-130] (\a-70:.2)        ;
    \end{scope}
    \end{tikzpicture}
  \caption{Reduction from {\sf Hamiltonian Path} to \textsf{DW-Realizable}.  
    Left: the input graph $G$ with degree-one vertices $s$ and $t$ (and Hamiltonian Path $(s,u,v,w,t)$). Each edge becomes two arcs (in each direction) in the edge gadgets of the resulting graph $G'$. Center: the first part of the path realizing most edges of $G'$ (in particular those involving vertex $a$ and primed versions of vertices), following the Hamiltonian path. In particular, bold arcs from the input graph are realized. Right: the remaining arcs (such as $(u,w)$, but also $(w,u)$, $(u,s)$, $(v,u)$, etc.) are realized using a succession of round-trips with vertex $b$. Self-loops represent iterations of $k=w-2$ or $w$ occurrences.}
  \label{fig:DWRealizable}
\end{figure}

Given $G=(V,E)$ undirected with degree-1 vertices $s$ and $t$, write $d_u$ for the degree of each vertex $u\in V$, and $k=w-2$ (note that $k\geq 1$ since we chose $w\geq 3$). We write $\delta^{in}_s=d_s$ and $\delta^{in}_u=d_u-1$ for $u\in V\setminus \{s\}$; and  $\delta^{out}_t=d_t$ and $\delta^{out}_u=d_u-1$ for $u\in V\setminus \{t\}$ ($\delta^{in}_u$ and $\delta^{out}_u$ can be seen as the remaining in- and out-degree in the oriented graph where edges are replaced by double arcs after removing an Hamiltonian $s-t$ path). We write $\delta_u=\delta^{in}_u+\delta^{out}_u$.
Build a directed weighted graph $G'=(V',A)$ as follows.
For each $u\in V$, add $u$ and a new vertex denoted $u'$ to $V'$. Create additional dummy vertices $s_0$, $s_0'$, $a$ and $b$. The overall vertex set is thus $V':=\{a,b,s_0,s_0'\} \cup \bigcup_{u\in V}\{u,u'\}$. 
The arcs of $A$ are given in Figure~\ref{fig:gadgetsDW}, as the union of the start gadget, the queue gadget, and the vertex and edge gadgets respectively for each vertex and edge of $G$. An example realization is given in Figure~\ref{fig:DWRealizable}.

\paragraph{Reduction for \GW-\Realizable.}
Build the directed graph $G'$ as above, and let $G_u'$ be the undirected version of $G'$: remove arc orientations, for $u\neq v$ the weight of $\{u,v\}$ is the sum of the weight of $(u,v)$ and $(v,u)$ in $G'$ (the weight of loops is unchanged).

\paragraph{Correctness of the reduction.}
We prove the following three claims:
\begin{claim}\label{claim:i}
$G$ Hamiltonian $\Rightarrow$ $G'$ has a realization
\end{claim}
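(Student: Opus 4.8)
The plan is to convert a Hamiltonian path $s=v_1,v_2,\dots,v_n=t$ of $G$ into an explicit sequence $x$ over $V'$ and then verify that the weighted $w$-sequence digraph of $x$ is exactly $G'$. Write $k=w-2$, and for a symbol $\sigma$ let $\sigma^{\ell}$ denote the block of $\ell$ consecutive copies of $\sigma$. Following Figure~\ref{fig:DWRealizable}, the sequence will have the form
\[
x \;=\; x_{\mathrm{start}}\cdot x_{\mathrm{path}}\cdot x_{\mathrm{clean}},
\]
where $x_{\mathrm{start}}$ reads off the start gadget (a short word on $s_0',s_0,a$ ending with a block of $s$'s); $x_{\mathrm{path}}$ walks along the Hamiltonian path, contributing for each $v_i$ the ``$a$-incident'' part of its vertex gadget (the double arcs $a\leftrightarrow v_i$, $a\leftrightarrow v_i'$, $v_i\leftrightarrow v_i'$ and a first portion of the loop $(v_i,v_i)$) together with one copy of the edge gadget between $v_i$ and $v_{i+1}$; and $x_{\mathrm{clean}}$ is a long succession of round-trips $b\to u\to b$ through the vertices $u$ that mops up everything that remains, namely the leftover copies of all edge gadgets (both directions of every non-path edge, and the ``return'' direction of every path edge), the remaining loop weights, the queue gadget, and the $(b,b)$ self-loop. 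By construction $a$ occurs only inside $x_{\mathrm{start}}\cdot x_{\mathrm{path}}$ and $b$ only inside $x_{\mathrm{clean}}$ (apart from the single $a\to b$ transition of weight $k+1$), so the two phases never interact.

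The verification then proceeds arc type by arc type through Figure~\ref{fig:gadgetsDW}: for each arc of $G'$ I point to the block (or blocks) of $x$ that witnesses it and observe that, since the arcs of each gadget are incident only to a bounded vertex set, the count localises to that block and reduces to elementary arithmetic on runs --- a run $\sigma^{\ell}$ with $\ell\le w$ contributes exactly $\binom{\ell}{2}$ to the loop $(\sigma,\sigma)$, a block $\sigma^{a}\tau^{b}$ contributes a fixed computable amount to $(\sigma,\tau)$, and so on. The quantities $\delta^{in}_u,\delta^{out}_u,\delta_u$ are precisely the bookkeeping that makes the totals balance: vertex $u$ lies in $d_u$ edge gadgets, of which one --- two if $u$ is internal to the path --- is consumed by $x_{\mathrm{path}}$ and the rest by $x_{\mathrm{clean}}$, and the loop weight $(\delta_u+2)\binom{k}{2}+\binom{k+1}{2}$ together with the weights of the arcs $u\leftrightarrow b$ are calibrated so that the number of $b$-round-trips through $u$ is forced to the correct value. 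Symmetrically, one checks that no arc outside $A$ is ever witnessed --- in particular none between two primed vertices, no extra $a\to b$ occurrence, and no $(u,v)$ for a non-edge $\{u,v\}$ of $G$ --- because the ``fresh'' separator symbols ($s_0,s_0'$, the primed vertices) padding the blocks keep every size-$w$ window inside a single intended sub-block. Finally, the very same $x$ also realizes $G_u'$, since the undirected weight $\pi_{\{u,v\}}$ was defined as $\pi_{(u,v)}+\pi_{(v,u)}$.

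I expect the difficulty to be one of precision rather than of ideas: there is no clever trick, but the block lengths and relative offsets must be chosen so that every prescribed weight is attained \emph{exactly} (not merely up to an additive constant) and so that the padding genuinely forbids spurious co-occurrences across block boundaries. The cleanest presentation is a table with one row per arc type of $G'$, recording which segment of $x$ contributes and the local count; the degree identities above then collapse the whole check to verifying a handful of binomial identities such as $\binom{k+1}{2}=\binom{k}{2}+k$ and $\binom{w}{2}=\binom{k+1}{2}+k+1$.
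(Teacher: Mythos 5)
Your proposal follows essentially the same route as the paper's proof: an explicit three-phase witness sequence (a start-gadget prefix on $s_0',s_0$, a walk along the Hamiltonian path using $a$ and the primed vertices as separators, then a $b$-separated cleanup phase covering all remaining arcs), verified gadget by gadget against Figure~\ref{fig:gadgetsDW}. The paper's explicit sequence $S_{init}S_{path}S_{queue}$ with $S_{queue}=b^w v_1^k\, b\, w_1^k\, b^w\cdots$ pins down the one detail your sketch of ``round-trips $b\to u\to b$'' leaves ambiguous: each leftover ordered pair $(v_j,w_j)$ must be separated by a \emph{single} $b$ rather than a full $b$-block, so that the two runs share a size-$w$ window and the edge-gadget weight $\binom{k+1}{2}$ is actually attained in the cleanup phase.
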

\begin{claim}\label{claim:ii}
$G'$ has a realization $\Rightarrow$ $G'_u$ has a realization
\end{claim}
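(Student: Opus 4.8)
The plan is to prove Claim~\ref{claim:ii} by showing that \emph{the very same sequence} that realizes $G'$ also realizes $G'_u$. The underlying principle is that, for any sequence $S=x_1\ldots x_p$ and any window $w$, taking the undirected weighted sequence graph commutes with symmetrizing: the undirected sequence graph of $S$ is obtained from the directed one by forgetting arc orientations and adding the weights of the two opposite arcs between each pair of distinct vertices, loops being kept as they are. Since $G'_u$ is \emph{by construction} precisely this symmetrization of $G'$, a realization of $G'$ is automatically a realization of $G'_u$.

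Concretely, I would fix a $w$-realization $S=x_1\ldots x_p$ of $G'$, so that the directed weighted sequence graph of $S$ equals $G'$ together with its weight matrix, and then compute the undirected weighted sequence graph of $S$ from Definition~\ref{definition_sequence_graphs}. For two \emph{distinct} vertices $u\neq v$, the set $\{(k,k')\in[p]^2 : 0<|k-k'|<w,\ x_k=u,\ x_{k'}=v\}$ splits according to the sign of $k'-k$, each block being exactly the set counted by the directed weight of $(u,v)$, resp.\ $(v,u)$, in the sequence graph of $S$; hence
\[
  \pi^{\mathrm{und}}_{\{u,v\}}(S)\;=\;\pi^{\mathrm{dir}}_{(u,v)}(S)+\pi^{\mathrm{dir}}_{(v,u)}(S)\;=\;\pi_{(u,v)}(G')+\pi_{(v,u)}(G'),
\]
which is precisely the weight put on $\{u,v\}$ in $G'_u$. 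For a loop at a vertex $u$, each unordered pair of positions carrying $u$ within distance $<w$ is counted once (this is the convention implicit in Algorithm~\ref{algo:graph_sequence}), so the loop weight at $u$ in the sequence graph of $S$ equals $\pi_{(u,u)}(G')$, which is also the (unchanged) loop weight at $u$ in $G'_u$. Finally, a loop or non-loop edge is present in the sequence graph of $S$ iff the corresponding count is positive iff it has positive weight in $G'_u$, and $V(G')=V(G'_u)$ contains exactly the symbols occurring in $S$; so the undirected weighted sequence graph of $S$ is $G'_u$, i.e.\ $S$ realizes $G'_u$.

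I do not anticipate any serious difficulty here: the claim is a direct bookkeeping consequence of the definitions, valid for an arbitrary digraph and its symmetrization, and uses nothing specific to the gadget construction. The only point needing a little care is the self-loop case, where one must use the unordered-pair count (as in Algorithm~\ref{algo:graph_sequence}) so that loop weights carry over verbatim from $G'$ to $G'_u$; with that convention everything goes through cleanly. The genuine substance of the reduction lies in Claim~\ref{claim:i} and in the converse implication ($G'_u$ realizable $\Rightarrow$ $G$ Hamiltonian), where the carefully chosen gadget weights force the global shape of any realization; Claim~\ref{claim:ii} is merely the bridge that lets the single directed construction serve both the \DW and \GW hardness results.
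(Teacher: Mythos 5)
Your proof is correct and takes essentially the same route as the paper, which dispatches this claim in a single sentence (``Clearly, any realization for $G'$ is a realization for $G'_u$''); your argument is just the explicit bookkeeping — splitting the position pairs by the sign of $k'-k$ and noting loops are unaffected — that justifies why the symmetrization of the directed sequence graph of $S$ is exactly $G'_u$.
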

\begin{claim}\label{claim:iii}
$G'_u$ has a realization $\Rightarrow$ $G$ is Hamiltonian
\end{claim}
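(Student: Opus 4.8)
The plan is to establish the three stated claims; together with the fact that the construction of $G'$ (hence of $G'_u$) is clearly computable in polynomial time --- it has $O(|V|+|E|)$ vertices, all edge multiplicities are polynomially bounded in $|V|$, and $w$ is a fixed constant --- and with the NP-hardness of the Hamiltonian-path restriction HP2 (Lemma~\ref{lemmaHamiltonianVariant}), the chain ``$G$ Hamiltonian $\Rightarrow$ $G'$ realizable $\Rightarrow$ $G'_u$ realizable $\Rightarrow$ $G$ Hamiltonian'' yields both NP-hardness statements of Lemma~\ref{lemmaXWrealizable} at once: the equivalence ``$G$ Hamiltonian $\Leftrightarrow$ $G'$ realizable'' gives \DW-\RealizableG{}, and ``$G$ Hamiltonian $\Leftrightarrow$ $G'_u$ realizable'' gives \GW-\RealizableG{}. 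Claims~\ref{claim:i} and~\ref{claim:ii} are the routine directions; Claim~\ref{claim:iii} is the crux.

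For Claim~\ref{claim:i} I would argue constructively. Fix a Hamiltonian path $v_1=s,v_2,\dots,v_n=t$ of $G$ and exhibit a sequence $S$ built from three phases matching the three gadget families of Figure~\ref{fig:gadgetsDW}: a short \emph{start phase} around $s_0',s_0,a,s$; a \emph{traversal phase} that, for $i=1,\dots,n$, visits the vertex gadget of $v_i$ (performing exactly the prescribed round trips to $v_i'$ and to $a$, so as to account for the $k$-weighted double arcs and the self-loop on $v_i$) and then moves from $v_i$ to $v_{i+1}$ through the edge gadget of $\{v_i,v_{i+1}\}$; and a \emph{queue phase} consisting of controlled round trips through $b$ that realize every remaining edge-gadget arc, every remaining arc incident to $b$, and the large loop on $b$. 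One then verifies, gadget by gadget, that the window-$w$ weighted (di)graph of $S$ is exactly $G'$. This check is mechanical: the multiplicities in Figure~\ref{fig:gadgetsDW} --- $\binom{k}{2}$, $\binom{k+1}{2}$, $\binom{w}{2}$, $(\delta_u+2)\binom{k}{2}+\binom{k+1}{2}$, etc. --- were reverse-engineered so that a maximal run of $k$, $k+1$, or $w$ equal symbols (resp. a bounded pattern of the above shapes) inside a window of size $w=k+2$ contributes precisely these values. An explicit instance of such an $S$ is the one depicted in Figure~\ref{fig:DWRealizable}.

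Claim~\ref{claim:ii} is essentially definitional: if $S$ realizes the directed weighted graph $G'$ with window $w$, then the very same sequence realizes $G'_u$ with window $w$. Indeed, for $u\neq v$ the number of unordered pairs of positions $\{i,j\}$ with $0<|i-j|<w$ and $\{x_i,x_j\}=\{u,v\}$ equals the number of ordered pairs carrying $(u,v)$ plus those carrying $(v,u)$, which is exactly how the multiplicity of $\{u,v\}$ in $G'_u$ was defined from those of $(u,v)$ and $(v,u)$ in $G'$; and for a loop $\{u,u\}$ the ordered and unordered counts coincide, so its multiplicity is unchanged. Hence $G'_u$ is realizable whenever $G'$ is.

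Claim~\ref{claim:iii} is where the work lies, and I expect it to be the main obstacle. Given a sequence $S$ realizing the undirected weighted graph $G'_u$, I would show that the multiplicities force $S$ --- up to reversal, since orientations have been discarded --- to decompose into the same three phases as in Claim~\ref{claim:i}, and then read a Hamiltonian path off the traversal phase. The key sub-steps: (a) the only edges of $G'_u$ incident to $s_0$ and $s_0'$ are start-gadget edges, which pins $s_0$ together with its loop multiplicity to one extremity of $S$, forces the opening block, and forces the traversal to begin at $s$; (b) for each $u\in V$, the self-loop multiplicity on $u$ together with the $k$-weighted double edges $\{u,u'\}$ and $\{a,u\}$ forces $u$ to be entered exactly once, as a single ``vertex block'', from which the only admissible exit toward a non-dummy vertex passes through an edge gadget, i.e.\ goes to a neighbor of $u$ in $G$; (c) every vertex gadget must be used in full (its loop and $u'$ multiplicities cannot otherwise be exhausted), so the succession of vertex blocks is a walk of $G$ meeting all $n$ vertices and using each at most once, hence a Hamiltonian path; (d) the residual multiplicities --- the oversized loop on $b$, the $\binom{w}{2}-1$ and $\delta^{in}_u,\delta^{out}_u$ contributions, and the unused edge-gadget edges --- are exactly absorbed by the queue phase when, and only when, the rest is a Hamiltonian path, so no obstruction remains. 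The delicate point is (b)--(c): because $G'_u$ no longer distinguishes $(u,v)$ from $(v,u)$, one must rule out ``shortcuts'' (a single window straddling two distinct vertex gadgets, or a vertex being revisited) using edge multiplicities alone. I would do this by a global counting argument --- summing, over all windows of $S$, the contributions to each ``weight class'' and showing that any deviation from the intended block structure makes some multiplicity strictly exceed its target; the additive constants inserted on purpose in Figure~\ref{fig:gadgetsDW} (the $+2$ in $(\delta_u+2)\binom{k}{2}$, the $\binom{w}{2}-1$ terms, the precise loop weight on $b$) are precisely what make this accounting tight.
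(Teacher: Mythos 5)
Your overall strategy for Claim~\ref{claim:iii} --- recover the intended block structure from the multiplicities and read off a Hamiltonian path --- points in the right direction, but the plan has a genuine gap and one incorrect step. Sub-step (b) asserts that the weights force each original vertex $u$ to be ``entered exactly once''; this is false even for the intended realization, where $u$ occurs once in the traversal phase and $\delta_u$ further times in the queue phase. The vertices whose occurrence counts can actually be pinned down are $s_0'$, the primed copies $u'$, and $a$, and the paper's proof does exactly this via a simple local identity you never state: the sum of the weights of the edges incident to a vertex (loops counted twice) equals $2(w-1)$ times its number of occurrences, minus a deficiency that is nonzero only for occurrences within $w-1$ positions of an end of $S$. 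Since $s_0'$ has incident weight $w-1$ it occurs once, at an extremity; since each $u'$ has incident weight $2(w-1)$ it occurs exactly once; since $a$ has incident weight $2(n+1)(w-1)$ it occurs exactly $n+1$ times. Because $u'$ is adjacent only to $a$ (weight $1$) and $u$ (weight $k$), its unique occurrence sits inside a factor of the form $a\,u^{*}u'u^{*}a$; there are $n$ such factors but only $n+1$ occurrences of $a$, so by pigeonhole consecutive factors must share their $a$, and a shared $a$ puts the last $u_i$ and the first $u_{i+1}$ into a common window (this is where $w\ge 3$ is used), forcing $\{u_i,u_{i+1}\}\in E$ and hence a Hamiltonian path.

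By contrast, your proposed replacement for this mechanism --- ``summing over all windows and showing that any deviation from the intended block structure makes some multiplicity strictly exceed its target'' --- is not an argument one can execute as stated: there is no single quantity being bounded, and infinitely many candidate sequences would have to be excluded without an invariant controlling them. You also do not need sub-step (d): the paper never verifies that the queue phase is ``exactly absorbed''; once the factors around the $u'$ are chained, the Hamiltonian path is already extracted. Your treatments of Claims~\ref{claim:i} and~\ref{claim:ii} match the paper.
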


\begin{proof}[Proof of Claim~\ref{claim:i}]
Assume that $G$ has an Hamiltonian path and denote its vertices as $u_1,u_2,\ldots u_n$ according to their position along the path (wlog.,  $u_1=s$ and $u_n=t$).  Let $(v_1,w_1),\ldots (v_{m'},w_{m'})$ be the pairs of connected vertices in $G$ that are not consecutive vertices of the Hamiltonian path (formally, it corresponds to the set $\bigcup_{\{u,v\}\in E}\{(u,v),(v,u)\}\setminus\{(u_i,u_{i+1})\mid 1\leq i<n\}$). Note that there are $m'=2m-(n-1)$ such pairs. We now show that the sequence $S$ defined as follows is a realization of $G$ (recall that $k=w-2$).

\begin{align*} 
S &:= S_{init} S_{path} S_{queue} \quad \text{with} \\
S_{init}&:=
  s_0'\,s_0^k \\
 S_{path}&:=a\,  s^k\,s'\,s^k\, a\, u_2^k\,u_2'\,u_2^k\, a \ldots  a\, u_{n-1}^k\,u_{n-1}'\,u_{n-1}^k\,  a\, t^k\,t'\,t^k\, a\\
 S_{queue}&:=
  b^w\,v_1^k\,b\,w_1^k\,b^w\,v_2^k\, b\, w_2^k \ldots b^w\,v_{m-n}^k\, b\, w_{m-n}^k\, b^w
  \end{align*}

We verify for each gadget that all arcs are indeed realized with the correct weight\footnote{For most arcs, the weights are not actually relevant and are only computed since they must be part of the input. Only weights of arcs incident to $s_0'$, $a$ and vertices $u'$ are used in the rest of the proof (proof of Claim~\ref{claim:iii}). The central point here is that the sequence graph of $S$ is the same for \emph{any} sequence $S$ obtained from an Hamiltonian path of $G$.}, Figure~\ref{fig:graph_fragments} helps compute the weights of short string fragments.
\begin{figure}
\begin{center}
		\begin{tikzpicture}[bul/.style={main node}]

    \begin{scope}[shift={(0,2)}]
    
    \node[align=center] at (-4,.5) {$x^k a y^k \rightarrow$};

    \node [main node] (x) at (-1.2,0.5) {$x$};
    \node [main node]  (y) at (1.2,0.5) {$y$};
    \node [main node] (a) at (0,0) {$a$};
    
    \draw[->] (x) edge[bend right=15] node[midway, below]{$k$} (a);    
    \draw[->] (a) edge[bend right=15] node[midway, below]{$k$} (y);    
    \draw[->] (x) edge[bend left=5] node[midway, above]{${k+1\choose 2}$} (y);    
    \draw[->] (x) edge[loop, min distance=7mm,in=170,out=190] node[midway, below]{${k \choose 2}$} (x);
    \draw[->] (y) edge[loop, min distance=7mm,in=10,out=-10] node[midway, below]{${k \choose 2}$} (y);
    \end{scope}
\begin{scope}[shift={(0,-2)}]

\node[align=center] at (-4,.5) {$b x^k b^w \rightarrow$};

\node [main node] (x) at (-1.2,0.5) {$x$};
\node [main node]  (b) at (1.2,0.5) {$b$};

\draw[->] (x) edge[bend left=15] node[midway, above]{${w\choose 2}-1$} (b);    
\draw[->] (b) edge[bend left=15] node[midway, below]{$k$} (x);    
\draw[->] (x) edge[loop, min distance=7mm,in=170,out=190] node[midway, below]{${k \choose 2}$} (x);
\draw[->] (b) edge[loop, min distance=7mm,in=10,out=-10] node[midway, below]{${w \choose 2}+1$} (b);
\end{scope}
    
    \begin{scope}[shift={(0,0)}]
    	
    	\node[align=center] at (-4,.5) {$b^w x^k b \rightarrow$};

    	\node [main node] (x) at (-1.2,0.5) {$x$};
    	\node [main node]  (b) at (1.2,0.5) {$b$};

    	\draw[->] (x) edge[bend left=15] node[midway, above]{$k$} (b);    
    	\draw[->] (b) edge[bend left=15] node[midway, below]{${w\choose 2}-1$} (x);    
    	\draw[->] (x) edge[loop, min distance=7mm,in=170,out=190] node[midway, below]{${k \choose 2}$} (x);
    	\draw[->] (b) edge[loop, min distance=7mm,in=10,out=-10] node[midway, below]{${w \choose 2}+1$} (b);
    \end{scope}

    \end{tikzpicture}
\end{center}
\caption{\label{fig:graph_fragments} Sequence graphs for strings of the form $x^kay^k$, $b^wx^k b$ or $bx^k b^w$ for any $x,y, a,b$, with window size $w=k+2$. }  
\end{figure}
The start gadget corresponds exactly to arcs in $S_{init}$ or overlapping $S_{init}$ and $S_{path}$. 
Regarding the vertex gadget for $u\in V$, $S_{path}$ realizes all arcs involving two distinct vertices among $a,u,u'$. 
$S_{path}$ also yields ${k\choose 2}+{k+1\choose 2}$ self-loops for $u$, and $S_{queue}$ yields the remaining $\delta_u{k\choose 2}$ self-loops (since each vertex appears $\delta_u$ times there). $S_{queue}$ also realizes all arcs between $u$ and $b$. 
For an edge gadget $\{u,v\}$ if $uv$ (resp. $vu$) is part of the Hamiltonian path, then the arc $(u,v)$ (resp. $(v,u)$) is realized in $S_{path}$, otherwise it is realized in $S_{queue}$. Finally, the arcs in the queue gadget are realized either in $S_{queue}$, either as overlapping arcs between $S_{path}$ and $S_{queue}$.
\end{proof}

\begin{proof}[Proof of Claim~\ref{claim:ii}]
Clearly, any realization for $G'$ is a realization for $G'_u$.
\end{proof}

\begin{proof}[Proof of Claim~\ref{claim:iii}]
Pick a realization $S$ of $G'_u$.  Define the weight of a vertex in $G_u$ as the sum of the weights of its incident edges (counting loops twice). From the construction, we obtain the following weights for a selection of vertices:
\begin{itemize}
    \item $s_0'$ has weight $k+1=w-1$
    \item $u'$ has weight $2(w-1)$ for $u\in V$
    \item $a$ has weight $2(n+1)(w-1)$    
\end{itemize}

From the weight of $s_0'$, it follows that this vertex must be an endpoint of $S$ (wlog, $S$ starts with $s_0'$). Then that for any other vertex $v$ with weight $2i(w-1)$, $v$ must have exactly $i$ occurrences in $S$ (in general it can be either $i$ or $i+1$, but if $v$ has $i+1$ occurrences it must be both the first and last character of $S$, i.e. $v=s_0'$: a contradiction). Thus each $u'$ occurs once and $a$ occurs $n+1$ times in $S$.

Each $u'$ occurs once, so order vertices of $V$ according to their occurrence in $S$ (i.e. $V=\{u_1, \ldots, u_n\}$ with $u'_1$ appearing before $u'_2$, etc.). For each $i$, the neighborhood of $u'_i$ in $S$ contains $a$ twice, one $a$ on each side (since there is no $(a,a)$ loop). Other neighbors of $u_i'$ may only be occurrences of $u_i$, so each $u_i'$ belongs to a factor, denoted $X_i$, of the form $a u_i^* u_i' u_i^* a$.
Two consecutive factors $X_{i}, X_{i+1}$ may overlap by at most one character ($a$), and if they do, then there exists an arc $(u_i,u_{i+1})$ in $A$ , hence an edge $\{u_i,u_{i+1}\}$ (since $w\geq 3$) in $E$. There are $n$ such factors $X_{u_i}$, and only $n+1$ occurrences of $a$, so all $a$s except extreme ones belong to the overlap of two consecutive $X_i$s, and there exists an edge $\{u_i,u_{i+1}\}$ for each $i$. Thus $(u_1,\ldots, u_n)$ is an Hamiltonian path of $G$.
\end{proof}

All together, claims~\ref{claim:i}, ~\ref{claim:ii} and ~\ref{claim:iii} show the correctness of the reductions for both \GW-\Realizable and \DW-\Realizable from Hamiltonian Path (HP2) 
since they yield :\begin{eqnarray*}
  G \text{ is Hamiltonian }\Leftrightarrow G' \text{ has a realization}\\
  G \text{ is Hamiltonian }\Leftrightarrow G_u' \text{ has a realization}\\
\end{eqnarray*}

 This completes the proof of Lemma~\ref{lemmaXWrealizable}.

\subsection{Exponential algorithms for weighted problems}\label{sec:effective_general_algo}
In spite of its NP-hardness, the weighted version of \Realizable{} can be solved exactly for moderate instance sizes. In this section, we provide two complementary exponential-time algorithms, respectively based on Integer Linear Programming and Dynamic Programming, to effectively solve our problem in the context of tame instances. The latter runs in $\mathcal{O}(n^w\,2^{wp})$, and produces the total number of realizations.

\subsubsection{Linear integer programming formulation for \DW- and \GW-\Realizable{w}}
Let $G=(V,E)$ be a graph with integer weights $\pi_{e\in E}$. We consider first the directed case \DW, and show how our results extend to \GW at the end of this section. 
In our model, we represent a sequence $x$ over a size-$n$ alphabet $V$è, as a boolean matrix $X \in \mathbb{M}_{n,p}(\{0,1\})$ encoding the sequence $x$:
\begin{equation*}
 X_{v,j}=\begin{cases}
    1  \quad \mathrm{if} \; x_j = v  \\
    0  \quad \mathrm{otherwise}
  \end{cases}
\end{equation*}
The length-$p$ sequences over $V$ are thus in bijection with the boolean matrices such that $ \forall j \in [p], \sum_{v\in V} X_{v,j} = 1 $.

In order to encode sliding window constraints, we define the set $\mathcal C$ of all pairs of positions occurring together in a size-$w$ window, $\mathcal C=\{(i,j) \mid i,j\in [p], i<j<i+w\}$.
We use an intermediary slack variable $y_{i,j}^{e} \in \{0,1\}$ for each $(i,j)\in\mathcal C$ and $e=(v_1,v_2)\in E$ to model the appearance of $v_1, v_2$ together at indices $i,j$ in a size-$w$ window, i.e. $y_{i,j}^{e}$ is equal to $1$ when $v_1$ is located at position $j$ and  $v_2$ at position $j+i$, and 0 otherwise. This is achieved with the following linear constraints
\begin{equation*}
  \begin{alignedat}{2}
    -  X_{v_1, i}  &   && + y_{i,j}^{e}   \leq  0 \\
    &   - X_{v_2, j}  && + y_{i,j}^{e}    \leq  0 \\
    X_{v_1, i} & + X_{v_2, j}  &&  - y_{i,j}^{e}   \leq  1 
  \end{alignedat}
\end{equation*}
For a length-$p$ sequence, the number of possible position pairs $(i,j)\in \mathcal C$ is given by:
$$|\mathcal C| = \sum_{d=1}^{w-1}(p-d) = p(w-1) - \frac{w(w-1)}{2} = (w-1)(p - \frac{w}{2})$$

We also need to forbid missing pairs not forming edges in the graph, which give the following constraint for each $(v_1,v_2)\notin E$ and every $(i,j)\in\mathcal C$
\begin{equation*}\label{constraint_absence}
  X_{v_1,i}  + X_{v_2, j} \leq  1
\end{equation*}
It is worth noting that the value of $p$ can be directly computed from the input, since the weight matrix and window size entirely constraints the realization size. More precisely, $p$ is such that $|\mathcal C| = \sum_{e\in E} \pi_e$, which gives $$p=\frac w2+\frac{\sum_{e\in E} \pi_e}{w-1}$$

Then, \DW-\Realizable{w}  can be formulated as an integer linear program:
\begin{align*}
  \min_{\substack{ X\in \{0,1\}^{p\times n}\\ y \in \{0,1\}^{|E|\times |\mathcal C|}}} 
    &\sum_{e \in E}  \sum_{i,j\in \mathcal C} y_{i,j}^{e} 
\\
\text{such that }\quad    \forall j \in [p]  \quad &\sum_{v=1}^{n} X_{v,j} = 1
\\
        \left. \begin{aligned}
      &\forall e=(v_1,v_2)  \in E  \\
            &\forall (i,j)  \in \mathcal C
  \end{aligned}\right. \quad&
  \left\{\begin{aligned}
    -  X_{v_1, i}  &   && + y_{i,j}^{e}   \leq  0 \\
    &   - X_{v_2, j}  && + y_{i,j}^{e}    \leq  0 \\
    X_{v_1, i} & + X_{v_2, j}  &&  - y_{i,j}^{e}   \leq  1 \\
  \end{aligned}\right.
\\ 
    \left. \begin{aligned}
            &\forall e{'}=(v'_1, v'_2)  \notin E    \\ 
            &\forall (i,j)  \in \mathcal C  
  \end{aligned}\right. \quad&
  X_{v'_1 , i}  + X_{v'_2 , j}   \leq 1 
\\
    \mathrm{and} \; \; \forall e \in E 
         \quad&\sum_{(i,j)\in\mathcal C} y^{e}_{i,j}   \geq  \pi_e 
\end{align*}

If the objective function reaches $\sum_{e \in E} \pi_e$ at its minimum then the output of \DW-\Realizable{w}$(G, \Pi)$ is True, and False otherwise.

For the undirected variant, for each edge $e=\{u,v\}$ with $u\neq v$ we build the ILP as if we had two directed edges $(u,v)$ and $(v,u)$. The only specificity is for the edge weight  verification, that becomes the following:
$$ \forall e=\{u,v\} \in E , \sum_{(i,j)\in\mathcal C} y^{(u,v)}_{i,j} +y^{(v,u)}_{i,j}   \geq  \pi_e. $$
This ensures that arcs $(u,v)$ and $(v,u)$ get a total weight of $\pi_e$, that can be shared in any way. We thus get an ILP for   \GW-\Realizable{w}.

\subsubsection{Dynamic programming algorithm for \DW- and \GW-\NumRealizations{w}}\label{dynamic_programming_subsection}
We present here a baseline dynamic programming algorithm which, despite having punishing complexity, allows to compute the number of realizations for modest instances of the weighted directed and undirected cases.

The recursion proceeds by extending a partial sequence, initially set to be empty, keeping track along the way of represented edges and of the vertices appearing in the last window. 
Namely, consider $N_w[\Pi, p, \mathbf{u}]$ to be the number of $w$-realizations of length $p$ for the graph $G=(V,E)$, respecting a weight matrix
$\Pi=(\pi_{e})_{e \in E}$, preceded by a sequence of nodes $\mathbf{u}:= (u_1,\ldots, u_{|\mathbf{u}|}) \in V^{\star}$. It can be shown that, for all $p \ge 1$, $\Pi\in \mathbb{N}^{|E|}$ and $\mathbf{u}\in V^{\le w}$, $N_w[\Pi, p, \mathbf{u}]$ obeys the following formula in the directed case:
\begin{align*}
  N_w\left[\Pi, p, \mathbf{u}\right] &=  \sum_{\substack{v \in V }} 
  \begin{cases} 
    N_w\left[\Pi'_{(\mathbf{u},v)}, p-1, (u_1, ..., u_{|u|},v)\right] & \text{if }|\mathbf{u}| < w-1\\
    N_w\left[\Pi'_{(\mathbf{u},v)}, p-1, (u_2, ..., u_{w-1}, v )\right] & \text{if }|\mathbf{u}| = w-1\\
  \end{cases}\\
\text{with }\Pi'_{(\mathbf{u},v)}&:=(\pi'_e)_{e\in E}\\
\pi'_e &:= \pi_{e} - |\{k \in [1,|\mathbf{u}|] \mid e=(u_k,v) \}|
\end{align*}
The base case of this recurrence corresponds to $p=0$, and is defined as
\begin{equation}
  \forall \; \Pi, \; N_w[\Pi, 0, \mathbf{u}] =  \begin{cases}
    1 &\text{if } \Pi=(0)_{(i,j) \in V^2}\\
    0 &\text{otherwise.}
  \end{cases}
\end{equation}
The total number of realizations is then found in $N_w[\Pi, p, \varepsilon]$, \emph{i.e.}  setting $\mathbf{u}$ to the empty prefix $\varepsilon$, allowing the sequence to start from any node.

A similar dynamic programming scheme holds in the undirected case, through a minor modification ($e=\{u_k,v\}$ in the definition of $\pi'_e$).

The overall recurrence for sequence length $p$ can be computed in time  $\mathcal{O}( |V|^w \times \prod_{e\in E} (\pi_{e}+1))$ using memorization. 
This complexity can be refined by noting that: 
\begin{align*}
  \pi_{e}+1&\leq 2^{\pi_{e}} \\
\text{and } \sum_{e\in E} \pi_{e} &\leq w \times p\\
\text{so }\prod_{e\in E} (\pi_{e}+1)&\leq \prod_{e\in E} 2^{\pi_e}\\
&\leq 2^{w\,p}.
\end{align*}
We thus obtain an overall complexity in $\mathcal{O}(n^w 2^{w\,p})$, improving on the trivial $\mathcal O(n^p)$ enumeration algorithm whenever $n>2^w$. Then, despite the apparently high complexity of our algorithm, it is still possible to compute $N_w[\Pi, p, u_{1:w}]$ for ``reasonable'' values of $p$ and $w$. 
Precisely, succinct experiments showed that the table could be computed in less than a minute for values up to $|V|=20$, $p=100$ and $w=3$. See Figure~\ref{fig:resultDP} for an instance and the resulting sequences obtained by our algorithm. 

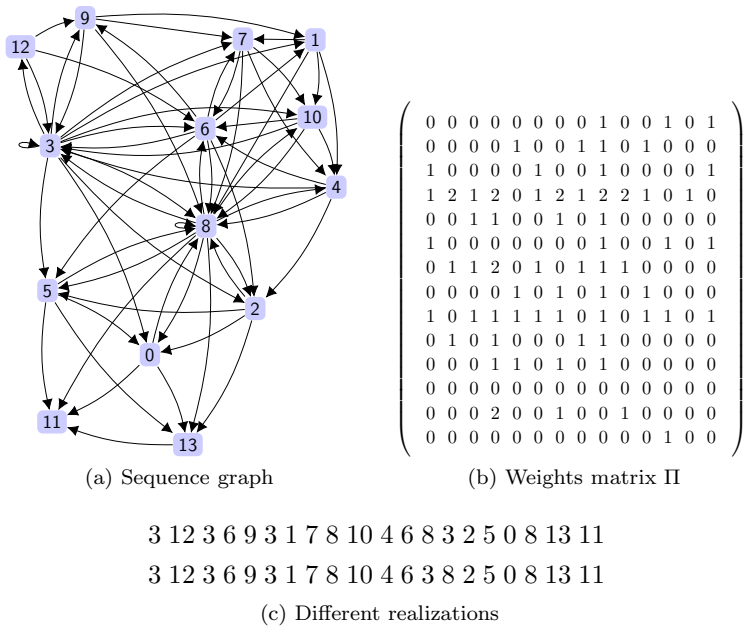
\begin{figure}
	\centering
	\subfloat[Sequence graph]{\scalebox{0.7}{
		\begin{tikzpicture}
        \tikzstyle{al}=[->, bend left=10]
        \tikzstyle{ar}=[->, bend right=10]
        \tikzstyle{ac}=[->]
		\begin{scope}[xshift=4cm]
		\node[main node] at (-4.793831,-4.990252) (0) {0};
		\node[main node] at (-1.650003,0.995279) (1) {1};
		\node[main node] at (-2.792707,-4.111722) (2) {2};
		\node[main node] at (-6.690637,-1.019216) (3) {3};
		\node[main node] at (-1.252901,-1.804257) (4) {4};
		\node[main node] at (-6.739747,-3.768199) (5) {5};
		\node[main node] at (-3.742480,-0.690621) (6) {6};
		\node[main node] at (-3.027863,1.011339) (7) {7};
		\node[main node] at (-3.723457,-2.544753) (8) {8};
		\node[main node] at (-6.023158,1.417405) (9) {9};
		\node[main node] at (-1.707117,-0.474145) (10) {10};
		\node[main node] at (-6.652897,-6.267174) (11) {11};
		\node[main node] at (-7.252901,0.863740) (12) {12};
		\node[main node] at (-4.072085,-6.698336) (13) {13};
		\draw[al] (0) edge node[right] {} (8);
		\draw[al] (0) edge node[right] {} (11);
		\draw[al] (0) edge node[right] {} (13);
		\draw[al] (1) edge node[right] {} (4);
		\draw[ac] (1) edge node[right] {} (7);
		\draw[ac] (1) edge node[right] {} (8);
		\draw[al] (1) edge node[right] {} (10);
		\draw[al] (2) edge node[right] {} (0);
		\draw[al] (2) edge node[right] {} (5);
		\draw[al] (2) edge node[right] {} (8);
		\draw[al] (2) edge node[right] {} (13);
		\draw[al] (3) edge node[right] {} (0);
		\draw[al] (3) edge node[right] {} (1);
		\draw[ar] (3) edge node[right] {} (2);
		\draw[al, loop above] (3.west) to (3.west);
		\draw[ar] (3) edge node[right] {} (5);
		\draw[al] (3) edge node[right] {} (6);
		\draw[al] (3) edge node[right] {} (7);
		\draw[al] (3) edge node[right] {} (8);
		\draw[al] (3) edge node[right] {} (9);
		\draw[al, bend left=15] (3) edge node[right] {} (10);
		\draw[al] (3) edge node[right] {} (12);
		\draw[al] (4) edge node[right] {} (2);
		\draw[al] (4) edge node[right] {} (3);
		\draw[al] (4) edge node[right] {} (6);
		\draw[al] (4) edge node[right] {} (8);
		\draw[al] (5) edge node[right] {} (0);
		\draw[al] (5) edge node[right] {} (8);
		\draw[ar] (5) edge node[right] {} (11);
		\draw[ar] (5) edge node[right] {} (13);
		\draw[ac] (6) edge node[right] {} (1);
		\draw[al] (6) edge node[right] {} (2);
		\draw[al] (6) edge node[right] {} (3);
		\draw[ar] (6) edge node[right] {} (5);
		\draw[al] (6) edge node[right] {} (7);
		\draw[al] (6) edge node[right] {} (8);
		\draw[ar] (6) edge node[right] {} (9);
		\draw[ar] (7) edge node[right] {} (4);
		\draw[al] (7) edge node[right] {} (6);
		\draw[al] (7) edge node[right] {} (8);
		\draw[al] (7) edge node[right] {} (10);
		\draw[al] (8) edge node[right] {} (0);
		\draw[al] (8) edge node[right] {} (2);
		\draw[al] (8) edge node[right] {} (3);
		\draw[al] (8) edge node[right] {} (4);
		\draw[al] (8) edge node[right] {} (5);
		\draw[al] (8) edge node[right] {} (6);
		\draw[al, loop above] (8.west) to (8.west);
		\draw[al] (8) edge node[right] {} (10);
		\draw[ar] (8) edge node[right] {} (11);
		\draw[al] (8) edge node[right] {} (13);
		\draw[al] (9) edge node[right] {} (1);
		\draw[al] (9) edge node[right] {} (3);
		\draw[ac] (9) edge node[right] {} (7);
		\draw[al] (9) edge node[right] {} (8);
		\draw[al, bend left=15] (10) edge node[right] {} (3);
		\draw[al] (10) edge node[right] {} (4);
		\draw[ac] (10) edge node[right] {} (6);
		\draw[al] (10) edge node[right] {} (8);
		\draw[al] (12) edge node[right] {} (3);
		\draw[al] (12) edge node[right] {} (6);
		\draw[al] (12) edge node[right] {} (9);
		\draw[al] (13) edge node[right] {} (11);
		\end{scope}
		\end{tikzpicture}

		}
	}
	\quad
	\subfloat[Weights matrix $\Pi$]{\scalebox{0.7}{

			\begin{tikzpicture}[>=stealth,thick]
			\matrix [matrix of math nodes,left delimiter=(,right delimiter=), ampersand replacement=\&](A){
				0 \& 0 \& 0 \& 0 \& 0 \& 0 \& 0 \& 0 \& 1 \& 0 \& 0 \& 1 \& 0 \& 1\\
				0 \& 0 \& 0 \& 0 \& 1 \& 0 \& 0 \& 1 \& 1 \& 0 \& 1 \& 0 \& 0 \& 0\\
				1 \& 0 \& 0 \& 0 \& 0 \& 1 \& 0 \& 0 \& 1 \& 0 \& 0 \& 0 \& 0 \& 1\\
				1 \& 2 \& 1 \& 2 \& 0 \& 1 \& 2 \& 1 \& 2 \& 2 \& 1 \& 0 \& 1 \& 0\\
				0 \& 0 \& 1 \& 1 \& 0 \& 0 \& 1 \& 0 \& 1 \& 0 \& 0 \& 0 \& 0 \& 0\\
				1 \& 0 \& 0 \& 0 \& 0 \& 0 \& 0 \& 0 \& 1 \& 0 \& 0 \& 1 \& 0 \& 1\\
				0 \& 1 \& 1 \& 2 \& 0 \& 1 \& 0 \& 1 \& 1 \& 1 \& 0 \& 0 \& 0 \& 0\\
				0 \& 0 \& 0 \& 0 \& 1 \& 0 \& 1 \& 0 \& 1 \& 0 \& 1 \& 0 \& 0 \& 0\\
				1 \& 0 \& 1 \& 1 \& 1 \& 1 \& 1 \& 0 \& 1 \& 0 \& 1 \& 1 \& 0 \& 1\\
				0 \& 1 \& 0 \& 1 \& 0 \& 0 \& 0 \& 1 \& 1 \& 0 \& 0 \& 0 \& 0 \& 0\\
				0 \& 0 \& 0 \& 1 \& 1 \& 0 \& 1 \& 0 \& 1 \& 0 \& 0 \& 0 \& 0 \& 0\\
				0 \& 0 \& 0 \& 0 \& 0 \& 0 \& 0 \& 0 \& 0 \& 0 \& 0 \& 0 \& 0 \& 0\\
				0 \& 0 \& 0 \& 2 \& 0 \& 0 \& 1 \& 0 \& 0 \& 1 \& 0 \& 0 \& 0 \& 0\\
				0 \& 0 \& 0 \& 0 \& 0 \& 0 \& 0 \& 0 \& 0 \& 0 \& 0 \& 1 \& 0 \& 0\\
			};
			\end{tikzpicture}
			
		}
	}
	
	\subfloat[Different realizations with $w=5$]{\makebox[5cm][c]{
			$\begin{aligned}
				3 \; 12 \; 3 \; 6 \; 9 \; 3 \; 1 \; 7 \; 8 \; 10 \; 4 \; 6 \; {\color{red}8} \; {\color{red}3 } \; 2 \; 5 \; 0 \; 8 \; 13 \; 11\\
				3 \; 12 \; 3 \; 6 \; 9 \; 3 \; 1 \; 7 \; 8 \; 10 \; 4 \; 6 \; {\color{red}3} \; {\color{red}8} \; 2 \; 5 \; 0 \; 8 \; 13 \; 11\\
			\end{aligned}$
		}
	}
      \caption{Example of realizations in the  \DW variant, as obtained using our dynamic programming algorithm: (a) a 5-sequence graph on $|V|=14$ (vertices are labelled with integers from 0 to 13). (b) the corresponding weight matrix $\Pi$ of size $14\times14$. (c)  two possible realizations of length $p=20$.}
    \label{fig:resultDP}
\end{figure}

\section{Exponential lower bound on the size of realizations}\label{sec:exponential_realizations}
We have established in Section~\ref{sec:nphardnessproofs} the NP-hardness of several versions of realizability, yet have left open the question of their membership to NP. Such a property is usually proven by exhibiting a non-deterministic Turing machine which guesses a solution in polynomial time, and tests each of them in polynomial-time. This strategy requires that the (minimal) size of a solution grows only polynomially with the input length. Unfortunately, we find that the minimal size of a realization can grow exponentially larger that the input size, as formally stated below.

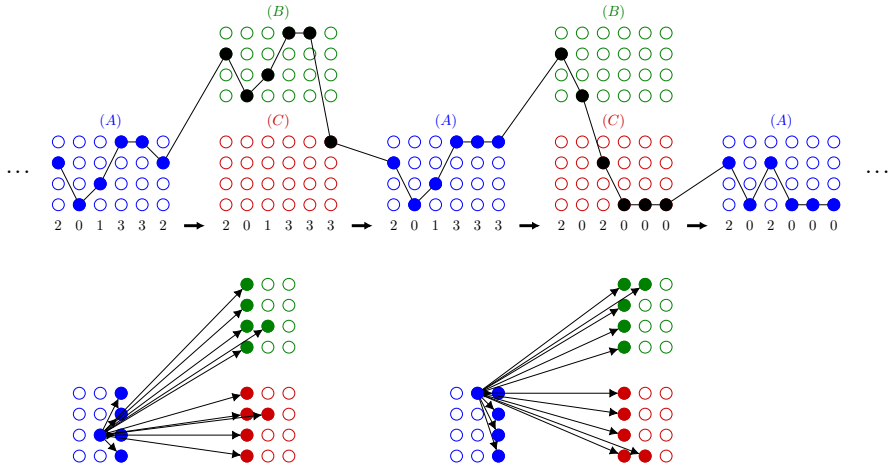
\begin{figure}
    \centering


\begin{tikzpicture}[n/.style={draw, circle, inner sep=1mm}, A/.style={blue}, B/.style={green!50!black}, C/.style={red!80!black}, f/.style={n, fill}, X/.style={black}, x=5mm, y=5mm]

\def\k{4}
\def\n{6}
\def\deltaX{8}
\def\deltaY{5.2}
\def\dstep{16}

   \begin{scope}[shift={(0,-12)}]
\foreach\x/\y [count=\step] in {3/2,5/4} {
            \node[f, A] (ra\step) at (\x+\step*\dstep,\y) {};         
}
\foreach\x in {2,...,4} {
    \foreach\y in {1,...,\k} {
        \foreach \step/\dx in {1/0,2/2} {
            \node[n, A] (ra\step\x\y) at (\x+\step*\dstep+\dx,\y) {};         
            \node[n, B] (rb\step\x\y) at (\x+\deltaX+\step*\dstep+\dx,\y+\deltaY) {}; 
            \node[n, C] (rc\step\x\y) at (\x+\deltaX+\step*\dstep+\dx,\y) {}; 
        }
    }
}
\foreach \bs/\cs [count=\step] in {{21,22,23,24,32}/{21,22,23,24,33}, {21,22,23,24,34}/{21,22,23,24,31}}{
    \foreach \aval in {41,42,43,44}{
        \node[f,A, opacity=.4] at (ra\step\aval) {};
        \draw[->] (ra\step) -- (ra\step\aval);
    }
    \foreach \b in \bs {
    \node[f,B, opacity=.4] at (rb\step\b) {};
    \draw[->] (ra\step) -- (rb\step\b);
    }
    \foreach \c in \cs {
    \node[f,C, opacity=.4] at (rc\step\c) {};
    \draw[->] (ra\step) -- (rc\step\c);
    }
}

   \end{scope} 
\node at (56,2.6) { \large \textbf{ \dots}};
\node at (15,2.6) { \large \textbf{ \dots}};
\foreach\x in {1,...,\n} {
    \foreach\y in {1,...,\k} {
        \foreach \step in {1,2,3} {
            \node[n, A] (a\step\x\y) at (\x+\step*\dstep,\y) {};         
        }
        \foreach \step in {1,2} {
            \node[n, B] (b\step\x\y) at (\x+\deltaX+\step*\dstep,\y+\deltaY) {}; 
            \node[n, C] (c\step\x\y) at (\x+\deltaX+\step*\dstep,\y) {}; 
        }
    }
}

\foreach \step in {1,2,3} {
    \node[A] at (\n/2+1/2+\dstep*\step,\k+1) {($A$)};
}

\foreach \step in {1,2} {
    \node[B] at (\n/2+1/2+\dstep*\step + \deltaX,\k+1+\deltaY) {($B$)};
    \node[C] at (\n/2+1/2+\dstep*\step + \deltaX,\k+1) {($C$)};
}
\foreach \u [count=\x] in {3,1,2,4,4,3} {
    \node[f,A] at (a1\x\u) (p1\x) {};    
    \pgfmathtruncatemacro{\v}{\u-1}
    \node at ($(a1\x1)+(0,-1)$) {\v};
}

\foreach \u [count=\x] in {3,1,2,4,4} {
    \node[f,X] at (b1\x\u) (p2\x) {};       
    \pgfmathtruncatemacro{\v}{\u-1}
    \node at ($(c1\x1)+(0,-1)$) {\v};
}
\foreach \u [count=\x from 6] in {4} {
    \node[f,X] at (c1\x\u) (p2\x) {};       
    \pgfmathtruncatemacro{\v}{\u-1}
    \node at ($(c1\x1)+(0,-1)$) {\v};    
}

\foreach \u [count=\x] in {3,1,2,4,4,4} {
    \node[f,A] at (a2\x\u) (p3\x) {};    
    \pgfmathtruncatemacro{\v}{\u-1}
    \node at ($(a2\x1)+(0,-1)$) {\v};    
}

\foreach \u [count=\x] in {3,1} {
    \node[f,X] at (b2\x\u) (p4\x) {};    
    \pgfmathtruncatemacro{\v}{\u-1}
        
    \node at ($(c2\x1)+(0,-1)$) {\v};
}
\foreach \u [count=\x from 3] in {3,1,1,1} {
    \node[f,X] at (c2\x\u) (p4\x) {};    
        
    \pgfmathtruncatemacro{\v}{\u-1}
    \node at ($(c2\x1)+(0,-1)$) {\v};
}
\foreach \u [count=\x] in {3,1,3,1,1,1} {
    \node[f,A] at (a3\x\u) (p5\x) {};    
        
    \pgfmathtruncatemacro{\v}{\u-1}
    \node at ($(a3\x1)+(0,-1)$) {\v};
}

\foreach \p/\c in {1/A,2/X,3/A,4/X,5/A} {
    \draw [] (p\p1)--(p\p2)--(p\p3)--(p\p4)--(p\p5)--(p\p6);
}
\begin{scope} []
\draw (p16) -- (p21);
\draw (p26) -- (p31);
\draw (p36) -- (p41);
\draw (p46) -- (p51);
\end{scope}
\begin{scope} [line width=2pt]
\foreach \step/\nxstep in {1/2,2/3} {
    \draw[->] ($(a\step61)+(1,-1)$) -- ($(c\step11)+(-1,-1)$) ;
    \draw[->] ($(c\step61)+(1,-1)$) -- ($(a\nxstep11)+(-1,-1)$) ;
}

\end{scope}

\end{tikzpicture}

    \caption{Illustration of the construction in Proposition \ref{prop:exponential_realizations} for a graph with an exponentially long realization, with $n=4$ and $k=6$. Top: a fragment of the path, starting with the substring $a_{1,2}a_{2,0}  a_{3,1}a_{4,3}a_{5,3} a_{6,2}$: in a correct realization,  such a fragment (with value $(2,0,1,3,3,2)$) must be followed by the highlighted vertices with successive values $(2,0,1,3,3,3)$ and $(2,0,2,0,0,0)$. Vertices are drawn multiple times, to avoid overlappings in the drawing, but there are indeed only $n\times k$ vertices in each of $A$, $B$ an $C$. This counting behavior must be repeated from $(0,0,0,0,0,0)$ to $(3,3,3,3,3,3)$, yielding a path of length at least $4^6$. Bottom: example of arcs outgoing from two $A$ vertices that enforce this behavior. Each vertex in $A$ is connected to a single vertex in the corresponding column in each of $B$ and $C$, where $B$ is used to keep the same value and $C$ is used to increment a column.}
    \label{fig:expo}
\end{figure}

\begin{proposition}\label{prop:exponential_realizations} 
    For any positive integers $n$ and $k$, there exists a graph of size $3kn+1$ such that any \DU-realization with a window of size $k+1$ has length at least $2kn^k$.
\end{proposition}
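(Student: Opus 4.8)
The plan is to build, for each pair $(n,k)$, a digraph $G$ on $3kn+1$ vertices that simulates a $k$-digit base-$n$ counter, in such a way that every \DU-realization with window $k+1$ is forced to run the counter through all $n^k$ of its values, and each value costs $2k$ letters of the realization. Since the statement is vacuously true for a non-realizable graph, the proof must also exhibit one realization, namely the counting run itself; this is the point relevant to the NP-membership discussion.

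\textbf{Construction.} Use three disjoint $k\times n$ grids $A=\{a_{i,d}\}$, $B=\{b_{i,d}\}$, $C=\{c_{i,d}\}$ (with $i\in[k]$ a digit position and $d\in\{0,\dots,n-1\}$ a digit value) together with one extra vertex $r$, for $3kn+1$ vertices in all. Call a \emph{block} of a grid any length-$k$ factor $x_{1,v_1}x_{2,v_2}\cdots x_{k,v_k}$ using vertices of that grid with increasing columns; its \emph{value} is $v=(v_1,\dots,v_k)$, read as the integer $\sum_{i=1}^{k} v_i n^{k-i}\in\{0,\dots,n^k-1\}$. Put in the obvious arcs that let a block carry any value, and then only the cross-grid arcs needed so that, on a window of size $k+1$, the legal boundaries are precisely ``an $A$-block of value $v$ is followed by a $B$-block of value $v$ or by a $C$-block of value $\mathrm{succ}(v)$'' and ``a $B$- or $C$-block of value $v$ is followed by an $A$-block of value $v$'', where $\mathrm{succ}$ is the base-$n$ successor with carries. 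Finally wire $r$ and the vertices carrying the maximal value $(n-1)^k$ so that $r$ must be the first letter and must be followed by $A(0^k)$, and so that the count can neither wrap around nor stop before $(n-1)^k$. The target realization is the counting run alternating $A$-blocks with $B$- or $C$-blocks from $0^k$ up to $(n-1)^k$.

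\textbf{Realizability and lower bound.} For realizability, write down the counting run explicitly: $r$, then $A(0^k)$, then for each value $v$ from $0^k$ to $(n-1)^k$ an $A$-block $A(v)$ paired with a block in $B\cup C$ (a $C$-block recording the increment that produced $v$, plus, for as many values as needed, a $B$-block to cover grid $B$). A window-by-window check—each window lies inside two consecutive blocks—shows it uses only arcs of $G$ and visits every vertex, and its length is $2kn^k$ up to an additive $O(k)$; in particular $G$ is \DU-realizable. For the lower bound, take any realization $S$. \emph{Rigidity:} since $G$ has no arc inside one column of a grid, no cross-grid arc besides those added, and $r$ has the prescribed tiny neighbourhood, $S$ must begin with $r\,A(0^k)$ and then decompose uniquely into consecutive blocks alternating between $A$ and $B\cup C$, with no factor straddling two grids—this is exactly the step where window size $k+1$ matters, since a window meets at most two consecutive blocks and the dense intra-block arcs together with the sparse inter-block arcs leave a unique admissible continuation at each boundary. \emph{Counter semantics:} consequently the sequence of block values is nondecreasing, starts at $0^k$, increases by $\mathrm{succ}$ exactly on the $C$-block steps, and is constant on the $B$-block steps. \emph{Coverage:} every vertex of $B$ and of $C$ must occur in $S$, and can occur only inside a $B$- or $C$-block, so the value sequence must climb all the way to $(n-1)^k$; hence it takes all $n^k$ values, so $S$ contains at least $n^k$ disjoint $A$-blocks and at least $n^k$ disjoint $B\cup C$-blocks (the $n^k-1$ increments, plus at least one copy since $B\neq\emptyset$). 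As each block has $k$ letters, $|S|\ge 2kn^k$.

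\textbf{Main obstacle.} The whole difficulty lies in tuning the arc set so that two opposite demands hold at once: it must be permissive enough for the counting run to be legal, and restrictive enough that \emph{every} realization is driven into exactly the counter behaviour—no mixed blocks, no cheaper cyclic reuse of a grid, no skipping values inside an increment, and, most delicately, no way to cover $B$ and $C$ without carrying the count all the way to $(n-1)^k$. Pinning this down amounts to: (i) a finite case analysis of every length-$(k+1)$ window that crosses a block boundary, checking that only ``copy into $B$'' and ``successor into $C$'' are consistent with $G$; (ii) verifying that $r$ can neither be revisited nor substituted by another boundary marker, and that the gadget at value $(n-1)^k$ really forbids both wrap-around and premature stopping; and (iii) turning the per-column coverage requirements on $B$ and on $C$ into a proof that the full sweep is unavoidable. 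Once this rigidity is established, the counter-semantics and coverage steps above are routine bookkeeping.
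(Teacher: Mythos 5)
Your construction is the same as the paper's (three $k\times n$ grids $A$, $B$, $C$ plus one start vertex, a $k$-digit base-$n$ counter, rigidity forced by the window, and a coverage argument forcing the full sweep), but as written it has a genuine gap that you yourself flag under ``Main obstacle'': the arc set is never defined, and the rigidity --- the heart of the argument --- is deferred to an unperformed case analysis. Describing the arcs as ``only the cross-grid arcs needed so that the legal boundaries are precisely \dots'' is circular, since exhibiting such an arc set and proving it has that property \emph{is} the proof. The paper supplies the missing mechanism: every vertex gets a \emph{rank} in $\mathbb{Z}/2k\mathbb{Z}$ ($a_{i,j}$ has rank $i$; $b_{i,j}$ and $c_{i,j}$ have rank $k+i$; $s$ has rank $0$), and every arc goes from a rank $r$ to a rank in $\{r+1,\dots,r+k\}$. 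Since no two vertices of equal rank fit in one window, and the vertex at position $p+1$ must still reach $k-1$ distinct ranks inside the window starting at $p$, ranks are forced to increase by exactly one along the whole sequence. That single observation yields the block decomposition, the ``copy into $B$ / successor into $C$'' semantics (by inspecting the unique admissible out-neighbour at distance exactly $k$), and the impossibility of mixed blocks or of revisiting the start vertex --- i.e., essentially all three items you list as remaining obstacles. Without an explicit arc set and such a mechanism, the proposal is a plan rather than a proof.

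A second, smaller gap: your realizability argument checks the wrong condition. Showing that the counting run ``uses only arcs of $G$ and visits every vertex'' does not make it a \DU-realization; the sequence graph of the run must \emph{equal} $G$, so every arc of $G$ must be realized in some window, and the run as described need not realize all of them. The paper handles this explicitly by replacing $G$ with the sequence graph of the explicit exponential-length counting run (discarding the unrealized arcs), after which the lower-bound argument --- which relies only on the \emph{absence} of certain arcs and on which arcs must be realized --- still goes through. Your proposal needs the same repair. The coverage step is also stated more loosely than in the paper, which pins it down via a specific compulsory arc ($(a_{1,n-1},c_{1,0})$, realizable only when the counter sits at its maximal value), but that difference is cosmetic once rigidity is in place.
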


\begin{proof} See Figure~\ref{fig:expo} for an example.
Our construction uses three sets of vertices $A$, $B$ and $C$ of size  $k\times n$ each (vertices are labeled respectively $a_{i,j}$, $b_{i,j}$ and $c_{i,j}$ with $ i\in[k]$ and $0\leq j<n$), plus an additional \emph{start} vertex $s$. 
A vertex   $a_{i,j} \in A$ has \emph{rank} $i$ and \emph{value} $j$.
A vertex  $b_{i,j} \in B$  or $c_{i,j}\in C$  has \emph{rank} $i+k$ and \emph{value} $j$. 
Vertex $s$ has rank and value 0. Ranks are counted in $\mathbb Z/2k\mathbb Z$. 

We consider $k$-tuples $T=(j_1,\ldots,j_k)$ with values in $[0,n-1]$. They are ordered according to the lexicographic order from $(0,\ldots,0)$, $(0,\ldots,1)$ to $(n-1,\ldots n-1)$. In particular, the successor of $T$ is the $k$-tuple $T'=(j_1, \ldots, j_{x-1}, j_x+1,0, \ldots, 0)$ where $x$ is the largest index such that $j_x<n-1$. 

We build a DAG on vertex set $A\cup B\cup C \cup \{s\}$ with the following arcs. 
Vertex $s$ has outgoing arcs to each of $a_{i,0}$ for all $i$.
Each vertex $a_{i,j}$ with $1\leq i \leq k$ and $0\leq j <n$ has an outgoing arc 
to each $a_{i',j'}$ with $i'>i$,
to each $b_{i',j'}$ with $i'<i$,
to $b_{i,j}$ and to $c_{i,j+1\mod n}$.
Each vertex  $b_{i,j}$ with $1\leq i \leq k$ and $0\leq j <n$ has an outgoing arc 
 to each $b_{i',j'}$ with $i'>i$, 
 to each $a_{i',j'}$ with $i'<i$ 
 and to $a_{i,j}$.
Finally, each $c_{i,j}$ with $1\leq i \leq k$ and $0\leq j <n$ has an outgoing arc 
 to $c_{i',0}$ for $i'>i$,
 to each $a_{i',j'}$ with $i'<i$ 
 and to $a_{i,j}$.

Let $S$ be a realization of $G$ with window size $k+1$. Clearly $S$ necessarily starts with $s$ (the only vertex with in-degree 0). Let $1\leq p\leq |S|-k$. Consider the substring $S'=S[p \ldots p+k]$. 
Note that by construction a vertex of rank $r$ only has outgoing arcs to vertices with rank $r+i$ with $0<i\leq k$. In particular, two vertices of the same rank cannot be in $S'$. Thus, let $r$ be the rank of $S[p]$, then all other vertices  of $S'$ have rank in $[r+1, r+k]$. In particular, the second vertex $S[p+1]$ in $S'$ has out-going arcs to $k-1$ vertices with distinct ranks among $[r+1, r+k]$, which is only possible for vertices of rank $r-1$, $r$, or $r+1$. Thus $S[p+1]$ has necessarily rank $r+1$. Hence, since $S[1]=s$ has rank $0$, then $S[i]$ has rank $i-1$ for $1\leq i \leq |S|-k$. In particular, $S[1,\ldots, k+1]=s a_{0,0}\ldots a_{k,0}$.

Let $a_{i,j}\in A$ and $p$ such that $S[p]=a_{i,j}$. Then $S[p+k]$ is one of $b_{i,j}, c_{i,j+1}$. For $S[p]=b_{i,j}\in B$ or $S[p]=c_{i,j}\in C$ then $S[p+k]= a_{i,j}$. Thus, in most cases, the value of $S[p]$ and $S[p+k]$ are equal, except in the case where $S[p]=a_{i,j}$ and $S[p+k] = c_{i,j+1}$. 
Then by the outgoing arcs of $c_{i,j+1}$, necessarily $S[p+k+i'-i]=c_{i', 0}$ for all $i<i'\leq k$.
Let $p$ be a position such that $S[p]$ has rank 1, let $T=(j_1,\ldots, j_k)$ be the tuple of values of $S[p]\ldots S[p+k-1]$, let $T'$ be the tuple of values of $S[p+k]\ldots S[p+2k-1]$, and $T''$ be the tuple of values of $S[p+2k]\ldots S[p+2k-1]$. Then if $S[p+k]\ldots S[p+2k-1]$ does not contain any vertex in $C$, then $T=T'=T''$. Otherwise,  $T'=(j_1, \ldots, j_{x-1}, (j_x+1 \mod n),0, \ldots, 0)$ with $x$ the smallest index such that $S[p+k+x]\in C$. In particular,  $S[p+k+i']=c_{i',0}$ and $S[p+i']=a_{i',n-1}$ for each $i'$ with $x<i'<k$. 
If $j_x=n-1$, then $T'$ is before $T$ in lexicographic order (same prefix, and all values in an entire suffix is reset to 0). Otherwise $j_x<n-1$ so $x$ is the largest index such that $j_x<n-1$ and $T'$ is the successor of $T$

To conclude, $S$ contains $a_{0,0}\ldots a_{k,0}$,  i.e. a substring with tuple of values $(0,\ldots,0)$. It also uses the arc $(a_{1,n-1}, c_{1,0})$, which implies that $S$ also contains $a_{1,n-1}\ldots a_{k,n-1}c_{1,0} \ldots c_{k,0}$, hence $S$ contains a substring with value tuple $(n-1\ldots n-1)$. For any two successive value tuples $T, T'$ either $T'$ is lower (or equal) to $T$ in lexicographical order, either $T'$ is the successor of $T$, so overall $S$ must use every possible tuple at least once (for substrings with ranks $1$ to $k$). Thus $S$ has length at least $(2k)n^k$. 

Note that the above proof does not guarantee the actual \emph{existence} of such a realization. However, the construction can be adapted to this end, by providing an exponential-length sequence using only arcs from the DAG (starting with $s a_{1,0}\ldots a_{k,0}$ and ending with $a_{1,n-1}\ldots a_{k,n-1}$), and filtering out those edges that are not realized. Thus, any sequence realizing the resulting graph still requires an exponential length, and the graph is realizable by construction.
\end{proof}
\section{Discussion }
In this study, we have formalized a new series of inverse problems to assert the (un)ambiguity of popular word embeddings. We have provided a comprehensive characterization of their complexity, leaving only open their general membership in NP whenever $w\geq 3$. 
{Indeed, given a sequence, computing its sequence graph representation can be done in $O(d^2+p)$, if $p$ is the length of the sequence and $d$ the size of the vocabulary}. However, this does not prove that \Realizable{} nor \NumRealizations{} are in NP, because the said realization could be exponentially large with respect to the number of vertices or the window size. Although we cannot settle this question in general, we proved that this situation occurs in the directed case (\DU and \DW), for which some graphs have minimal realizations whose length scales exponentially with the window size. This is formally stated in Proposition \ref{prop:exponential_realizations} for \DU-\Realizable{}.

Given the success of Large Language Models (LLMs), it would be of interest to consider  similar inverse problems for LLM embeddings. In particular, this may lead to a better understand the ability of the LLMs to encode syntactic and semantic information. We think that the ones studied in this article would probably need to be adapted in order to gain some meaningful insights. Due to their over-parametrization, we suspect that the map between sequences and embeddings in that case is injective (for sequences up to a few thousand of symbols).  In particular, \NumRealizations{} would become equivalent to \Realizable{}. 
 More broadly, a challenging question of interest suggested by this study are the potential  connections  between the computational complexity of  (well-chosen) inverse problems of embeddings (such as \NumRealizations{}, \Realizable{}, or variants thereof) and the capacity of the considered embedding to faithfully capture syntactic and semantic properties of natural language. 

 \bigskip


\noindent \textbf{Acknowledgments:} The authors wish to express their gratitude to  Guillaume Fertin and anonymous reviewers of an earlier version of this manuscript, for their valuable suggestions and constructive criticisms. 
\medskip

\noindent \textbf{Author Contributions:} All authors contributed equally to this work.

\medskip

\noindent \textbf{Funding:} No funding to declare.

\end{document}